\DeclareMathOperator{\unif}{Unif}
    \DeclareMathOperator{\tr}{Tr}
    \DeclareMathOperator{\cov}{Cov}
    \DeclareMathOperator{\bvn}{BvN}
    \DeclareMathOperator{\esjd}{ESJD}
    \DeclareMathOperator{\ejc}{EJC}
    \DeclareMathOperator{\argmin}{argmin}
    \DeclareMathOperator{\argmax}{argmax}
    \DeclareMathOperator{\norm}{Nor}
    \DeclareMathOperator{\Sbar}{\bar{\mathcal{S}}}
    \DeclareMathOperator{\diag}{Diagonal}
    \DeclareMathOperator{\bigO}{\mathcal{O}}
    \DeclareMathOperator{\ecc}{\varepsilon}
    \newcommand{\wass}[1]{\mathcal{W}_{#1}}
    \DeclareMathOperator{\tvd}{\mathcal{TV}}
    \DeclareMathOperator{\ceil}{Ceiling}
    \DeclareMathOperator{\iid}{\overset{\smash{\textup{iid}}}{\sim}}
    \newtheorem{example}{Example}
    \newtheorem{theorem}{Theorem}
    \newtheorem{proposition}{Proposition}
    \newtheorem{remark}{Remark}
    \newtheorem{lemma}{Lemma}
    \newtheorem{assumption}{Assumption}
    \definecolor{wongblue}{HTML}{0072B2}
    \definecolor{wongorange}{HTML}{D55E00}
\title{Scalable couplings for the random walk Metropolis algorithm}
\author[1]{Tam\'as P. Papp\thanks{\href{mailto:t.papp@lancaster.ac.uk}{\texttt{t.papp@lancaster.ac.uk}}}}
\author[2]{Chris Sherlock}
\affil[1]{\small{STOR-i Centre for Doctoral Training, Lancaster University, Lancaster, UK.}}
\affil[2]{\small{School of Mathematics, Lancaster University, Lancaster, UK.}}
\date{}
\begin{document}
\maketitle

\begin{bibunit}

\begin{abstract}
	There has been a recent surge of interest in coupling methods for Markov chain Monte Carlo algorithms: they facilitate convergence quantification and unbiased estimation, while exploiting embarrassingly parallel computing capabilities. Motivated by these, we consider the design and analysis of couplings of the random walk Metropolis algorithm which scale well with the dimension of the target measure. Methodologically, we introduce a low-rank modification of the synchronous coupling that is provably optimally contractive in standard high-dimensional asymptotic regimes. We expose a shortcoming of the reflection coupling, the state of the art at the time of writing, and we propose a modification which mitigates the issue. Our analysis bridges the gap to the optimal scaling literature and builds a framework of asymptotic optimality which may be of independent interest. We illustrate the applicability of our proposed couplings, and the potential for extending our ideas, with various numerical experiments.
\end{abstract}

\emph{Keywords:} Markov chain Monte Carlo methods; Couplings; Optimal scaling; ODE limit

\section{Introduction} \label{sec:intro}

Couplings of Markov chain Monte Carlo (MCMC) algorithms have attracted much interest recently due to their ability to nullify the bias of MCMC estimates \citep{jacob2020unbiased}, conservatively estimate the rate of convergence \citep{biswas2019estimating} of MCMC algorithms and the asymptotic bias of approximate inference procedures \citep{biswas2023bounding}, as well as unbiasedly estimate the asymptotic variance of MCMC algorithms \citep{douc2023solving}. One appeal of these methods is that they are able to exploit parallelism without requiring communication between processors.

In the context of unbiased MCMC and the related convergence quantification methodology, a Markovian coupling of two chains should be designed such that the chains meet after a finite number of iterations. The meeting time acts as a lower bound for the length of the simulation, and as such the efficiency of a coupling can be assessed through the distribution of the meeting time. As remarked in \cite{jacob2020unbiased}, it is paramount to design couplings which have favourable high-dimensional properties, in the sense that the meeting times reflect the true rate of convergence and mixing of the underlying marginal Markov chains. At the same time, the rich design space means that it is challenging to devise efficient couplings, and this design is regarded to be an art form in general \citep{wang2023metropolishastings}.

We focus here on the design and analysis of scalable couplings for the random walk Metropolis (RWM) algorithm with Gaussian proposals. Methodologically (see Section~\ref{sec:prelim-coupling}), we argue for the importance of couplings which are contractive and we design couplings which attempt to optimize for contraction in squared Euclidean distance. Our \emph{Gradient Common Random Numbers} (GCRN) coupling is provably optimally contractive in certain high-dimensional asymptotic regimes, is insensitive to the eccentricity of the target, and is consistently able to contract the chains to within a distance where coalescence in one step is achievable. Once the chains are close, GCRN or other contractive couplings can be swapped for ones which allow for exact meetings; we exemplify and advocate for such two-scale strategies in the sequel.

Our proposed couplings are designed to overcome the shortcomings of those currently available. The reflection-maximal coupling \citep[Section 4.1]{jacob2020unbiased}, arguably the most promising candidate to date, has been seen to scale well with the dimension when the target is spherically symmetric \citep{jacob2020unbiased,oleary2021couplings}. However, for high-dimensional targets which do not possess spherical symmetry this coupling has been seen to perform poorly \citep{papp2022paper}; it does not contract the chains sufficiently unless the step size of the coupled RWM algorithms is chosen to be significantly smaller than is optimal for mixing. The present work validates the favourable behaviour of the reflection coupling in the spherical case, offers an explanation for the issue when spherical symmetry is lacking, and proposes an alternative reflective coupling (Section~\ref{sec:correctedrefl}) which alleviates the problem.

Our analysis bridges the gap between the coupled sampling and the optimal dimensional scaling literatures in MCMC (see the review \citealp{roberts2001optimal} for optimal scaling in the stationarity phase, and \citealp{christensen2005scaling} for the transient phase). The ODE limit in \citet[Theorem 1]{christensen2005scaling} for high-dimensional spherical Gaussian targets, originally developed to explain the transient phase of the RWM algorithm, underpins our approach. We extend this scaling limit to coupled pairs of RWM chains in the spherical Gaussian case (Section~\ref{sec:stdgauss}) and further extend the salient points of this analysis to the elliptical Gaussian case (Section~\ref{sec:elliptgauss}). Related to these scaling limits, we introduce a notion of asymptotic optimality; this extends beyond the Gaussian case (Section~\ref{sec:prod-target}) and may guide the design of effective couplings for other MCMC algorithms.

We conclude with a series of experiments illustrating the practical appeal of our proposed couplings (Section~\ref{sec:applications}) and a discussion of our findings and directions for further work (Section~\ref{sec:discussion}).

\section{Background} \label{sec:background}

This work is motivated by lagged coupling methodology \citep{jacob2020unbiased,biswas2019estimating}, which can be used to estimate the rate of convergence of MCMC algorithms and to obtain unbiased MCMC estimators, and which we briefly recall here. Our set-up differs slightly from the literature in that we start the index of the first chain at $-L$ rather than $0$, however this will add clarity to the sequel.

Consider a $\pi$-invariant Markov kernel $K$, and a joint Markov kernel $\bar{K}((x,y), (\cdot,\cdot))$ with marginals $(K(x, \cdot),K(y, \cdot))$. Throughout this paper, we think of $K$ as a Metropolis-Hastings kernel and the joint kernel $\bar{K}$ as specifying some coupling of the proposals and of the acceptance steps. We construct two Markov chains $(X_t)_{t \ge -L}$ and $(Y_t)_{t \ge 0}$. Each chain evolves marginally according to $K$, and after head-starting the $X$-chain by $L \ge 1$ iterations they evolve jointly according to $\bar{K}$:
\begin{enumerate}
    \item Sample $X_{0}\sim \pi_0 K^L$ and $Y_0\sim \pi_0$.
    \item Sample $(X_{t+1}, Y_{t+1}) \mid (X_t, Y_t) \sim \bar{K}((X_t, Y_t), \cdot) $ for $t \ge 0$.
\end{enumerate}
Furthermore, we design the joint kernel $\bar{K}$ such that there is an almost surely finite meeting time $\tau = \inf\{t \ge 0: X_t = Y_t\}$ and such that $X_t = Y_t$ for all $t \ge \tau$.

This construction can be used to estimate the rate of convergence of Markov chains \citep{biswas2019estimating}. Suppose that we wish to quantify the rate of convergence in a $p$-Wasserstein distance \citep{villani2009optimal} of order $p \ge 1$, defined as $\wass{p}(\mu, \nu) = \inf_{ (X,Y) \in \Gamma(\mu,\nu)}\mathbb{E}[c(X,Y)^p]^{1/p},$ where $c$ is some real-valued ground metric and where $\Gamma(\mu,\nu)$ is the set of all couplings of the distributions $(\mu,\nu)$. Let $\pi_t = \pi_{0}K^t$ be the time-$t$ marginal distribution of the $Y$-chain, so that $Y_t \sim \pi_t$ for all $t \ge 0$. Using firstly the triangle inequality, then the definition of the Wasserstein distance, we have that
\begin{equation} \label{eqn:wass-general-ub}
    \wass{p}(\pi, \pi_t) = \wass{p}(\pi_\infty, \pi_t) \le \sum_{j \ge 0} \wass{p}(\pi_{t + (j + 1) L}, \pi_{t + jL}) \le \sum_{j \ge 0} \mathbb{E}^{1/p}\left[ c(X_{t+jL},Y_{t+jL})^p\right].
\end{equation}
By repeatedly simulating the pair $(X,Y)$ and by replacing expectations with empirical averages, we obtain a consistent estimator of this upper bound. Conveniently, pairs of coupled chains can be simulated in parallel; the meeting times $\tau$ ensure that the estimator is computed in finite time. The lagged coupling framework also allows for the unbiased estimation of expectations of test functions of interest (see \citealp{jacob2020unbiased} and Appendix~\ref{app:unbiased-mcmc}) which enables principled parallel MCMC through averaging across multiple (pairs of) chains simulated in parallel. Such unbiased estimators also facilitate modular inference with cut posterior distributions \citep{jacob2017better} and can be used as part of a wider multilevel Monte Carlo framework in order to unbiasedly estimate functions of expectations \citep{wang2023uniased-multilevel}.

The effectiveness of the lagged coupling framework relies on designing the joint kernel $\bar{K}$ such that the meeting times $\tau$ are small, since large meeting times lengthen the duration of the simulation, loosen the upper bounds on the rate of convergence, and inflate the variance of unbiased estimators (see \citealp{jacob2020unbiased} for this final point). In this paper, we focus on designing kernels $\bar{K}$ whose meeting times scale well with the dimension of the target $\pi$. Since the choice of lag parameter $L$ should be guided by the choice of coupling kernel $\bar{K}$, for the remainder of the paper we focus on the joint Markov chain $(X_t,Y_t)_{t \ge 0}$, where the starting conditions $(X_0,Y_0)$ can be arbitrary.

We illustrate the use of couplings to quantify the bias of approximate sampling algorithms \citep{biswas2023bounding} and to unbiasedly estimate the asymptotic variance of MCMC algorithms \citep{douc2023solving} in the experiments of Section~\ref{sec:applications}.

\section{Couplings of the RWM algorithm} \label{sec:prelim-coupling}

We first restrict our attention to the random walk Metropolis (RWM) kernel $K$ with spherical Gaussian proposals. All coupling kernels $\bar K$ are induced by joint updates of the form
\begin{equation} \label{eqn:rwm-updates}
    X_{t+1} = X_t + h Z_x B_x,\quad
    Y_{t+1} = Y_t + h Z_y B_y,
\end{equation}
where $B_x = \mathbbm{1}\{\log U_x \le  \log\pi(X_t + hZ_x) - \log \pi(X_t)\}$ is the Bernoulli acceptance indicator, $\pi$ is a $d$-dimensional target, and $Z_x\sim\mathcal{N}_d(0_d, I_d)$ and $U_x\sim\unif(0,1)$ are independent. Analogous notation applies to the $Y$-chain. Throughout this paper, we scale the step size as $h = \ell d^{-1/2}$, which ensures that acceptance rates remain stable as the dimension grows \citep{roberts1997weak, christensen2005scaling}.

While the simplicity of spherical proposals is convenient for analysis, in practice better mixing may be obtained with other covariance structures. It is straightforward to extend the couplings considered in this paper to non-spherical Gaussian proposals, see Appendix~\ref{app:precondition}.

\subsection{The importance of contractivity in high dimensions}

To try to make the chains meet quickly, one might be tempted to use a coupling $\bar K$ which maximizes the chance of coalescing the chains at each iteration (see \cite{oleary2021maximal} for examples). However, as the dimension grows, such couplings can perform increasingly poorly for the RWM algorithm.

The issue is that the RWM proposals become increasingly local as the dimension increases, yet the same time the distance between the coupled chains grows. To be able to coalesce the chains, a Markovian coupling of RWM chains must first propose the same state in both chains. The probability of coalescing the chains in one iteration is therefore upper bounded by the volume of overlap of the proposal densities (which is analytically tractable in terms of the standard Gaussian cumulative density function $\Phi(\cdot)$, e.g. \citealp{heng2019unbiased}):
\begin{equation} \label{eqn:prob-meeting}
    \mathbb{P}(X_{t+1} = Y_{t+1} \mid X_t, Y_t) \le 2\Phi\left( -\frac{1}{2h}\lVert X_t-Y_t\rVert \right) \le 2\exp\left(- \frac{d}{8\ell^2}\lVert X_t-Y_t\rVert^2\right),
\end{equation}
where we finally used the Chernoff bound and that $h = \ell d^{1/2}$. Two independent chains will typically start $\lVert X_0-Y_0\rVert^2 = \bigO(d)$ apart, yet the chance of coalescing in one iteration is infinitesimally small unless $\lVert X_t-Y_t\rVert^2= \bigO(d^{-1})$. If the coupling cannot contract the chains to within, say, $\bigO(1)$ squared distance, then one can expect meeting times to grow exponentially with the dimension $d$. Clearly, this is in stark contrast with the $\bigO(d)$ time required for RWM chains to converge \citep{christensen2005scaling,andrieu2024explicit}.

In high dimensions, the coupling should therefore primarily focus on contracting the chains, as opposed to maximizing the probability of coalescing at each iteration. For the coupling to be scalable, we therefore need to design a contractive kernel $\bar K$.

\subsection{Optimizing for contraction}

Setting up the design of $\bar K$ as an optimization problem, a natural objective is the expected contraction in squared Euclidean distance. It is straightforward, following the expansion~\eqref{eqn:innerprod-onestep} below, to show that
\begin{equation*}
    \argmin_{\bar K \in \mathcal{C}}\mathbb{E}\big[\lVert X_{t+1} - Y_{t+1} \rVert^2 \mid X_t, Y_t\big] = \argmax_{\bar K \in \mathcal{C}} \mathbb{E}
    \big[h^2 Z_x^{\top} Z_y B_x B_y \mid X_t, Y_t\big],
\end{equation*}
where $\mathcal{C}$ is any subset of $\mathcal{M}$, the class of all Markovian couplings~\eqref{eqn:rwm-updates}. We call the quantity $\mathbb{E}\big[h^2 Z_x^{\top} Z_y B_x B_y \mid X_t, Y_t\big]$ the \emph{expected jump concordance} ($\ejc$). To optimize the expected contraction of the chains, we therefore need to maximize the $\ejc$.

The form of the $\ejc$ suggests the following coupling strategy: correlate the acceptances $B_x, B_y$ maximally (i.e. try to accept simultaneously in both chains) and correlate the proposal noises $Z_x, Z_y$ maximally. The key observation is that, \emph{as the dimension increases, these two objectives become less and less constrained by each other, so it becomes possible to satisfy both of them simultaneously}. Focusing on the acceptance steps, a Taylor expansion of the log acceptance ratio yields
\begin{equation} \label{eqn:acc-step-Taylor}
    B_x = \mathbbm{1}{\left\{\log U_x \le h Z_x^\top \nabla\log \pi(X_t) + (h^2/2) Z_x^\top \nabla^2 \log \pi(\bar X_t) Z_x \right\}},
\end{equation}
where $\nabla^2$ denotes the Hessian, and $\bar X_t$ is on the line segment from $X_t$ to $X_t + hZ_x$. As a function of $Z_x$, most of the variation in $B_x$ therefore stems from the projection of $Z_x$ onto the logarithmic gradient $\nabla\log \pi(X_t)$. To try to make the chains accept simultaneously as often as possible, it is therefore sensible to maximally correlate $\{Z_x^\top \nabla\log \pi(X_t), Z_y^\top \nabla\log \pi(Y_t)\}$. As each projection only constrains a single coordinate, in high enough dimensions $\{Z_x,Z_y\}$ can still be correlated nearly maximally. This is the motivation behind the gradient-based GCRN coupling, which we introduce in Section~\ref{sec:coupling-list}.

The $\ejc$ is well-defined in standard high-dimensional asymptotic regimes; we will call a coupling \emph{asymptotically optimal} if it maximizes the $\ejc$ in the limit. In such regimes, the variation from the Hessian term of~\eqref{eqn:acc-step-Taylor} vanishes \citep[e.g.][]{sherlock2013optimal}, so we expect couplings like GCRN which account for all of the variation from the gradient term to be close to optimal. The $\ejc$ can in fact be viewed as the coupling-based analogue of the expected squared jumping distance (ESJD; \citealp{sherlock2009optimal}), a widely-used measure of mixing efficiency. The ESJD has the limiting formula $\esjd(\ell) = 2 \ell^2 \Phi(-\ell/2)$ for a standard Gaussian target, which returns the optimal scaling of $\ell=2.38$ and optimal acceptance rate of $23.4 \%$. The $\ejc$ reduces to the $\esjd$ when the the chains are stationary and are identical; it is therefore unsurprising that the same scaling $\ell=2.38$ turns out (see Section~\ref{sec:dim-scaling-stdgauss}) to be close to optimal for our GCRN coupling.

\subsection{The couplings under consideration} \label{sec:coupling-list}

This work focuses on the natural class of \emph{product couplings} $\mathcal{P}$, which contains all couplings of the updates~\eqref{eqn:rwm-updates} such that $(U_x,U_y)$ are independent of $(Z_x,Z_y)$. We introduce three couplings from $\mathcal{P}$, all of which synchronize the acceptance uniforms to $U_y = U_x$, also called a common random numbers (CRN) strategy. The difference is in the coupling of the proposal increments $(Z_x,Z_y)$:
\begin{enumerate}
    \item \textbf{CRN:} $Z_y = Z_x$;
    \item \textbf{Reflection:} $Z_y = Z_x - 2 (e^{\top} Z_x) e$; 
    \item \textbf{GCRN:} $Z_x = Z - (n_x^{\top}Z)n_x + Z_\nabla n_x$ and $Z_y = Z - (n_y^{\top}Z)n_y + Z_\nabla n_y$; 
\end{enumerate}
where: $e = \norm(X_t-Y_t)$, $n_x = \norm(\nabla\log\pi(X_t))$ and $n_y = \norm(\nabla\log\pi(Y_t))$; $\norm(x) = x / \lVert x \rVert$ denotes the normalization operation; $Z_\nabla\sim \mathcal{N}(0,1)$ and $Z\sim \mathcal{N}_d(0_d,I_d)$ are independent. The CRN and reflection couplings serve as baselines and are already established (see e.g. \citealp{oleary2021couplings}). By convention, we default to the CRN coupling when a vector to be normalized is null.

The new GCRN (\emph{Gradient Common Random Numbers}) coupling attempts to synchronize acceptance events by ensuring that $n_x^\top Z_x = n_y^\top Z_y$; it contracts the chains due to synchronized movement towards the mode. By design and in certain high-dimensional asymptotic regimes, GCRN is optimal for contraction within the class $\mathcal{P}$ (see Theorems~\ref{thm:gcrn-opt-stdgauss}, \ref{thm:gcrn-opt-elliptgauss} and~\ref{thm:gcrn-opt-product}). We construct in Appendix~\ref{app:optimal-markovian} an implementable modification of GCRN that is asymptotically optimal over the entire class $\mathcal{M}$; we prefer the simpler GCRN coupling as any additional gain appears very small (see Figures~\ref{fig:spherical-squaredist} and~\ref{fig:elliptical-squaredist}). Other variants of GCRN with similar high-dimensional properties could be devised, for instance synchronizing $n_x^\top Z_x = n_y^\top Z_y$ using an appropriate reflection or rotation. Aiming to combine favourable properties of contractive and reflective couplings, we propose a hybrid between the GCRN and reflection couplings in Section~\ref{sec:correctedrefl}.

In practice, once the chains become close enough to have a reasonable chance of coalescing in one step, we propose to swap from a contractive coupling like GCRN to one that allows for exact meetings. We use such two-scale approaches in the experiments of Section~\ref{sec:applications}; our coalescive coupling of choice is the \emph{reflection-maximal} coupling (\citealp[Section~4.1]{jacob2020unbiased}; see also Appendix~\ref{app:precondition}). Two-scale couplings have previously been considered in e.g. \cite{biswas2022coupling-based,bou-rabee2020coupling}.

As a major component to this work, we develop theory which explains the behaviour of coupled RWM chains in high dimensions. We focus on GCRN and the two baselines; a recurring quantity in our analysis is the joint distribution of $(n_x^{\top}Z_x, n_y^{\top}Z_y)$ which we characterize in Proposition~\ref{prop:corr-grad-proj} below. As with all our results in the main text, this is proved in Appendix~\ref{sec:proofs}.

\begin{proposition} \label{prop:corr-grad-proj} It holds that $(n_x^{\top}Z_x, n_y^{\top}Z_y) \mid \{X_t, Y_t\} \sim \bvn(\rho)$, the bivariate normal distribution with unit coordinate-wise variances and correlation $\rho \in [-1,1]$, where the coupling-specific value $\rho$ is
    \begin{equation*}
       \rho_{\textup{crn}} = n_x^{\top}n_y, \quad \rho_{\textup{refl}} = n_x^{\top}n_y - 2(n_x^{\top}e)(n_y^{\top}e), \quad \rho_{\textup{gcrn}} = 1.
    \end{equation*}
\end{proposition}

\section{Analysis: standard Gaussian case} \label{sec:stdgauss}

We begin our analysis of the couplings of Section~\ref{sec:coupling-list} by considering a standard Gaussian target $\pi^{(d)} = \mathcal{N}_d(0_d,I_d)$ in increasing dimension $d$. Clearly, this setting is very stylized, however it will allow us to obtain limit theorems that cleanly characterize the behaviour of the couplings in high dimensions. Besides, scaling limits for the RWM algorithm often rely on the target behaving asymptotically like a Gaussian \citep{roberts1997weak}, and the conclusions of such analyses have been seen to hold much more widely in practice.

Firstly, we show in Theorem~\ref{thm:gcrn-opt-stdgauss} below that the GCRN coupling is asymptotically optimal for contraction among the class $\mathcal{P}$ of product couplings, in that it optimizes a limiting form of the $\ejc$. This coupling is therefore expected to perform well even in high dimensions.

Thereafter, our analysis centers on the three-dimensional process
\begin{equation*}
    W^{(d)} = \big(W^{(d)}_t\big)_{t \ge 0} = \frac{1}{d}\left( \lVert X_{\lfloor td \rfloor}\rVert^2, \lVert Y_{\lfloor td \rfloor}\rVert^2, X_{\lfloor td \rfloor}^{\top} Y_{\lfloor td \rfloor} \right)_{t \ge 0},
\end{equation*}
where the speed-up factor $d$ corresponds to the natural time-scale under the step size scaling $h = \ell d^{-1/2}$. The form $B_x = \mathbbm{1}{\{\log U_x \le  -h Z_x^\top X_t - (h^2/2) \lVert Z_x \rVert^2 \}}$ of the acceptance step ensures that the process $W^{(d)}$ is Markovian under our couplings (see Appendix~\ref{sec:proofs}). As the target is spherically symmetric, the first and second coordinates of this process are radial components describing the marginal behaviour of each chain. The inner product in the third coordinate captures the remaining joint behaviour of the chains under a given coupling.

We show in Theorem~\ref{thm:ode-limit-stdgauss} that $W^{(d)}$ converges weakly to the solution of an ordinary differential equation (ODE) as the dimension $d$ grows. Our approach follows \cite{christensen2005scaling} and extends this path-breaking work to pairs of Markov chains. By thereafter analyzing the ODE, we shed light on the high-dimensional behaviour of the coupled chains.

\subsection{Asymptotic optimality} \label{sec:asymptotically-optimal-stdgauss}

It is natural to ask which coupling contracts the chains the most in the considered high-dimensional regime. As shown in Section~\ref{sec:prelim-coupling}, this is equivalent to asking which coupling maximizes the $\ejc$, which we take as our efficiency metric. By design, the GCRN coupling optimizes an asymptotic form of the $\ejc$ over the class $\mathcal{P}$ of product couplings, and is therefore asymptotically optimal over this class. In the sequel, we quantify the gap between $\mathcal{P}$ and $\mathcal{M}$ numerically.

\begin{theorem} \label{thm:gcrn-opt-stdgauss}
	Conditionally on $(\lVert X_t \rVert^2, \lVert Y_t \rVert^2, X_t^\top Y_t)/d = (x,y,v)$, it holds that
	\begin{equation*}
	    \adjustlimits\lim_{d\to\infty}\sup_{\bar{K} \in \mathcal{P}} \mathbb{E}\left[ h^2 Z_x^{\top} Z_y B_x B_y \right] = \ell^2\mathbb{E}_{Z\sim \mathcal{N}(0,1)} \left[1\land e^{\ell x^{1/2}Z - \ell^2/2}\land e^{\ell y^{1/2}Z - \ell^2/2}\right]
	\end{equation*}
	and this limit supremum is attained by the GCRN coupling.
\end{theorem}

\subsection{Scaling limits}

Turning to the ODE limit, we need to get a handle on the drift of the process $W^{(d)}$, as well as show that this process does not fluctuate much. In both cases, we work with conditional one-step differences in $(\lVert X_t\rVert^2, \lVert Y_t\rVert^2, X_t^{\top} Y_t)$.

Starting with limiting drift of the process $W^{(d)}$, its first two coordinates are coupling-invariant, are individually Markovian, and are dealt with by prior work \citep{christensen2005scaling}. For the third coordinate, the one-step difference is
\begin{equation} \label{eqn:innerprod-onestep}
    X_{t+1}^{\top} Y_{t+1} - X_t^{\top} Y_t = h Y_t^{\top} Z_x B_x + h X_t^{\top} Z_y B_y + h^2 Z_x^{\top} Z_y B_x B_y.
\end{equation}
The first two terms are coupling-invariant; the final term, the \emph{jump concordance}, is coupling-dependent. We evaluate the drift in Proposition~\ref{prop:drift-stdgauss} below. Finally, we show that the fluctuations of $W^{(d)}$ are negligible in Proposition~\ref{prop:fluctuations-stdgauss} below.

As a technical point, we fix $\bar{x},\bar{y}>0$ and define the set $\mathcal{S} = \{(x,y,v) \mid x \in [0,\bar{x}], y \in [0,\bar{y}], |v| \le \sqrt{xy}\}$. This is an arbitrarily large compact subset of $\bar{\mathcal{S}}$, the set of all feasible values of the process $W^{(d)}$. Our auxiliary results essentially cover all of $\bar{\mathcal{S}}$ as they hold uniformly over $\mathcal{S}$ for any fixed $\bar{x},\bar{y}$. This final detail is important, but for brevity we suppress it from notation.

\begin{proposition} \label{prop:drift-stdgauss}
Under the couplings of Section~\ref{sec:coupling-list} and uniformly over $w = (x,y,v) \in \mathcal{S}$, it holds that
\begin{equation*}
    \lim_{d \to \infty}\mathbb{E} \left[ d\big( W^{(d)}_{(t+1)/d} - W^{(d)}_{t/d}\big) \bigm\lvert W^{(d)}_{t/d} = w \right]  =  c_\ell(w) = \big(a_\ell(x), a_\ell(y), b_\ell(x,y,v)\big),
\end{equation*}
where 
\begin{align*}
a_\ell(x) ={}& \ell^2 (1-2x) e^{\ell^2(x-1)/2} \Phi\left(\frac{\ell}{2x^{1/2}} - \ell x^{1/2}\right) + \ell^2\Phi\left(-\frac{\ell}{2x^{1/2}}\right), \\
\begin{split}
b_\ell(x,y,v) ={}& \ell^2\mathbb{E}_{(Z_1,Z_2)\sim \bvn(\rho)} \left[1\land e^{\ell x^{1/2}Z_1 - \ell^2/2}\land e^{\ell y^{1/2}Z_2 - \ell^2/2}\right] \\
&-\ell^2 v\left[ e^{\ell^2(x-1)/2} \Phi\left(\frac{\ell}{2x^{1/2}} - \ell x^{1/2}\right) + e^{\ell^2(y-1)/2} \Phi\Big(\frac{\ell}{2y^{1/2}} - \ell y^{1/2}\Big) \right],
\end{split}
\end{align*}
and where $\rho = \rho(x,y,v)$ is coupling-specific:
\begin{equation*}
    \rho_\textup{crn} = \frac{v}{(xy)^{1/2}}, \quad  \rho_\textup{refl} = \frac{2xy-(x+y)v}{(xy)^{1/2}(x+y-2v)}, \quad \rho_\textup{gcrn} = 1.
\end{equation*}
\end{proposition}

\begin{proposition} \label{prop:fluctuations-stdgauss}
Under the couplings of Section~\ref{sec:coupling-list}, it holds that
\begin{equation*}
    \lim_{d \to \infty}\sup_{w \in \mathcal{S}} \mathbb{E} \left[ d^2\lVert W^{(d)}_{(t+1)/d} - W^{(d)}_{t/d} \rVert^2 \bigm\lvert W^{(d)}_{t/d} = w \right] < \infty.
\end{equation*}
\end{proposition}

Having obtained the limiting drift of the process $W^{(d)}$, as well as bounded its fluctuations, we can state our main result: the convergence of this process to a deterministic limit.

\begin{theorem} \label{thm:ode-limit-stdgauss} 
Let $W^{(d)}_0 = w_0 \in \bar{\mathcal{S}}$ for all $d$. Then, under the couplings of Section~\ref{sec:coupling-list}, it holds that
\begin{equation*}
	W^{(d)} \implies w \quad \text{as}\quad d \to \infty,
\end{equation*}
where $w:[0, \infty) \to \bar{\mathcal{S}}$ is the solution of the initial value problem
\begin{equation}\label{eqn:ode-limit-stdgauss}
    \mathrm{d}w(t) = c_\ell(w(t))\mathrm{d}t \quad \text{started from } w(0) = w_0,
\end{equation}
and where the drift $c_\ell(\cdot)$ is coupling-specific as in Proposition~\ref{prop:drift-stdgauss}.
\end{theorem}

The analysis of the process $(X_t,Y_t)_{t\ge 0}$ therefore reduces to the analysis of the ODE. We are interested in the squared distance $\lVert X_t - Y_t \rVert^2$: a change of variables leads us to an analogous ODE limit $\bar W^{(d)} \implies \bar w$ for the process 
\begin{equation*}
\left(\bar W^{(d)}_t\right)_{t \ge 0} =\frac{1}{d} \left(\lVert X_{\lfloor td \rfloor}\rVert^2, \lVert Y_{\lfloor td \rfloor}\rVert^2, \lVert X_{\lfloor td \rfloor} - Y_{\lfloor td \rfloor} \rVert^2 \right)_{t \ge 0}
\end{equation*}
to the solution $\bar w = (x,y,s)$ of $\mathrm{d}\bar w(t) = \bar c_\ell(\bar w(t))\mathrm{d}t,$ where $\bar c_\ell(\bar w) = (a_\ell(x), a_\ell(y), \bar b_\ell(x,y,s))$ and $\bar{b}_\ell(x,y,s) = a_\ell(x) + a_\ell(y) - 2b_\ell(x,y,(x+y-s)/2)$. ($a_\ell(\cdot)$ and $b_\ell(\cdot)$ are defined in Proposition~\ref{prop:drift-stdgauss}.) 

The intuition from this result is that after solving the ODE we obtain a function $s(t)$, the solution for the third component, which contracts to $0$ as $t\to \infty$ for the GCRN and reflection couplings but does not contract to $0$ for the CRN coupling, and is such that
\begin{equation*}
\|X_t - Y_t\|^2 \approx s\left( t/d \right) d \quad \text{for large~$d$}.
\end{equation*}

Over one step, the squared-distance changes by roughly $\bar b_\ell(\cdot) / d$; the smaller this value, the more a coupling contracts the chains. Since the contraction efficiency measure $\ejc$ appears in the limiting drift $\bar b_\ell(\cdot)$, by Theorem~\ref{thm:gcrn-opt-stdgauss} the GCRN coupling optimizes this drift point-wise over all couplings in $\mathcal{P}$.

\subsubsection{Numerical illustration} \label{sec:spherical-numeric}

\begin{figure}[htb]
    \centering
    \includegraphics[width=\textwidth]{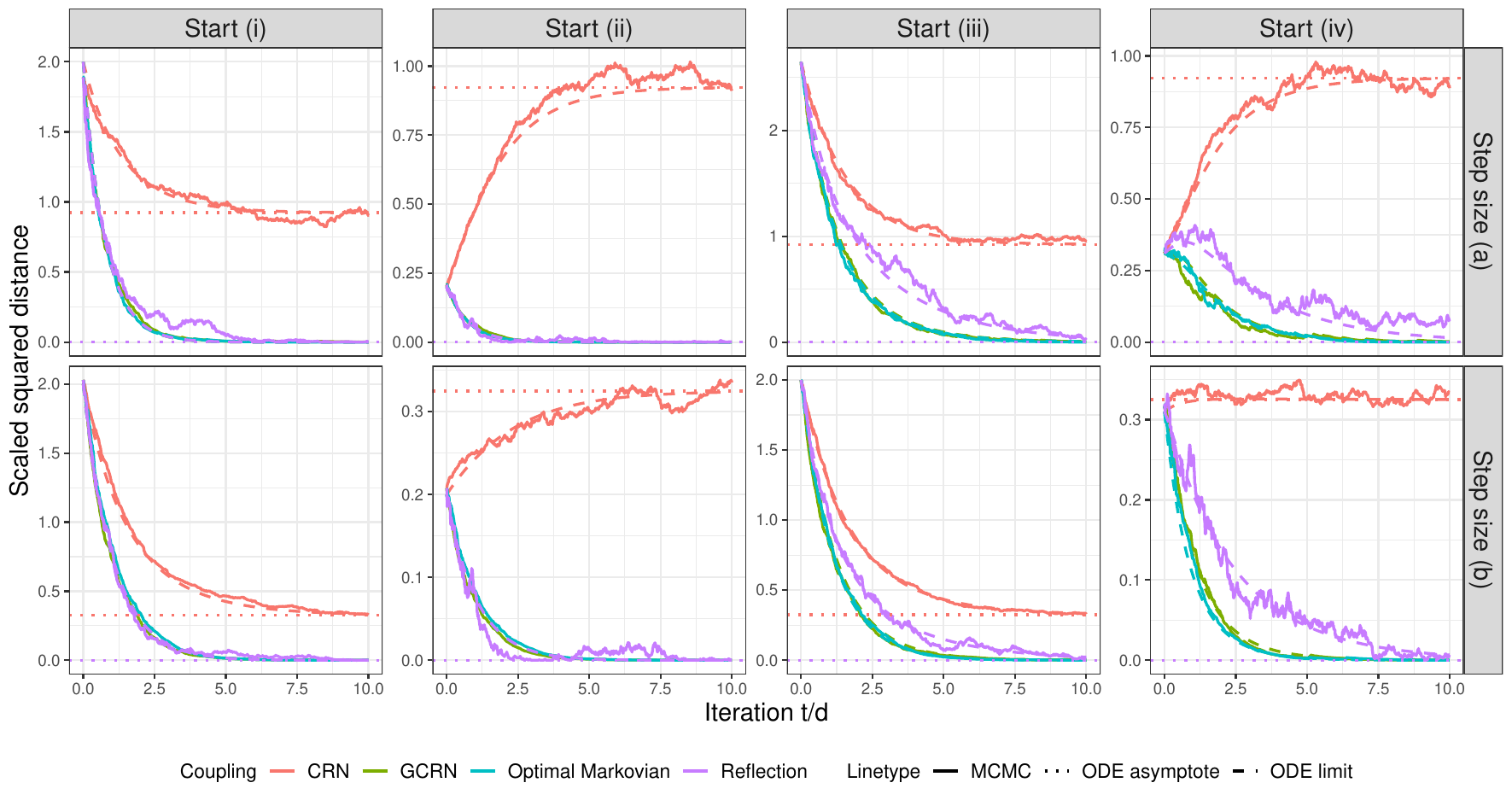}
    \caption{Trace of the scaled squared distance $\lVert X_{t} - Y_{t}\rVert^2/d$ and its ODE limit, for a target $\pi^{(d)} = \mathcal{N}_d(0_d, I_d)$ and various couplings, step sizes, and starting conditions as in Section~\ref{sec:spherical-numeric}.}
    \label{fig:spherical-squaredist}
\end{figure}

We visualize the ODE limits in Figure~\ref{fig:spherical-squaredist}, considering four different starting conditions $(x_0,y_0,\rho_0)$, where $(x_0, y_0)$ correspond to $(\lVert X_0\rVert^2/d, \lVert Y_0\rVert^2/d)$ and $\rho_0$ corresponds to the cosine similarity $X_0^\top Y_0 /(\lVert X_0\rVert \lVert Y_0\rVert)$ between the initial states of the chains. The starting conditions are, and in a limiting sense correspond to chains initialized from:
\begin{enumerate}
    \item[(i)] $(x_0,y_0,\rho_0) = (1,1,0)$, independent draws from the target.
    \item[(ii)] $(x_0,y_0,\rho_0) = (1,1,0.9)$, positively correlated draws from the target.
    \item[(iii)] $(x_0,y_0,\rho_0) = (1.5,0.5,0)$, independent draws from, respectively, over- and under-dispersed versions of the target.
    \item[(iv)] $(x_0,y_0,\rho_0) = (0.4,0.01,-0.5)$, negatively correlated draws from under-dispersed versions of the target.
\end{enumerate}
We also compare ODE limits with MCMC traces in dimension $d = 1{,}000$. Within each scenario, the same starting value $(X_0,Y_0)$ is used for all couplings; its coordinates are sampled independently from $\bvn(x_0, y_0; (x_0 y_0)^{1/2}\rho_0)$, the bivariate normal with coordinate-wise variances $(x_0, y_0)$ and correlation $\rho_0$. We use step sizes (a) $\ell= 2.38$ and (b) $\ell = \sqrt{2}$; the former is optimal at stationarity \citep{roberts1997weak} and both are close to optimal in the transient phase (see \citealp{christensen2005scaling} and Section~\ref{sec:dim-scaling-stdgauss}).

Figure~\ref{fig:spherical-squaredist} confirms that our theory consistently predicts the behaviour of the coupled RWM algorithms in high dimensions $d$. Although stochasticity is clearly present in the traces, this noise will vanish in the limit as $d\to\infty$. The GCRN coupling outperforms the CRN and reflection couplings. Furthermore, GCRN very closely approximates the asymptotically optimal Markovian coupling of Appendix~\ref{app:optimal-markovian}; for this reason, we choose to focus on GCRN, although we will revisit the optimal Markovian coupling in the numerical illustrations of Section~\ref{sec:elliptgauss}.

\subsubsection{Long-time behaviour} \label{sec:fixedpoint-stdgauss}

We turn to the behaviour the chains in the joint long-time and high-dimensional limits. We shall access this through the unique stable fixed point of the limiting ODE~\eqref{eqn:ode-limit-stdgauss}; all fixed points of this process are characterized in Proposition~\ref{prop:fixedpoint-stdgauss} below.

The first two coordinates of the ODE dictate the marginal behaviour of the chains and are autonomous. Their fixed points correspond to the chains being marginally stationary, are stable, and are $(x^*,y^*) = (1,1)$. The third coordinate is more involved: we require the function 
$$h_\ell(\rho) = \mathbb{E}_{(Z_1,Z_2) \sim \bvn(\rho)} \left[1\land e^{\ell Z_1 - \ell^2/2}\land e^{\ell Z_2 - \ell^2/2}\right],$$ which is increasing on its domain $[-1,1]$, is bounded above by $h_\ell(1) = 2 \Phi(-\ell/2)$, and has unbounded derivative as $\rho \to 1$. (See in Lemma~\ref{lemma:h-of-rho} in Appendix~\ref{app:proofs-aux} for properties of $h_\ell$.) Proposition~\ref{prop:drift-stdgauss} indicates that the fixed points $v^*$ are the solutions of $h_\ell(\rho(v)) - v h_\ell(1) = 0$, where the correlation $\rho(\cdot)$ is coupling-specific: $\rho_\textup{crn}(v) = v$ and $\rho_\textup{refl}(v) = \rho_\textup{gcrn}(v) = 1.$ The terms of the fixed-point equation have straightforward interpretations: $v$ is the cosine similarity between the coupled states, $h_\ell(1)= 2 \Phi(-\ell/2)$ is the acceptance rate of the marginal chains, and $h_\ell(\rho(v))$ is the rate at which the chains accept their proposals simultaneously.

A scalable coupling must ensure the stability of the fixed point $v^* = 1$. However, due to the rapid growth of $h_\ell(\rho)$ near $\rho = 1$, the fixed point $v^* = 1$ is highly sensitive to the function $\rho(\cdot)$: stability can essentially only be achieved if $\rho(v) = 1$ in an interval around $v=1$. As we discuss in Section~\ref{sec:sync-accept}, this instability is caused by the distance between the coupled chains increasing by a relatively large amount when one chain accepts its proposal while the other rejects its proposal, and points to the need to use couplings like GCRN which attempt to synchronize acceptance events. As it happens, a spherically symmetric target also allows the reflection coupling to synchronize acceptance events: intuitively, $\rho_\textup{refl}(v) = 1$ here because, when the chains are on the same level set, the reflection is a perfect mapping between the respective logarithmic gradients.

\begin{proposition}\label{prop:fixedpoint-stdgauss} Under the couplings of Section~\ref{sec:coupling-list}, the fixed points of \eqref{eqn:ode-limit-stdgauss} are of the form $w^* = (1,1,v^*)$, where $v^*$ is coupling-specific:
\begin{itemize}
    \item \textup{\textbf{CRN}}: $v_\textup{crn}^* \in (0,1)$, stable and $v_u^*=1$, unstable.
    \item \textup{\textbf{Reflection}}: $v_\textup{refl}^* = 1$, stable.
    \item \textup{\textbf{GCRN}}: $v_\textup{gcrn}^* = 1$, stable.
\end{itemize}
\end{proposition}

We are interested in the limiting squared distance. Proposition~\ref{prop:fixedpoint-stdgauss} suggests that this equals, for the coupling-specific stable value $v^*_\textup{coup}$,
\begin{equation*}
	\lim_{d,t\to\infty}\lVert X_t - Y_t\rVert^2 / d =: s^*_\textup{coup} = 2(1 - v^*_\textup{coup}).
\end{equation*}
For CRN, we plot $s^*_\textup{crn}(\ell)$ in Figure~\ref{fig:asymptote-elliptgauss} below and conclude that $\lVert X_t - Y_t \rVert^2 = \Theta(d)$ for large~$t$ unless $\ell = o_d(1)$. As the dimension increases, the CRN coupling becomes increasingly impractical: to achieve sufficient contraction, it must sacrifice mixing by using an increasingly smaller scaling.

For GCRN and reflection, since $s^*_\textup{gcrn} = s_\textup{refl}^* = 0$ we conclude that $\lVert X_t - Y_t \rVert^2 = o(d)$ for large~$t$. 
However, numerically testing dimensions $d \in [1, 10000]$, we found that the contraction was in fact significantly better. (Recall that, by~\eqref{eqn:prob-meeting}, contraction to $\bigO(h^2)$ is required for a reasonable chance of coalescence.) The GCRN coupling exhibited clear $\bigO(h^2)$ behaviour in all dimensions; remarkably, for $d \ge 4$ the coupling was always able to contract the chains to within numerical precision. The reflection coupling displayed $\bigO(1)$ behaviour, with plots suggesting that it could bring the chains sufficiently close together for a reasonable chance of coalescence when $d \le 100$, but that coalescence would be unlikely in dimensions at least an order of magnitude larger.

\begin{figure}
    \centering
    \includegraphics[width = \textwidth]{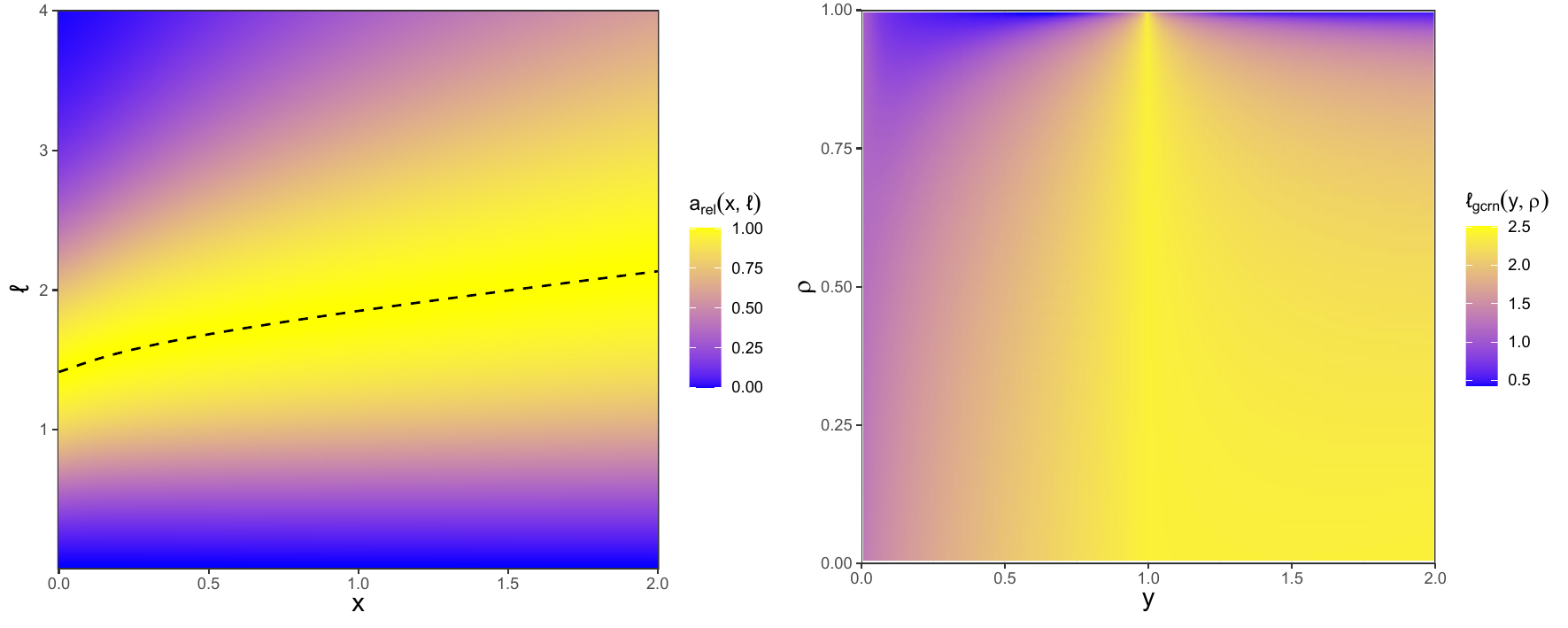}
    \caption{Optimal step size scalings for marginal rate of convergence and for contraction, as in Section~\ref{sec:dim-scaling-stdgauss}. \textbf{Left:} Heatmap of relative drift $a_\textup{rel}(x,\ell) = |a_\ell(x)| / \max_{\ell}\{|a_\ell(x)|\}$; the dashed line traces the optimum point-wise over $x$. \textbf{Right:} Heatmap of optimal step size $\ell_\textup{gcrn}(y,\rho)$ for the GCRN coupling, fixing the $X$-chain stationary with $x=1$.}
    \label{fig:scaling-spherical}
\end{figure}

\subsubsection{Optimal scaling} \label{sec:dim-scaling-stdgauss}

To optimize the rate of convergence of a single RWM chain towards the main target mass, the step size $\ell$ should maximize the absolute drift $|a_\ell(x)|$ point-wise over $x$. We plot this quantity normalized to unit scale in Figure~\ref{fig:scaling-spherical} (left). Echoing \cite{christensen2005scaling}, the speed of convergence is insensitive to the choice of step size and no single step size is uniformly optimal, though the rough trend is that smaller step sizes should be chosen for convergence than for mixing. We single out two step sizes: (a) $\ell = 2.38$, which is optimal for mixing at stationarity, achieves at least 45\% efficiency over the considered range; (b) $\ell = \sqrt{2}$ is optimal for convergence when the chain is at the mode and is at least 86\% efficient over the considered range. The acceptance rate is remarkably stable at the point-wise optimal step size (the dashed line), and hovers around $35\%$.

We next turn to the problem of selecting a step size $\ell$ which optimizes the contraction of the GCRN coupling. In the ODE limit, we should optimize the drift $\bar{b}_\ell(x,y,s)$ point-wise. To obtain sensible guidelines, we fix $x = 1$: this corresponds to a stationary chain coupled with a non-stationary chain and emulates a lagged coupling with a large lag parameter. We reparametrize the state to $(x,y,s) = (1,y, 1 + y - 2y^{1/2}\rho)$, where $\rho \in [-1,1]$ denotes cosine similarity, and we plot the optimal step size $\ell_\textup{gcrn}$ as a function of $(y,\rho)$ in Figure~\ref{fig:scaling-spherical} (right).\footnote{The heatmap for $\rho \in [-1,0)$ is not pictured, as it resembles the lower half of the plot.} Remarkably, the step size $\ell = 2.38$ is close to optimal over much of the range, and in particular when the coupled chains are marginally stationary. We also find (not pictured) that the contraction is insensitive to varying the step size near the optimum. This suggests close to optimal performance for GCRN when the step size is tuned to $\ell = 2.38$ and the acceptance rate to $23.4\%$, though we caution that in practice smaller scalings and higher acceptance rates may be required because it is crucial for a coupling to synchronize acceptance events between the coupled chains (see Section~\ref{sec:sync-accept}). Similar guidelines apply to the reflection coupling, but as we shall see in Section~\ref{sec:elliptgauss} this coupling requires a much smaller scaling $\ell = o_d(1)$ when the target is not spherically symmetric.

The different optimal scalings may seem at odds; the discrepancy can be explained as follows. The rate of contraction of two coupled chains depends on their movement in both the angular and the radial directions. It therefore benefits from better mixing in the angular direction, which for a single stationary chain is optimized by the scaling $\ell = 2.38$. In contrast, the rate of convergence of a single chain towards the main target mass only depends on the movement in the radial direction, which benefits from a smaller scaling.

\section{Analysis: elliptical Gaussian case} \label{sec:elliptgauss}

While we do not see the asymptotic Gaussianity assumption as a major limitation to our theoretical results, the spherical symmetry assumed in Section~\ref{sec:stdgauss} is certainly unrealistic: in practice, even after preconditioning, one cannot expect the RWM proposals to precisely match the structure of the target. In this section, we therefore relax this constraint and consider a more general sequence of targets $\pi^{(d)} = \mathcal{N}_d(0_d,\Sigma_d)$ of increasing dimension $d$. Given the form of the acceptance ratio, it will be convenient to work with the precision matrix $\Omega_d = \Sigma_d^{-1}$. To obtain a transparent asymptotic theory, we fix a positive-valued distribution $\mu$ which captures heterogeneity across the lengthscales of the target and assume that 
\begin{equation}\label{eqn:elliptgauss-prod-assumption}
    \forall d: \quad \Omega_d = \diag(\omega_1^2,\dots,\omega_d^2), \textup{ where } (\omega_i^2)_{i \ge 1} \iid \mu.
\end{equation}

We impose moment conditions on the \emph{spectral distribution} $\mu$ of the precision matrix, see Assumptions~\ref{assumption:ellipt-1} and~\ref{assumption:ellipt-2}. We first show in Theorem~\ref{thm:gcrn-opt-elliptgauss} that the GCRN coupling is asymptotically optimal for contraction within the class $\mathcal{P}$. By distilling the argument of Section~\ref{sec:stdgauss} down to its key part, we then consider a form of limiting drift in Proposition~\ref{prop:drift-elliptgauss}, which tells us about the long-time behaviour of the coupled chains and informs efficient step size scalings. To preempt our main conclusions, we find that the GCRN coupling is robust to the eccentricity of the target and that the same acceptance rate tuning guidelines apply as in the spherical case. The reflection coupling however breaks down for eccentric targets and must sacrifice mixing in order to achieve sufficient contraction. Under stronger structural assumptions, we could arrive at ODE limits, see Remark~\ref{remark:ode-ellipt}.

It will be convenient to define (for all $x,y\in\mathbb{R}^d$ and $k \in \mathbb{Z}$) the inner product $\langle x,y \rangle_{[k]} = x^\top \Omega_d^k y$ and its associated norm $\lVert x \rVert^2_{[k]} = \langle x,x \rangle_{[k]}$. The acceptance step becomes
\begin{equation*}
    B_x = \mathbbm{1}{\left\{\log U_x \le  -h \langle Z_x,X_t \rangle_{[1]} - (h^2/2) \lVert Z_x \rVert^2_{[1]} \right\}}.
\end{equation*}

\subsection{Asymptotic optimality} \label{sec:asympt-opt-ellipt}

We first show that the GCRN coupling is asymptotically optimal for contraction among the class $\mathcal{P}$ of all product couplings, for which we require a law of large numbers assumption.

\begin{assumption} \label{assumption:ellipt-1}
The spectral distribution $\mu$ has finite first moment $z_1= \mathbb{E}[\omega_i^{2}]$.
\end{assumption}

\begin{theorem}\label{thm:gcrn-opt-elliptgauss}
Under Assumption~\ref{assumption:ellipt-1} and conditionally on $(\lVert X_t \rVert^2_{[2]}, \lVert Y_t \rVert^2_{[2]},\langle X_{t},Y_{t} \rangle_{[2]})/(z_1 d) = (x_2,y_2,v_2)$,
it holds that
    \begin{equation*}
        \adjustlimits\lim_{d\to\infty}\sup_{\bar{K} \in \mathcal{P}} \mathbb{E}\left[ h^2 Z_x^{\top} Z_y B_x B_y \right] = \ell^2\mathbb{E}_{Z\sim \mathcal{N}(0,1)} \left[1\land e^{\lambda x_2^{1/2}Z - \lambda^2/2}\land e^{\lambda y_2^{1/2}Z - \lambda^2/2}\right],
    \end{equation*}
where $\lambda = \ell z_1^{1/2}$. Furthermore, the limit supremum is attained by the GCRN coupling.
\end{theorem}

Roughly speaking, Assumption~\ref{assumption:ellipt-1} states that none of the lengthscales of the target are much smaller than the average. Regularity conditions like these are among the weakest required by optimal scaling theory \citep[Condition~1]{sherlock2013optimal} and appear necessary to avoid degeneracies in the efficiency of the RWM algorithm \citep{beskos2018ridged}. Under Assumption~\ref{assumption:ellipt-1}, one should think of 
\begin{equation*}
    \lambda = \ell z_1^{1/2}
\end{equation*}
as the natural step size parameter, with $\lambda = 2.38$ being optimal at stationarity and corresponding to a $23.4\%$ acceptance rate. (We further relate our notation to the optimal scaling literature in Section~\ref{sec:prod-target}.) The key consequence of Assumption~\ref{assumption:ellipt-1} is that the Hessian term in the acceptance ratio tends to a constant: $\lim_{d \to \infty} h^2\lVert Z_x \rVert^2_{[1]} = \ell^2 z_1$ in probability. It is this which enables us to prove Theorem~\ref{thm:gcrn-opt-elliptgauss}. The quantities in Theorem~\ref{thm:gcrn-opt-elliptgauss} are rescaled so that e.g. $x_2 \approx 1$ when the chain is in the main target mass; we will apply a similar scaling to $\lVert X_t \rVert^2_{[k]}$ in the sequel.

\subsection{Scaling limits}

We turn to the problem of characterizing the high-dimensional behaviour of the coupled RWM chains. To be able to handle all couplings of Section~\ref{sec:coupling-list} at once, we impose somewhat stronger assumptions.

\begin{assumption} \label{assumption:ellipt-2}
The spectral distribution $\mu$ has finite $k$-th moment for $k \in \{-2,1\}$.
\end{assumption}

Assumption~\ref{assumption:ellipt-2} guarantees the existence of all intermediate moments $z_{k} = \mathbb{E}[\omega_i^{2k}]$ for all $k \in [-2,1]$. Intuitively, Assumption~\ref{assumption:ellipt-2} asks for none of the lengthscales of the target to be much larger or smaller than the average. In practice, the target is fixed and not randomly generated as in~\eqref{eqn:elliptgauss-prod-assumption}, but it may still arise as a discretization of a limiting target with a given limiting spectral distribution. We exemplify a class of time series models whose limiting spectral distribution satisfies the assumption.

\begin{example}\label{example:ar-p}
Fix $p \in \mathbb{N}$. For all $d \ge p$, let $\pi^{(d)}$ be the distribution of a stationary $\textup{AR}(p)$ process of length $d$, with the same parameters across all dimensions $d$. Then, the spectrum of the precision matrix converges to a spectral distribution $\mu$ which satisfies Assumption~\ref{assumption:ellipt-2}.
\end{example}

In Example~\ref{example:ar-p}, the assumption is satisfied because the limiting target is well-conditioned, however we stress that a uniform control of the condition number is certainly not necessary. More broadly, we expect Assumption~\ref{assumption:ellipt-2} to hold for Gaussian Markov random fields \citep{rue-held2005gmrf-book} whose dependence structure grows suitably slowly with the dimension $d$.

The analysis of the reflection coupling will require the following quantity, which we dub the \emph{limiting eccentricity} of the target
\begin{equation*}
    \ecc = z_1 z_{-1} = \lim_{d \to \infty}\tr(\Omega_d)\tr(\Sigma_d)/d^2.
\end{equation*}
This quantity satisfies $\ecc \ge 1$ and is invariant to a rescaling of the precision matrix by a constant factor. The lower bound is attained when the target is spherical; the greater the imbalance between the average lengthscales of the precision and covariance matrices, the larger $\ecc$ becomes.

\subsubsection{Limiting drift}

We now compute the limiting drift of triplets $W_{[k]} = (\lVert X_t \rVert^2_{[k]}, \lVert Y_t \rVert^2_{[k]},\langle X_{t},Y_{t} \rangle_{[k]})$ for various $k$. For this calculation, we let $\bar{\mathbb{E}}$ denote the expectation conditional on $W_{[j]} / (z_{j-1} d) = (x_j,y_j,v_j)$ for all $j \in \{-1,0,1,2\}$, where the normalizing constants are defined subsequent to Assumption~\ref{assumption:ellipt-2}.

\begin{proposition}\label{prop:drift-elliptgauss} 
Under Assumption~\ref{assumption:ellipt-2} and the couplings of Section~\ref{sec:coupling-list}, for all $k\in\{-1,0,1\}$ it holds that
\begin{equation} \label{eqn:infi-ellipt}
    \begin{aligned}
    &\lim_{d \to \infty}\bar{\mathbb{E}} \left[ \lVert X_{t+1} \rVert_{[k]}^2 - \lVert X_t \rVert_{[k]}^2 \right] = \ell_k^2 \alpha (x_{k+1};x_2), \\
    &\lim_{d \to \infty}\bar{\mathbb{E}}\left[ \lVert Y_{t+1} \rVert_{[k]}^2 - \lVert Y_t \rVert_{[k]}^2 \right] = \ell_k^2 \alpha(y_{k+1};y_2),\\
    &\lim_{d \to \infty}\bar{\mathbb{E}} \left[ \langle X_{t+1}, Y_{t+1}\rangle_{[k]} - \langle X_{t}, Y_{t}\rangle_{[k]} \right] = \ell_k^2 \beta(v_{k+1};x_{2},y_{2},\rho),
    \end{aligned}
\end{equation}
where $\ell_k = \ell z_k^{1/2}$,
\begin{align*}
    \alpha(x_{k+1};x_2) ={}& (1-2x_{k+1}) e^{\lambda^2(x_2-1)/2} \Phi\left(\frac{\lambda}{2x_2^{1/2}} - \lambda x_2^{1/2}\right) + \Phi\left(-\frac{\lambda}{2x_2^{1/2}}\right),\\
\begin{split}
    \beta(v_{k+1};x_{2},y_{2},\rho) ={}& \mathbb{E}_{(Z_1,Z_2)\sim \bvn(\rho)}\left[1\land e^{\lambda x_1^{1/2} Z_1 - \lambda^2/2}\land e^{\lambda x_2^{1/2} Z_2 - \lambda^2/2}\right] \\
    &- v_{k+1}\left[ e^{\lambda^2(x_2-1)/2} \Phi\left(\frac{\lambda}{2x_2^{1/2}} -\lambda x_2^{1/2}\right) + e^{\lambda^2(y_2-1)/2} \Phi\left(\frac{\lambda}{2y_2^{1/2}} -\lambda y_2^{1/2}\right) \right],
\end{split}
\end{align*}
where $\lambda = \ell z_1^{1/2}$, and where $\rho$ is coupling-specific:
\begin{equation*}
    \rho_\textup{crn} = \frac{v_2}{(x_2y_2)^{1/2}}, \quad \rho_\textup{refl} = \frac{v_2}{(x_2y_2)^{1/2}} + \frac{2(x_1 - v_1)(y_1 - v_1)}{\ecc (x_2y_2)^{1/2} (x_0 + y_0 - 2 v_0)}, \quad \rho_\textup{gcrn} = 1.
\end{equation*}
\end{proposition}

Proposition~\ref{prop:drift-elliptgauss} generalizes Proposition~\ref{prop:drift-stdgauss} to the elliptical case, with the notable differences that the natural step size parameter is now $\lambda$ and that the correlation $\rho_{\textup{refl}}$ now depends on the eccentricity $\ecc$. By Theorem~\ref{thm:gcrn-opt-elliptgauss}, the GCRN coupling optimizes point-wise the drift~\eqref{eqn:infi-ellipt} over all couplings in $\mathcal{P}$. With more care, we could additionally bound the fluctuations of $W_{[k]}$ (as in Proposition~\ref{prop:fluctuations-stdgauss}) and we could make our results uniform over compact sets. However, these refinements will not provide any additional insight into the asymptotic behaviour of the coupled chains, so we avoid further technicalities.

Through a change of variables, we can understand the evolution of the squared Mahalanobis distance,
\begin{equation*}
    \lim_{d \to \infty}\bar{\mathbb{E}} \left[ \lVert X_{t+1} - Y_{t+1}\rVert^2_{[k]} - \lVert X_{t} - Y_{t}\rVert^2_{[k]} \right] = \ell_k^2 \gamma(x_{k+1}, y_{k+1}, v_{k+1};x_{2},y_{2},\rho),
\end{equation*}
where $\gamma(x_{k+1}, y_{k+1}, v_{k+1};x_{2},y_{2},\rho) :=  \alpha (x_{k+1};x_2) +  \alpha (y_{k+1};y_2) - 2\beta(v_{k+1};x_{2},y_{2},\rho).$

\begin{remark} \label{remark:ode-ellipt}
Under certain structural assumptions, we can extend Proposition~\ref{prop:drift-elliptgauss} to an ODE limit. For instance, if $\Sigma = \diag(1,\sigma^2,1,\sigma^2,\dots)$, by decomposing the process $W^{(d)}$ considered in Section~\ref{sec:stdgauss} into separate triplets for the odd and even coordinates, we obtain a six-dimensional Markov process. Under moment conditions on $\sigma^2$, this Markov process has an ODE limit as the dimension grows.
\end{remark}

\subsubsection{Long-time behaviour} \label{sec:fixedpoint-elliptgauss}

\begin{figure}[tb]
    \centering
    \includegraphics[width = 0.7\textwidth]{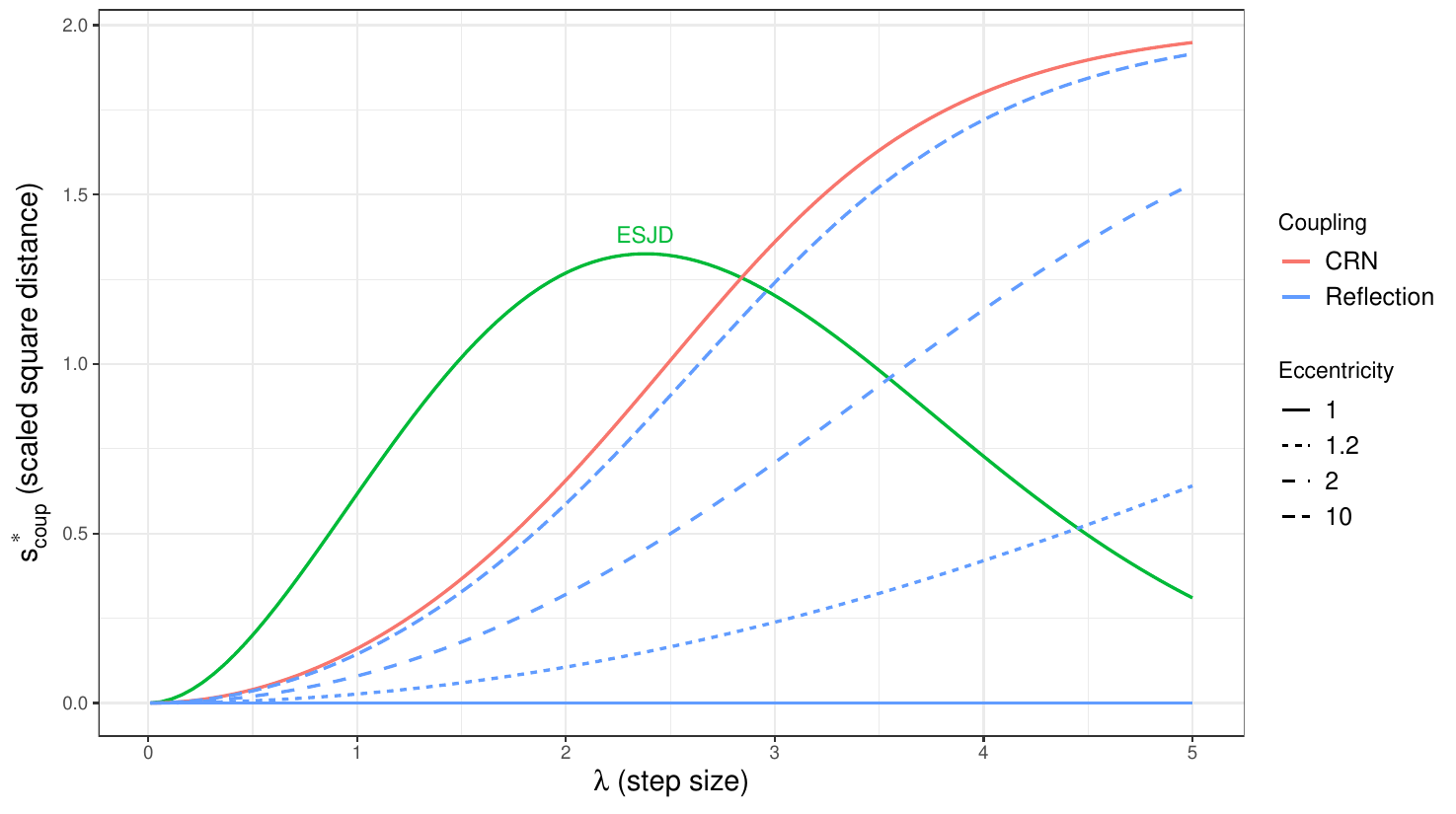}
    \caption{Scaled squared distance $s_\textup{coup}^*$ in the joint long-time and high-dimensional limits, as in Section~\ref{sec:fixedpoint-elliptgauss}, for targets $\pi^{(d)} = \mathcal{N}_d(0_d, \Sigma_d)$ and various couplings and values of the natural step size $\lambda$ and eccentricity $\ecc$. We stress that the only desirable value is $s^*_\textup{coup} = 0$. The efficiency measure $\esjd(\lambda)$ is overlaid for context.}
    \label{fig:asymptote-elliptgauss}
\end{figure}

We infer the behaviour of the coupled chains in the joint long-time and high-dimensional limits from the stable fixed points of the system of equations~\eqref{eqn:infi-ellipt}. This is sensible, as the unconditional expected one-step change in a function of the coupled chains is null when coupled process is stationary.

This analysis mirrors, and assumes familiarity with, the discussion of Section~\ref{sec:fixedpoint-stdgauss}. The fixed points are the solutions of $\alpha(x_{k};x_2) = \alpha(y_{k};y_2) = \beta(v_{k};1,1,\rho) = 0$ for all $k \in\{0,1,2\}$. The solutions $x_k^* = y_k^* = 1$ correspond to marginal stationarity in each chain. This implies that the solutions $v_{k}^* = v^*$ must be constant in $k$, which collapses the fixed-point equations for the remaining joint behaviour of the chains to
\begin{equation} \label{eqn:fixedpoint-elliptgauss}
	h_\lambda(\rho(v)) - vh_\lambda(1) = 0,
\end{equation}
where the correlation $\rho(\cdot)$ is coupling-specific: $\rho_\textup{crn}(v) = v,$ $\rho_\textup{refl}(v) = v + \ecc^{-1} (1-v)$ and $\rho_\textup{gcrn}(v) = 1.$ The terms of the fixed-point equation~\eqref{eqn:fixedpoint-elliptgauss} retain their intuitive interpretations from Section~\ref{sec:fixedpoint-stdgauss} in terms of the cosine similarity $v$, the marginal acceptance rate $h_\lambda(1)$, and the synchronous acceptance rate $h_\lambda(\rho(v))$. We consider a linear stability analysis in Proposition~\ref{prop:fixedpoint-elliptgauss}; for the fixed point $v^* = 1$ to be stable, we essentially require that $\rho(v) = 1$ in an interval around $v = 1$. In other words, we require marginally stationary chains to accept their proposals simultaneously even when they are not coalesced. The reflection coupling is unable to do so in the elliptical case, which ultimately impacts its scalability.

\begin{proposition} \label{prop:fixedpoint-elliptgauss}
Let the target be non-spherical Gaussian (i.e. $\ecc > 1$). Under the couplings of Section~\ref{sec:coupling-list}, the solutions of the fixed-point equation~\eqref{eqn:fixedpoint-elliptgauss} are as follows:
\begin{itemize}
    \item \textup{\textbf{CRN}}: $v_\textup{crn}^* \in (0,1)$, stable and $v_u^* = 1$, unstable. 
    \item \textup{\textbf{Reflection}}: $v_\textup{refl}^* \in (v_\textup{crn}^*,1)$, stable and $v_u^* = 1$, unstable. As a function of $\ecc \in (1, \infty)$, $v^*_\textup{refl}(\ecc)$ is decreasing and tends to $\{1, v_\textup{crn}^*\}$ at the extremes.
    \item \textup{\textbf{GCRN}}: $v_\textup{gcrn}^* =1$, stable.
\end{itemize}
\end{proposition}

We are interested in the limiting behaviour of the squared distance. Proposition~\ref{prop:fixedpoint-elliptgauss} suggests that, for the coupling-specific stable value $v^*_\textup{coup}$,
\begin{equation*}
\lim_{d, t \to \infty}  \lVert X_{t} - Y_{t}\rVert^2 / \tr(\Sigma_d) =: s^*_\textup{coup} = 2(1 - v^*_\textup{coup}).
\end{equation*}
We plot this limiting quantity in Figure~\ref{fig:asymptote-elliptgauss}. For reflection and CRN, since $s^*_\textup{refl},s_\textup{crn}^* > 0$ we conclude that $\lVert X_t - Y_t \rVert^2 = \Theta(d)$ for large~$t$. Both couplings become increasingly impractical as the dimension increases: they must sacrifice mixing for contraction by using an increasingly smaller scaling $\lambda = o_d(1)$. This points to the need to use preconditioning alongside the reflection coupling, as this coupling performs adequately for spherical targets. The CRN coupling is unaffected by the eccentricity of the target but is uniformly worse than the reflection coupling.

For GCRN, since $s^*_\textup{gcrn} = 0$ we conclude that $\lVert X_t - Y_t \rVert^2 = o(d)$ for large~$t$. However, experimentally (we tested dimensions $d \in [1, 10000]$ and targets similar to those of Figure~\ref{fig:elliptical-squaredist}) we found that the GCRN coupling performed significantly better. We observed $\bigO(h^2)$ behaviour in all dimensions; for $d \ge 10$, the GCRN coupling was consistently able to contract the chains to within numerical precision, even for the most eccentric target considered.

In practice, given estimates of the traces of the covariance and precision matrices, we can estimate the eccentricity $\ecc$, compute $v^*_{\textup{coup}}$ by solving~\eqref{eqn:fixedpoint-elliptgauss} numerically, and then predict
\begin{equation}\label{eqn:predict-asymptote-elliptgauss}
	\mathbb{E}\left[ \lVert X_t - Y_t \rVert^2 \right] \approx 2\tr (\Sigma_d)(1 - v^*_{\textup{coup}}) \quad \text{for large } d, t.
\end{equation}

\subsubsection{Optimal scaling} \label{sec:scaling-elliptgauss}

We turn to the problem of step size tuning for the GCRN coupling. When $k=1$, by comparison with Proposition~\ref{prop:drift-stdgauss}, we obtain that
\begin{equation*}
\lim_{d \to \infty}\bar{\mathbb{E}} \left[ \lVert X_{t+1} - Y_{t+1}\rVert_{[1]}^2 - \lVert X_{t} - Y_{t}\rVert_{[1]}^2 \right]  = \bar{b}_{\lambda}(x_2,y_2,s_2),
\end{equation*}
where $\bar{b}_{\lambda}(\cdot)$ is the drift of the squared-distance process which we obtained for a spherical Gaussian target, which now depends on the natural step size parameter $\lambda$. To optimize the contraction of the GCRN coupling, we should therefore optimize the drift $\bar{b}_{\lambda}(\cdot)$ point-wise. This problem was considered in Section~\ref{sec:dim-scaling-stdgauss}: we reiterate that (i) the contraction of the chains is insensitive to the scaling, (ii) the scaling $\lambda = 2.38$ and the acceptance rate of $23.4\%$ are close to optimal for an elliptical Gaussian target and (iii) a somewhat smaller step size may be necessary in practice, as it is crucial for the coupling to ensure that the chains accept their proposals at the same time, see Section~\ref{sec:sync-accept}.

\subsection{Numerical illustration} \label{sec:elliptical-numerics}

\begin{figure}[tb]
    \centering
    \includegraphics[width=\textwidth]{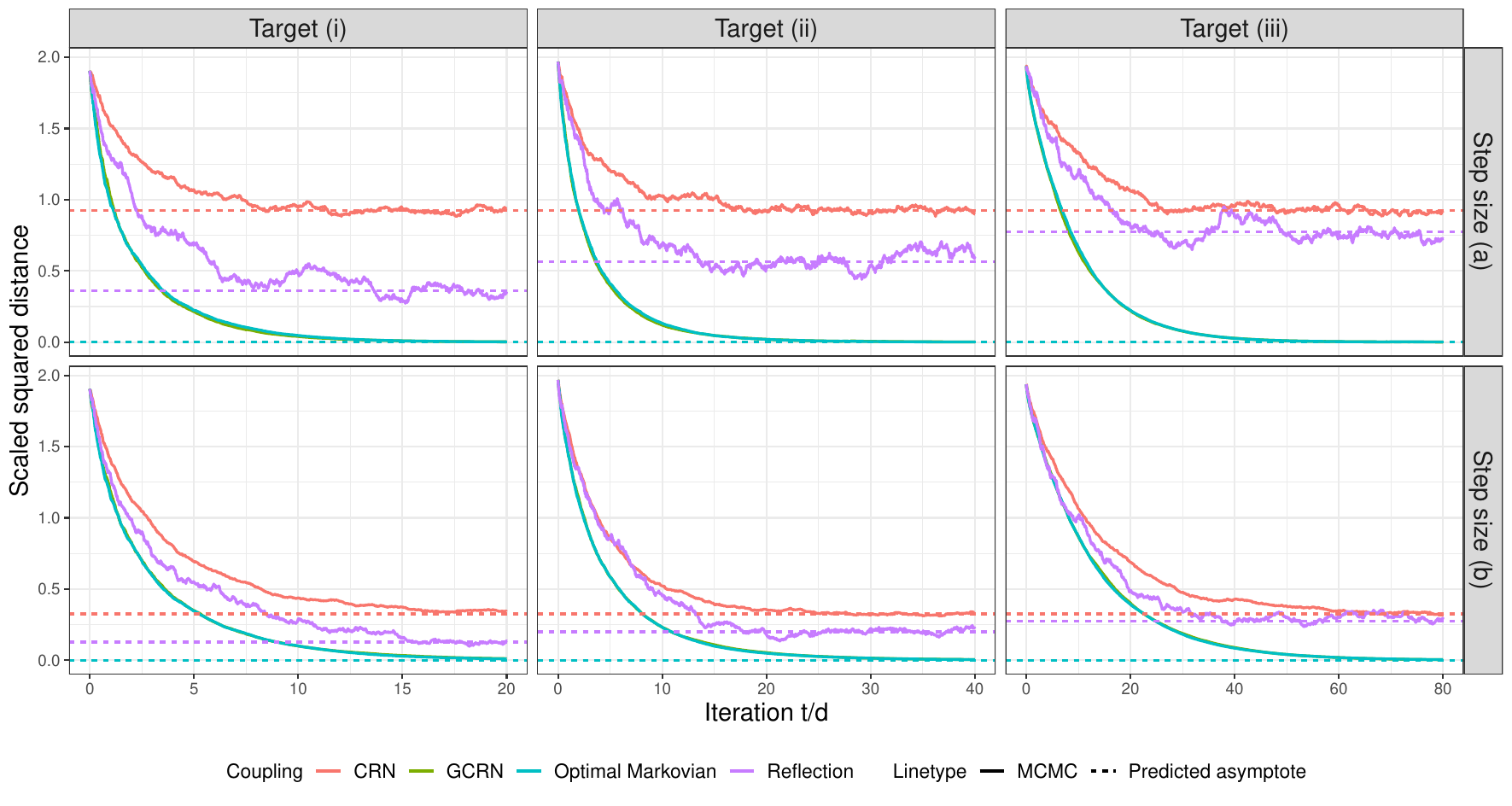}
    \caption{Trace of the scaled squared distance $\lVert X_t - Y_t \rVert^2 / \tr(\Sigma_d)$ and predicted long-time asymptote, for various targets $\pi^{(d)} = \mathcal{N}_d(0_d, \Sigma_d)$, couplings, and step sizes as in Section~\ref{sec:elliptical-numerics}.}
    \label{fig:elliptical-squaredist}
\end{figure}

We verify our findings with three targets $\pi^{(d)} = \mathcal{N}_d(0_d, \Sigma)$ of increasing limiting eccentricity $\ecc$:
\begin{itemize}
    \item[(i)] $\Sigma_{ij} = 0.5^{|i-j|}$ for all $(i,j)$: $\ecc = 5/3$. 
    \item[(ii)] $\Sigma = \diag(\sigma_{1}^2,\dots,\sigma_{d}^2)$ where $(\sigma_{i}^2)_{i\ge 1} \iid \chi^2_3$: $\ecc = 3$.
    \item[(iii)] $\Sigma = \diag(1,24,1,24,\dots)$: $\ecc = 25/4$.
\end{itemize}
Target (i) is an AR(1) process with autocorrelation $0.5$ and unit volatility and corresponds to Example~\ref{example:ar-p}. Target (ii) directly corresponds to Assumption~\ref{assumption:ellipt-2}. Target (iii) corresponds to Remark~\ref{remark:ode-ellipt}, where an ODE limit can be shown. We consider the same two step size scalings in the natural parameter (a) $\lambda = 2.38$ and (b) $\lambda = \sqrt{2}$ as in Section~\ref{sec:spherical-numeric}, as the former is optimal for the stationary phase and we expect the latter to perform well in the transient phase. We fix $d = 2{,}000$ and we always start the coupled chains independently from the target.

Figure~\ref{fig:elliptical-squaredist} shows that the MCMC traces resemble deterministic trajectories, hinting towards an ODE limit. For all couplings, the long-time behaviour of the chains is as predicted by our theory, with GCRN performing well and the two baselines being impractical. Remarkably, GCRN performs identically to the implementable asymptotically optimal Markovian coupling of Appendix~\ref{app:optimal-markovian}: here, as we show for a more general class of targets in Section~\ref{sec:prod-target}, it is because the GCRN coupling itself is asymptotically optimal Markovian when the chains are marginally stationary. As all evidence in the main text and in Appendix~\ref{app:optimal-markovian} indicates that GCRN is very nearly as efficient as the asymptotically optimal Markovian coupling, we favour parsimony and hereafter focus on GCRN.

\section{From theory to practice} \label{sec:in-practice}

In this section, we turn our scaling limit analysis into practical recommendations. Investigating the one-step dynamics of the coupled chains, we expose the necessity for a scalable coupling to synchronize acceptance events between chains. This prompts a study into what conditions are required for the GCRN to work well in practice, as well as a hybrid between the GCRN and reflection couplings, which we expect will coalesce the chains quickly even for eccentric targets.

\subsection{Necessity of synchronizing acceptance events} \label{sec:sync-accept}

Let us interpret the one-step dynamics of coupled RWM chains with step size $h = \bigO(d^{-1/2})$. We have that
\begin{equation} \label{eqn:onestep-dynamics}
	\lVert X_{t+1} - Y_{t+1} \rVert^2 - \lVert X_t - Y_t \rVert^2 = 2 h (X_t - Y_t)^\top (Z_{x} B_x - Z_{y}B_y) + h^2\lVert Z_{x} B_x - Z_{y} B_y \rVert^2.
\end{equation}
One might think of the first term of~\eqref{eqn:onestep-dynamics} as representing the \emph{contractive} part of the dynamics. Its expectation conditional on $(X_t,Y_t)$ is invariant to the coupling; our scaling limits (see e.g. the proof of Proposition~\ref{prop:drift-elliptgauss}) indicate that this expectation is roughly $h^2 (X_t - Y_t)^\top (\nabla \log \pi(X_t) - \nabla \log \pi(Y_t))$ times the acceptance rate, at least when the target is sufficiently regular and each marginal chain is close enough to stationarity. We therefore expect the contraction to be roughly exponential at rate $\bigO(h^2) = \bigO(d^{-1})$, particularly if the target is log-concave.

The second term of~\eqref{eqn:onestep-dynamics} represents the \emph{expansive} part of the dynamics. In particular, a linear $\Theta(1)$ increase is suddenly incurred whenever there is an imbalanced acceptance event, that is acceptance in one chain at the same time as a rejection in the other. This highlights the necessity to use a coupling that synchronizes acceptance events with high probability: for instance, if imbalanced acceptance events occur with $\Theta(1)$ probability, then the equilibrium between the contractive and expansive parts of the dynamics lies at $\lVert X_t - Y_t \rVert^2 = \Theta(d)$, so the coupling cannot scale. Our fixed-point results (Propositions~\ref{prop:fixedpoint-stdgauss} and~\ref{prop:fixedpoint-elliptgauss}) validate this behaviour.

To us, at least, the only principled way of synchronizing acceptance events is by paying careful attention to the acceptance steps, e.g. by exploiting additional local information about the target density, such as its logarithmic gradient as in the GCRN coupling. At the same time, our optimality results concerning GCRN suggest that, in order to construct scalable couplings, judicious inclusion of gradient information is to some extent sufficient. In the next section, we therefore gauge the extent to which the GCRN coupling might perform well in practice.

\subsection{When does GCRN work and when does it not?} \label{sec:prod-target}

To gain insight into what conditions are required for the GCRN coupling to scale well, we consider a general product-target setting with variable lengthscales as in \cite{roberts2001optimal}. To obtain a sensible limit theory, we fix the chains to be marginally stationary; we find that GCRN is asymptotically optimal for contraction among \emph{all Markovian couplings}. We leave extensions to non-stationary chains for further work. 

\begin{assumption}\label{assumption:product-target}
	Let $\pi^{(d)}(x) = \prod_{i=1}^d \omega_i f(\omega_i x)$ with $(\omega_i^2)_{i \ge 1}\iid \mu$, where $\mu$ has finite first moment and where $f:\mathbb{R} \to (0,\infty)$ is twice continuously differentiable with $(\log f)'$ Lipschitz continuous, $\mathbb{E}_{Y \sim f}[(\log f)'(Y)^8] < \infty$ and $\mathbb{E}_{Y \sim f}[(\log f)''(Y)^4]< \infty$.
\end{assumption}

\begin{theorem} \label{thm:gcrn-opt-product}
For all $d \ge 1$, let the target $\pi^{(d)}$ be as in Assumption~\ref{assumption:product-target}, and let the joint distribution of $(X_t,Y_t)$ be in $\Gamma (\pi^{(d)},\pi^{(d)})$, where $\Gamma(\mu,\nu)$ is the set of all couplings of the distributions $(\mu,\nu)$. Then, 
    \begin{equation*}
        \lim_{d\to\infty} \sup_{\bar{K} \in \mathcal{M}} \mathbb{E} \left[ h^2 Z_x^\top Z_y B_x B_y \right] = 2\ell^2\Phi \left(-\ell (bI)^{1/2}/2 \right),
    \end{equation*}
where $I = \mathbb{E}_{Y \sim f} [(\log f)'(Y)^2]$, $b = \mathbb{E}[\omega_i^2]$, and this supremum is attained by the GCRN coupling.
\end{theorem}

The optimality of GCRN is therefore not a purely Gaussian phenomenon: it occurs in Theorem~\ref{thm:gcrn-opt-product} because the variation in the Hessian term of the log-acceptance ratio is, relative to the gradient term, asymptotically negligible. In practice and for a well-tuned RWM algorithm, this condition is approximately satisfied when the logarithmic density of the target does not have any direction in which it varies particularly rapidly \citep[see][]{sherlock2013optimal}; equivalently, when none of the lengthscales of the target are much smaller than the average. We expect the GCRN coupling to perform well when this is the case, particularly if the target is also log-concave, with the ideal behaviour of $\lVert X_t - Y_t \rVert^2$ under GCRN being that of exponential contraction at rate $\bigO(h^2)$ to a small steady-state average.

Conversely, GCRN only explicitly accounts for first-order variation in the acceptance ratio so this coupling may perform poorly when the second-order terms in the acceptance ratio exhibit substantial variability. In practice, the issue can appear when the precision matrix of the target has a few very large eigenvalues. Preconditioning may alleviate the issue, as may reducing the step size; the latter reduces the relative variation from second-order terms in the acceptance ratio, increases the acceptance rate, and improves the contraction by forcing the chains to accept simultaneously more often. One can also harness stochasticity to enhance the contraction of the GCRN coupling, as we show in Section~\ref{sec:correctedrefl}.

\begin{remark}
    Theorem~\ref{thm:gcrn-opt-product} suggests good performance in finite dimensions, but it does not guarantee it. To formally establish the rate of contraction and the long-time behaviour under GCRN in any finite dimension would require a careful nonasymptotic analysis of the contractive and expansive terms of~\eqref{eqn:onestep-dynamics}, which we do not perform here. Instead, we verify the behaviour of GCRN empirically, and we recall the encouraging results in the Gaussian case (Sections~\ref{sec:fixedpoint-stdgauss} and~\ref{sec:fixedpoint-elliptgauss}).
\end{remark}

\begin{remark}
In \cite{roberts2001optimal}, the quantity $I$ of Theorem~\ref{thm:gcrn-opt-product} is interpreted as a marginal roughness measure, whereas the moment $b$ quantifies the variation across the coordinates of the target. In Assumption~\ref{assumption:ellipt-1}, considered in the elliptical Gaussian case, these correspond to $I=1$ and $b = z_1$.
\end{remark}

\subsection{The GCRefl coupling: combining contraction with stochasticity} \label{sec:correctedrefl}

In the diffusion literature, it is well-known that different couplings are suited to different purposes \citep{chenli1989couplings}. When the goal is coalescence, reflection couplings have been seen to be highly effective: in particular, they minimize meeting times in the case of coupled Brownian motions or Ornstein-Uhlenbeck processes \citep[e.g.][Chapter~3.4]{connor2007thesis}. However, we have seen that the reflection coupling of the RWM does not perform well for high-dimensional eccentric targets (Section~\ref{sec:elliptgauss}) as it is no longer able to synchronize acceptance events (Section~\ref{sec:sync-accept}).

We therefore propose a hybrid between the GCRN and reflection couplings, designed to combine the favourable properties of both of these. We call it the \textbf{GCRefl} (\emph{Gradient-Corrected Reflection}) coupling:
\begin{equation*}
\begin{aligned}
    Z_x &= Z - (e_x^{\top}Z)e_x + Z_\nabla e_x,\\
    Z_y &= Z - 2(e^\top Z) e - (e_y^{\top}Z)e_y + Z_\nabla e_y,
\end{aligned}
\end{equation*}
where: $e = \norm(X-Y)$; $e_x = \norm(n_x - (e^\top n_x)e )$ with $n_x = \norm(\nabla\log\pi(X_t))$ and similarly for $e_y$; $Z\sim\mathcal{N}_d(0_d,I_d)$ and $Z_\nabla \sim \mathcal{N}_1(0,1)$ are independent. We default to the reflection coupling when a vector to be normalized is null (this is always the case in dimension $d=1$). The proposed coupling starts from the reflection coupling, then applies a GCRN-like correction in order to increase the rate of simultaneous acceptances.

One can roughly interpret the one-step dynamics of $\lVert X_t -Y_t\rVert$ under the GCRefl coupling as exponential contraction at rate $\bigO(h^2)$ together with a random walk with increments $\bigO(h)$. The exponential contraction dominates when $\lVert X_t - Y_t\rVert = \bigO(d^{1/2})$, whereas the stochasticity dominates when $\lVert X_t - Y_t\rVert = \bigO(1)$. Borrowing intuition from the case of diffusions, this added stochasticity should help drive the chains towards coalescence. Indeed, as we will see in the experiment of Section~\ref{sec:rwm-vs-mala}, the reflection move of GCRefl is particularly helpful in scenarios where GCRN is only able to contract the chains to within $\bigO(1)$ squared distance.

\section{Numerical experiments} \label{sec:applications}

In this section, we illustrate practical applications of our proposed couplings, in particular the estimation of the rate of convergence and asymptotic variance of the RWM, as well as of the bias of an approximate sampling method. We also use a natural extension to the GCRN coupling to devise an effective coupling for the Hug and Hop algorithm \citep{ludkin2022hug} and to quantify the rate of convergence of this algorithm. Finally, we perform a comparative study of coupled RWM and MALA kernels, with a view towards unbiased MCMC. Our discrepancies of choice in this section are the total variation distance and the squared 2-Wasserstein distance
\begin{equation*}
   \tvd(\mu, \nu) = \inf_{ (X,Y) \in \Gamma(\mu,\nu)}\mathbb{E}[\mathbbm{1}\{X = Y\}], \quad \wass{2}^2(\mu, \nu) = \inf_{ (X,Y) \in \Gamma(\mu,\nu)}\mathbb{E}\left[\| X - Y\|^2\right].
\end{equation*}
When estimating the rate of convergence, the bound~\eqref{eqn:wass-general-ub} for the total variation distance simplifies to $\tvd(\pi_t, \pi) \le \mathbb{E}[ 0 \lor \ceil\{ (\tau - t)/ L\} ]$, see \cite{biswas2019estimating}.

Throughout, we have striven to tune the considered algorithms and couplings close to optimally. We defer additional details to Appendix~\ref{app:computation}, including: (i) further algorithmic descriptions, (ii) experiments regarding parameter choice, (iii) further discussion on the contractivity of the couplings used, (iv) alternative coupling strategies.

\subsection{Rate of convergence of the RWM} \label{sec:numeric-conv}

\begin{figure}[tb]
    \centering
    \includegraphics[width=\textwidth]{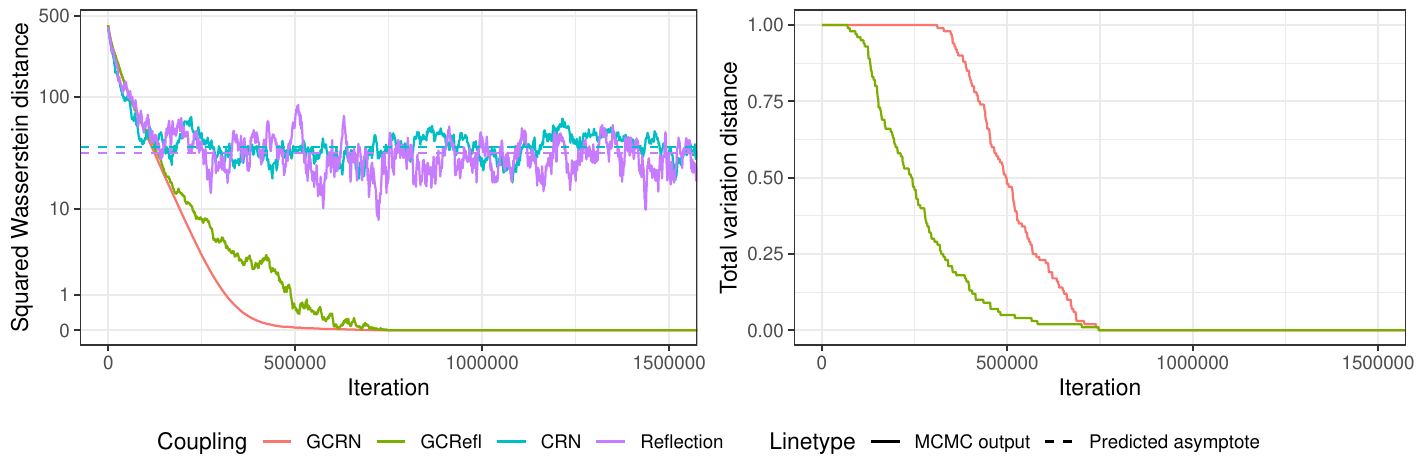}
    \caption{Rate of convergence of the RWM targeting the SVM, when started from the prior. We estimated upper bounds on $\wass{2}^2(\pi_t, \pi)$ and $\tvd(\pi_t, \pi)$ using only the GCRN and GCRefl couplings. The reflection-maximal and CRN couplings were not sufficiently contractive for this problem, as illustrated by traces of the squared distance $\lVert X_{t} - Y_t \rVert^2$ in the left plot.}
    \label{fig:svm-convergence}
\end{figure}

We illustrate the effectiveness of our proposed couplings at quantifying the rate of convergence of the RWM in a challenging high-dimensional setting. We target the posterior distribution~$\pi(x_{1:d} \mid y_{1:d})$ of a stochastic volatility model \citep[SVM;][Section 9.6.2]{liu2001monte}:
\begin{equation*}
\begin{aligned}
    y_i \mid x_i &\sim \mathcal{N}_1(0,\beta^2\exp(x_i)) &&\textup{for } i \in\{1,\dots,d\},\\
    x_{i + 1} \mid x_i &\sim \mathcal{N}_1(\varphi x_i, \sigma^2) &&\textup{for } i \in\{ 1,\dots,d-1\}, \\
    x_1 &\sim \mathcal{N}_1\left(0, \sigma^2/(1 - \varphi^2)\right).
\end{aligned}
\end{equation*}
We fix the dimension to $d = 360$, hyperparameters to $(\beta, \varphi, \sigma) = (0.65,0.98,0.15)$, and generate the data from the model. We fix the starting distribution to be the prior $\pi_0 = \pi(x_{1:d})$. Our goal is to estimate upper bounds on $\tvd(\pi_t, \pi)$ and $\wass{2}^2(\pi_t, \pi)$ as described in Section~\ref{sec:background}.

Before running MCMC, we compute a Laplace approximation $\hat\pi = \mathcal{N}_d(\hat\mu, \hat\Sigma)$. This suggests the step size $h = 2.38/\tr(\hat\Sigma^{-1})^{1/2}$, which we employ and empirically verify as corresponding to a near-optimal acceptance rate of $23\%$. To predict the long-time behaviour of the chains under the CRN and reflection couplings, we plug $\hat\pi$ into Equation~\eqref{eqn:predict-asymptote-elliptgauss}.

To estimate the rate of convergence, we require our couplings to coalesce the chains in finite time. The GCRN and GCRefl couplings cannot produce exact meetings on their own; instead, we opt for a two-scale approach, employing a contractive coupling when the chains are at least $\lVert X_t - Y_t \rVert^2 \ge \delta$ apart, and otherwise (when $\lVert X_t - Y_t \rVert^2 < \delta$) employing the reflection-maximal coupling. Here and in subsequent experiments with two-scale couplings, we select the switching threshold $\delta$ using a grid search on a logarithmic scale: the general trend is that the meeting times are insensitive to choosing $\delta$ smaller than optimal. We leave a formal investigation into the optimal choice of $\delta$ for further work. Our chosen thresholds are $(\delta_\textup{gcrn}, \delta_\textup{gcrefl}) = (0.1, 0.001)$ in this experiment, and we use a large lag of $L = 1.5\times 10^6$ and $100$ independent replicates to compute each bound.

The numerical results are displayed in Figure~\ref{fig:svm-convergence}. Both the two-scale GCRN and GCRefl couplings effectively quantify the rate of convergence of the RWM algorithm. GCRN attains the sharper squared Wasserstein distance bound as it is focused on contraction, while GCRefl attains a sharper total variation distance bound as it is focused on coalescence. In contrast, the reflection-maximal coupling is severely hindered by the eccentricity of the target: the coupling is capable of exact meetings, but as the chains stay far apart the probability of coalescing is effectively null at all iterations. The long-time behaviour of the couplings is accurately predicted by our theory.

\subsection{Bias of approximate sampling}

\begin{figure}[tb]
    \centering
    \includegraphics[width=\textwidth]{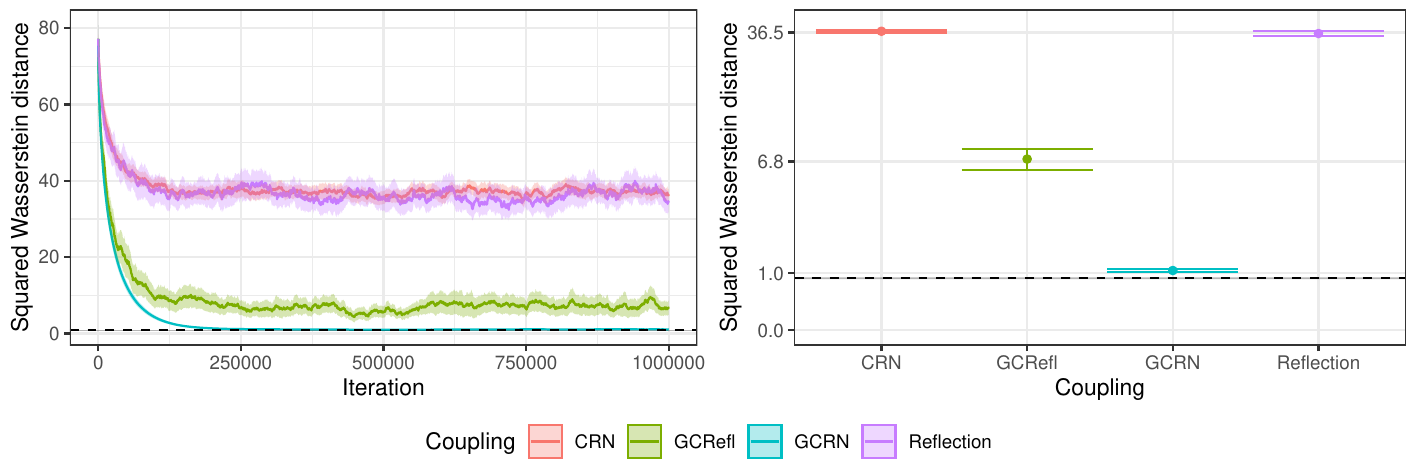}
    \caption{Bias of the Laplace approximation for the SVM. \textbf{Left}: Per-iteration sample average upper bound estimates. \textbf{Right}: Point estimates of upper bounds with $(S,T) = (3.5 \times 10^5,10^6)$. The dashed line is the lower bound of \cite{gelbrich1990on}. All estimates are shown with $\pm2$ standard errors (in either shaded regions or error bars).}
    \label{fig:svm-laplace-bias}
\end{figure}

Couplings can also be used to estimate the bias of approximate sampling procedures \citep{biswas2023bounding}. Motivated by the accuracy of the quantities inferred from the Laplace approximation $\hat\pi$ of the SVM target $\pi$ of Section~\ref{sec:numeric-conv} (such as the optimal step size $h$ and the expected long-time behaviour under the CRN and reflection couplings), we use our proposed couplings to compute upper bounds on $\wass{2}^2(\hat\pi, \pi)$.

The upper bounds are computed with a small modification to the method of Section~\ref{sec:background}. We set the lag to be $L=0$, target the exact distribution $\pi$ with the $X$-chain and its approximation $\hat\pi$ with the $Y$-chain. Assuming that the chains start marginally stationary, irrespective of their coupling it holds \citep{biswas2023bounding} that
\begin{equation*}
	\wass{2}^2(\hat\pi, \pi) \le \sum_{t = S}^T \mathbb{E}\left[\lVert X_t -Y_t \rVert^2 \right],
\end{equation*}
for any integer $T \ge S \ge 0$. In practice, we replace expectations by empirical averages and due to burn-in the bound only holds asymptotically as $T \to \infty$. In our experiment, we start the chains independently from $X_0 \sim \pi$ (using a long MCMC run) and $Y_0 \sim \hat\pi$. To compute upper bounds, we use natural extensions to the GCRN and GCRefl couplings, changing $n_y \leftarrow \hat n_y = \norm(\nabla\log\hat\pi(Y_t))$. As coalescence is not the goal here, none of our couplings attempt to make the chains meet. We use $100$ independent replicates to compute upper bounds, with the same step size $h$ as in Section~\ref{sec:numeric-conv}. We also compute \citep[see][]{gelbrich1990on} the lower bound $\wass{2}^2(\mathcal{N}_d(\hat\mu, \hat\Sigma), \mathcal{N}_d(\mu, \Sigma)) \le \wass{2}^2(\hat\pi, \pi)$, where $(\mu,\Sigma)$ are the mean and covariance of $\pi$.

The numerical results are displayed in Figure~\ref{fig:svm-laplace-bias}. The GCRN coupling produces the most informative upper bound on the bias of the Laplace approximation in Wasserstein distance, which is also remarkably sharp compared to the lower bound. For context, the upper bound implies the relative error bound $\lVert \hat\mu - \mu\rVert / \lVert \mu\rVert \le 10\%$. Compared to the Wasserstein distance, the total variation distance is a much more restrictive metric in higher dimensions, and as a consequence non-trivial bounds on $\tvd(\hat\pi, \pi)$ are harder to obtain \citep{biswas2023bounding}. Nonetheless, by adapting the coalescive two-scale GCRefl coupling, we were able to obtain the bound $\tvd(\hat\pi, \pi) \le 0.964 \, (\pm 0.003)$ to two standard errors.

\subsection{Coupling the Hug and Hop algorithm} \label{sec:hh}

\begin{figure}[tb]
    \centering
    \includegraphics[width=\textwidth]{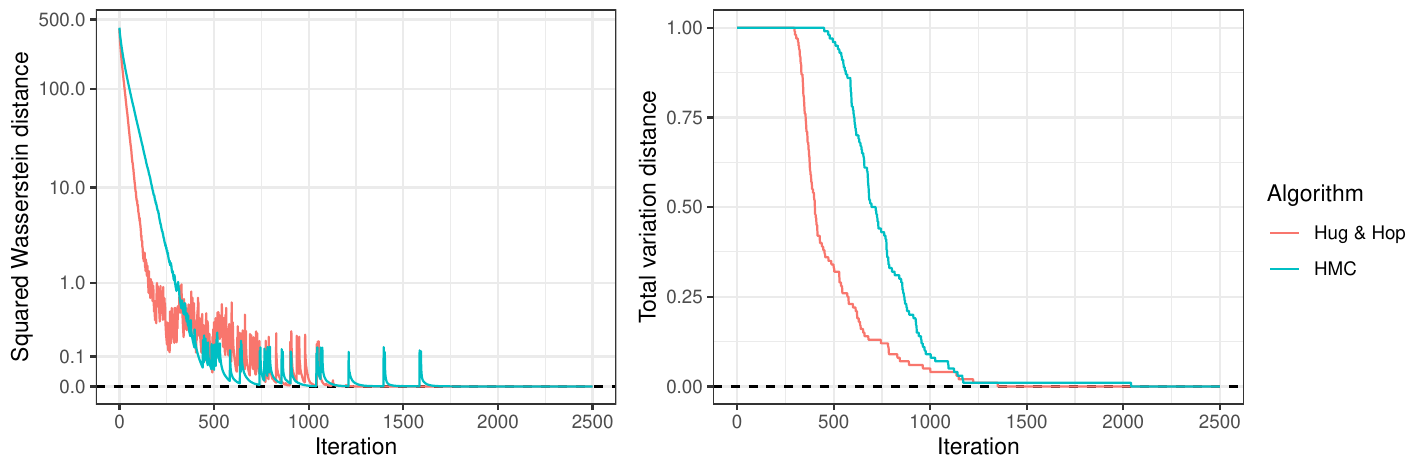}
    \caption{Rate of convergence of the H\&H and HMC algorithms targeting the SVM, when started from the prior. See Section~\ref{sec:hh} of the main text for details on these upper bound estimates.}
    \label{fig:hh}
\end{figure}

The stochastic volatility model considered in the previous sections provided a challenging test case for the RWM and its couplings. The RWM was not necessarily a practical algorithm for the problem; one would have preferred a more scalable gradient-based algorithm such as Hamiltonian Monte Carlo \citep[HMC;][]{duane1987hybrid,neal2011} or the recently-proposed Hug and Hop \citep[H\&H;][]{ludkin2022hug}. Our goal in this section is two-fold: to study the rate of convergence of H\&H with couplings and to demonstrate that the coupling ideas we developed for the RWM extend to other algorithms.

H\&H alternates between the skew-reversible Hug kernel, which proposes large moves by approximately traversing the same level set of the log-target, and the Hop kernel, which crosses between level sets by encouraging large moves in the gradient direction. Hop uses Gaussian proposals centered at the current state; the projection of the proposal onto the subspace orthogonal to the gradient is isotropic, and the projection onto the gradient uses a much larger scaling than each coordinate of its orthogonal counterpart. Due to its similarities to the RWM kernel and its use of gradient information, GCRN-like couplings are natural for the Hop kernel; one might also expect such couplings to be contractive. To couple Hop proposals, we must couple Gaussians with different covariance matrices; whereas GCRN can be straightforwardly extended to this case, extending the reflection-maximal coupling appears significantly more challenging \citep{corenflos2022coupled-rejection}.

Our coupling of H\&H kernels is inspired by contractive couplings of HMC and RWM kernels and does not require additional gradient evaluations compared to, say, simulating two H\&H chains independently. We synchronize the momenta in Hug proposals; such a coupling contracts HMC proposals \citep[e.g.][]{heng2019unbiased}, and due to the parallels of Hug and HMC we expect the contractivity to carry over to Hug. For Hop, we use a two-scale coupling which aims to contract the chains when far apart and allow for exact meetings when close together. When $\lVert X_t - Y_t \rVert^2 \ge \delta$, we couple Hop proposals according to the GCRN coupling
\begin{equation*}
\begin{aligned}
    X_p &= X_t + \frac{\lambda}{\gamma_x} Z_\nabla n_x + \frac{(\lambda\kappa)^{1/2}}{\gamma_x} \big\{Z - (Z^\top n_x) n_x\big\},\\
    Y_p &= Y_t + \frac{\lambda}{\gamma_y} Z_\nabla n_y + \frac{(\lambda\kappa)^{1/2}}{\gamma_y} \big\{Z - (Z^\top n_y) n_y\big\},
\end{aligned}
\end{equation*}
where: $\gamma_x = \lVert \nabla\log\pi(X_t) \rVert$, $n_x = \nabla\log\pi(X_t)/\gamma_x$ and similarly $\gamma_y$ and $n_y$; $Z\sim\mathcal{N}_d(0_d,I_d)$ and $Z_\nabla\sim\mathcal{N}_1(0,1)$ are independent. When $\lVert X_t - Y_t \rVert^2 < \delta$, we sample the proposals from a maximal coupling with independent residuals \citep[Algorithm~2]{jacob2020unbiased}. We encourage simultaneous acceptance in both both the Hug and Hop kernels by synchronizing the uniform acceptance variates.

We also compare H\&H with HMC, following \cite{heng2019unbiased}: we use a synchronous coupling of HMC kernels and mix in coupled RWM kernels to allow the chains to meet. As HMC is particularly sensitive to its tuning parameters, we perform extensive experimentation to optimize the contractivity of this algorithm; to ensure a fair comparison, we also do this for Hug. We find that HMC suffers from a sharp phase transition, with long integration times adversely affecting the contractivity of the coupling, whereas Hug does not have this issue. For both HMC and Hug, the acceptance rates at the optimally contractive parameters are higher than would be optimal for mixing \citep{beskos2013optimal,ludkin2022hug}; this is linked to contraction being adversely affected by acceptance in one chain but rejection in the other, as discussed in Section~\ref{sec:in-practice}.

Figure~\ref{fig:hh} shows estimated upper bounds on the rate of convergence of H\&H and HMC, computed with a lag of $L = 2500$ and with $100$ replicates each. H\&H coalesces quicker than HMC here, while also being more robust with respect to tuning. Both algorithms coalesce two orders of magnitude quicker than the RWM, suggesting that they would be suitable for suitable for unbiased MCMC in this setting.

\subsection{Comparison of the RWM and MALA algorithms} \label{sec:rwm-vs-mala}

We consider a logistic regression posterior $\pi$ on the UCI Sonar dataset with $(n,d) = (208,61)$ observations and regressors (covariates and intercept). Following standard practice \citep{gelman2008weakly}, we standardize the covariates to scale 0.5, then place a spherical Gaussian prior $\mathcal{N}_{d}(0_d, 25 I_{d})$ jointly on the regressors. The RWM and the Metropolis-adjusted Langevin algorithm (MALA; e.g. \citealp{roberts1998optimal}) perform well in such problems of moderate dimension and with relatively few observations \citep{chopin2017leave}. Our goal here is to compare these algorithms in a realistic setting, with a view towards unbiased MCMC. We are therefore interested both in their time to coalescence, as well as their performance at stationarity.

To ensure a fair comparison between algorithms, we emulate lagged couplings with large lag parameters $L$ by always simulating couplings between a stationary $X$-chain and a $Y$-chain which is started at the posterior mean. Alongside meeting times, this set-up allows us to obtain \emph{unbiased} estimates of the asymptotic variance of the MCMC algorithms using the ``EPAVE" estimator of \cite{douc2023solving}, see Appendix~\ref{app:binreg}. As is commonly done in practice, we precondition the RWM and MALA proposals to, respectively:
\begin{equation*}
X_t + (h P) Z_x, \quad X_t + \big(h^2PP^\top/2\big)\nabla\log\pi(X_t) + (h P) Z_x,
\end{equation*}
where $Z_x\sim\mathcal{N}_d(0_d, I_d)$ and where $P\in\mathbb{R}^{d \times d}$ is estimated from a preliminary run. The preconditioning requires the change $n_x = \norm(P^T \nabla\log\pi(X_t))$ in our gradient-based couplings and $e = \norm(P^{-1} (X_t-Y_t))$ in our reflective couplings, see Appendix~\ref{app:precondition}. We consider two choices for the preconditioner $P$: a diagonal matrix corresponding to the standard deviations of the target marginals and the ``full" Cholesky factor of the target covariance matrix.

In preliminary experiments (see Appendix~\ref{app:binreg}) we found reflective couplings to be particularly useful for this problem. For the RWM, we select a two-scale coupling which applies GCRefl when $\lVert P^{-1}(X_t-Y_t) \rVert^2 \ge \delta$, and otherwise applies a reflection-maximal coupling, with thresholds $\delta_\textup{diag} = 10h^2$ and $\delta_\textup{full} = 1$ depending on the preconditioner $P$. For the diagonal preconditioner, GCRefl was the only practical coupling; other couplings suffered disproportionately from the fact that the preconditioned target was highly skewed, with the largest eigenvalue of the preconditioned covariance being nearly three-tenths of the trace. For the full preconditioner, the reflection coupling was practical, but we found GCRefl to outperform it even after accounting for the additional gradient evaluations. For MALA, we use a reflection-maximal coupling. We emphasize that both the RWM and MALA couplings require access to the gradient oracle.

\begin{figure}[tb]
    \centering
    \includegraphics[width=\textwidth]{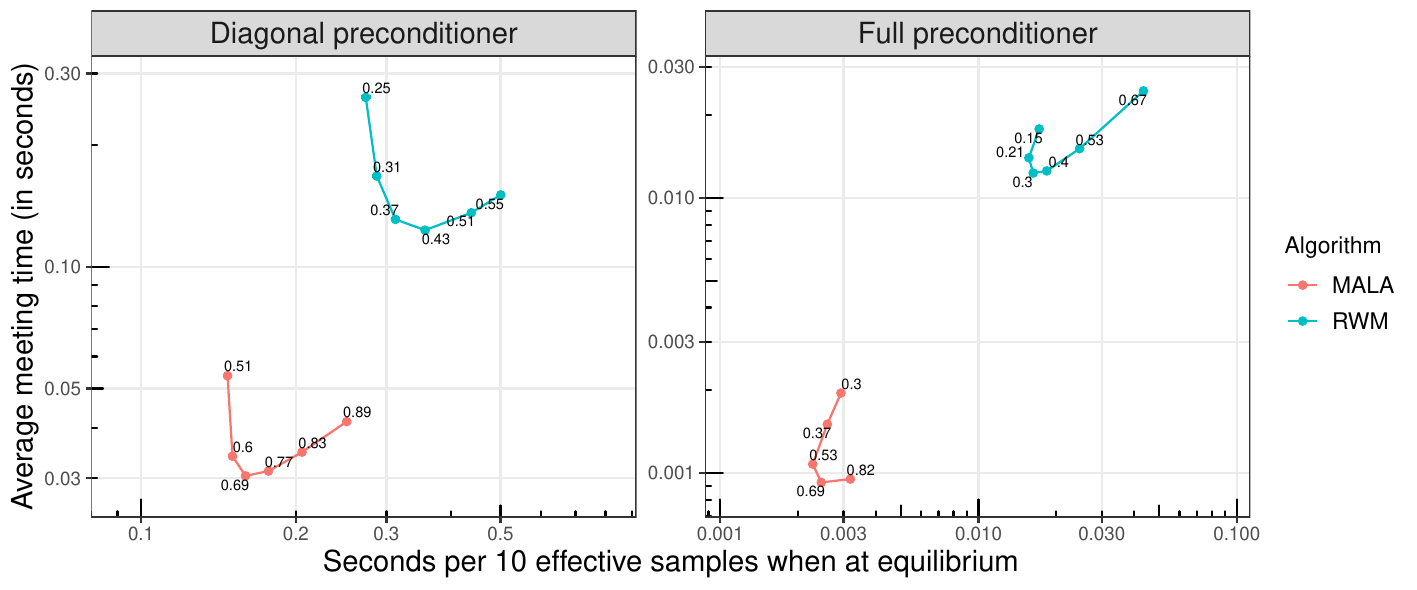}
    \caption{Comparison of RWM and MALA as in Section~\ref{sec:rwm-vs-mala}, varying the step size parameter~$h$. All estimates are shown with two standard errors. The acceptance rate for each step size is overlaid.}
    \label{fig:binreg}
\end{figure}

Figure~\ref{fig:binreg} compares the RWM and MALA algorithms in terms of wall-time. (See Appendix~\ref{app:binreg} for a figure which is not scaled by wall-time.) To measure mixing at stationarity, we consider the problem of estimating the regression coefficients and we report values corresponding to the smallest effective sample size across all coordinates. When using a diagonal preconditioner and optimizing for mixing/meeting, the RWM is $2\times$/$4\times$ slower than MALA. This may seem surprisingly close; it stems from the fact that MALA is more sensitive to the eccentricity of the target than the RWM \citep{livingstone2022barker}, so that both samplers must use similar step sizes in our setting. The efficiency gap widens when using a full preconditioner; optimizing for mixing/meeting, the RWM is $7\times$/$13\times$ slower than MALA. Translating these relative efficiencies to the unbiased MCMC setting, mixing efficiency is more relevant when a small number of long chains are run, and coupling efficiency is more relevant when a large number of short chains are run; we expect MALA to more clearly outperform the RWM in the latter regime. In passing, there is a trade-off between coupling and mixing efficiency when using the diagonal preconditioner, whereas using the full preconditioner mitigates this issue.

\section{Discussion} \label{sec:discussion}

The main takeaway from this paper is the following: in order to design effective couplings of the RWM algorithm, one should pay careful attention to the coupling of the acceptance steps. We have shown that, by making judicious use of gradient information so as to synchronize acceptance events with high probability, one can design contractive couplings which remain effective even in high-dimensional regimes. We have demonstrated the effectiveness of our proposed GCRN and GCRefl couplings at estimating the rate of convergence and the asymptotic variance of the RWM algorithm. We have also demonstrated how the contractivity of the GCRN coupling can be leveraged to estimate the bias of an approximate sampling method.

The utility of our proposed couplings of the RWM for unbiased MCMC is less clear. They demand access to a gradient oracle; it is unclear to us how one could devise similarly scalable couplings of the RWM algorithm without such information, yet access to this oracle also enables the use of more scalable gradient-based algorithms altogether, such as MALA and HMC. We have exhibited one moderate-dimensional multi-scale setting where the gap between the RWM and MALA is not as insurmountable as one might expect a priori. Random walk proposals are also a competitive alternative in the pseudo-marginal setting \citep{andrieu2009pseudo-marginal}. \cite{middleton2020unbiased} demonstrate the use of simple maximal couplings for unbiased pseudo-marginal MCMC; devising improved couplings will however require additional design considerations to ours, as one must also explicitly account for the noise induced by the unbiased estimator of the target density.

Our work points to several avenues of investigation. Methodologically, we expect the framework of asymptotic optimality to yield improved couplings for other MCMC algorithms. Indeed, we have demonstrated how a straightforward extension of our ideas yielded an effective coupling of the Hug and Hop algorithm that is well-suited to unbiased MCMC, as its per-iteration cost is no larger than that of two independent chains. For MALA, preliminary results (not included) suggest that, as for the RWM, higher-order derivative information than that available in the proposal should be incorporated in an asymptotically optimal coupling. Theoretically, contractive couplings have been used to obtain quantitative convergence rates for MCMC algorithms, e.g. MALA \citep{eberle2014error} and HMC \citep{bou-rabee2020coupling}. Our scaling limits suggest that, by adapting the couplings proposed in this paper, a sharp $\bigO(d)$ dimensional dependence for the RWM \citep{andrieu2024explicit} could be recovered. Finally, extensions of our scaling limits beyond the Gaussian case may be possible \citep{kuntz2019diffusion}, and a formal explanation of the success of the proposed couplings could be obtained by establishing deeper connections between the Langevin diffusion and the RWM.

\section*{Acknowledgements}

This paper is based on work completed while Tam\'as P. Papp was part of the EPSRC-funded STOR-i Centre for Doctoral Training (EP/S022252/1). Tam\'as P. Papp was also supported by the EPSRC grant CoSInES (EP/R034710/1). We acknowledge insightful discussions with Sam Power and Pierre E. Jacob. We thank the associate editor and the three anonymous reviewers for their comments, which substantially improved this manuscript.

\putbib
\end{bibunit}

\appendix

\begin{bibunit}

\section{Unbiased MCMC with couplings} \label{app:unbiased-mcmc}

The lagged coupling framework recalled in Section~\ref{sec:background} can also be used for the unbiased estimation of expectations of test functions $h(\cdot)$ with respect to the target $\pi$ \citep{jacob2020unbiased,biswas2019estimating,douc2023solving}. We require additional conditions on the tail decay of the meeting time $\tau$ and the moments of the function of interest $h(\cdot)$, namely:
\begin{enumerate}
	\item There exist $\delta \in (0,1)$ and $C>0$ such that $\mathbb{P}(\tau > t) \le C\delta^t$ for all $t \ge 0$.
	\item It holds that $\lim_{t \to \infty}\mathbb{E}[h(Y_t)] = \mathbb{E}_\pi[h(Y)]$. Additionally, there exist $\eta,C > 0$ such that $\mathbb{E}[h(Y_t)^{2+\eta}] \le C$ for all $t \ge 0$.
\end{enumerate}
Then, for any integer $m \ge k \ge 0$, the following is an unbiased estimator of $\mathbb{E}_\pi[h(X)]$:
\begin{equation*}
    H_{k:m} = \frac{1}{m+k-1}\sum_{t=k}^{m} h(X_{t-L}) + \sum_{t=k}^{\tau-1} \frac{\gamma(t;k,m,L)}{m-k+1} \{h(X_t) - h(Y_t)\} 
\end{equation*}
where $\gamma(t;k,m,L) = 1 + \lfloor (t-k)/L \rfloor - \lceil 0 \lor (t-m)/L \rceil$. The assumptions also ensure that $H_{k:m}$ has finite variance, so that an average of i.i.d. copies of $H_{k:m}$ is guaranteed to converge at the Monte Carlo rate. In turn, this enables principled parallel MCMC, through averages of estimators computed through pairs $(X,Y)$ simulated in parallel. 

It is clear that the meeting time $\tau$ imposes a lower bound on the length of the simulation. Numerical evidence also indicates that the variance of the estimator grows with the meeting time. It is therefore important to design couplings which ensure that the meeting times stay small, say, on average.

See \citet[Appendix~B]{douc2023solving} for a derivation of the estimator $H_{k:m}$; the form originally given in the rejoinder of \cite{jacob2020unbiased} is unfortunately incorrect. The geometric tail decay condition for the meeting time $\tau$ was relaxed to polynomial in \cite{middleton2020unbiased}, however the moment condition on $h(\cdot)$ appears crucial. We briefly discuss the use of couplings for the unbiased estimation of the asymptotic variance of an MCMC algorithm \citep{douc2023solving} in Appendix~\ref{app:rwm-vs-mala}.

\section{Additional discussion on couplings of the RWM} \label{app:extra-couplings}

\subsection{Asymptotically optimal Markovian coupling} \label{app:optimal-markovian}

We recall the approximations that motivated the GCRN coupling. Let $V(x) = \log \pi(x)$. Using firstly a Taylor expansion, the acceptance indicator at $X_t = x$ is
\begin{align*}
    B_x 
	&\approx \mathbbm{1}\big\{\log U_x \le h Z_x^\top \nabla V(x) + h^2 Z_x^\top \nabla^2 V(x) Z_x \big\},\\
	&\approx \mathbbm{1}\big\{\log U_x \le g(x) Z_{\nabla x} + c(x) \big\},
\end{align*}
where: $Z_{\nabla x} = Z_x^\top n_x \sim \mathcal{N}_1(0, 1)$ with $n_x = \nabla V(x) / \lVert \nabla V(x)\rVert$; $g(x) = h \lVert \nabla V(x)\rVert$ and $c(x) = h^2 \tr(\nabla^2 V(x))$ are constants. Scaling the step size as $h = \ell d^{-1/2}$ with the dimension $d$ and assuming standard asymptotics as $d \to \infty$ (as in e.g. \citealp{roberts1997weak}), the above approximations are sharp in the limit.

These observations motivated the GCRN coupling and were used to prove that it was asymptotically optimal for contraction over the class of product couplings~$\mathcal{P}$. However, by considering the random variables $\xi_x =  g(x) Z_{\nabla x} - \log U_x$ and $\xi_y =  g(y) Z_{\nabla y} - \log U_y$ directly, it is possible to construct a coupling that is asymptotically optimal over the entire class of Markovian couplings~$\mathcal{M}$.

\subsubsection{An implementable optimal coupling} 

We seek a coupling of RWM kernels which is optimally contractive with respect to the squared Euclidean distance, so equivalently we seek to maximize $\mathbb{E}[ h^2 Z_x^{\top} Z_y B_x B_y \mid (X_t, Y_t) = (x,y)]$. To derive an implementable optimal coupling, we use the following upper bound on the objective:
\begin{equation}\label{eqn:opt-mark-obj-bound}
\begin{aligned}
\mathbb{E}\left[ h^2 Z_x^{\top} Z_y B_x B_y \mid (X_t, Y_t) = (x,y) \right]
	&\lessapprox \ell^2 \mathbb{E}[ \mathbbm{1}\{0 \le \xi_x  + c(x) \} \mathbbm{1}\{0 \le \xi_y  + c(y) \}],\\
	&\le \ell^2 \mathbb{P}(0 \le \xi_x + c(x) ) \land  \mathbb{P}(0 \le \xi_y + c(y)),
\end{aligned}
\end{equation}
where the first approximate inequality is asymptotically sharp as $d\to \infty$, and the second inequality is trivial. The first inequality is satisfied if $Z_x \approx Z_y$ up to a low-rank perturbation. Using a standard optimal transport argument \citep[Remark~2.19]{villani2003topics}, one can show that an optimal coupling of $(\xi_x, \xi_y)$ which attains the second inequality is $\xi_x = F^{-1}_x(U)$ and $\xi_y = F^{-1}_y(U)$, where $U \sim \unif(0,1)$ and $F_{x,y}$ are the respective CDFs of $\xi_{x,y}$. Since $\xi_{x}$ only constrains one coordinate of $Z_{x}$ (and since $\xi_{y}$ similarly constrains $Z_{y}$), we can construct a coupling which renders both inequalities of Equation~\eqref{eqn:opt-mark-obj-bound} asymptotically tight.

\begin{algorithm}
  \caption{Asymptotically optimal Markovian coupling \label{alg:opt-mark-coup}}
  \begin{algorithmic}[1]
    \Require{Target density $\pi:\mathbb{R}^d \to \mathbb{R}$, score $\nabla\log\pi:\mathbb{R}^d \to \mathbb{R}^d$, step size $h >0$, $g(\cdot) = h\lVert \nabla\log\pi(\cdot) \rVert$ and $n(\cdot) = h\nabla\log\pi(\cdot) /g(\cdot)$, current state $x \in \mathbb{R}^d$.}
    \Require{Inverse CDF function $F^{-1}(\cdot \mid \mu,\sigma^2, \lambda)$ of $\textup{EMG}(\mu,\sigma^2, \lambda)$ distribution.}

\Statex
\State Sample $U\sim\textup{Unif}(0,1)$.
\State Set $\xi_x = F^{-1}(U \mid 0, g(x)^2, 1)$ and $\xi_y = F^{-1}(U \mid 0, g(y)^2, 1)$. \Comment{Optimal coupling}

\Statex
\State Sample $Z_{\nabla x} \sim \overline{\mathcal{N}}(g(x), 1 \mid \xi_x /g(x))$ and $Z_{\nabla y} \sim \overline{\mathcal{N}}(g(y), 1 \mid \xi_x /g(y))$. \Comment{Used Lemma~\ref{lemma:emg-conditionals}}
\State Set $\log U_{\nabla x} =  g(x)Z_{\nabla x} - \xi_x$ and $\log U_{\nabla y} = g(y)Z_{\nabla y} - \xi_y$.

\Statex
\State Sample $Z \sim \mathcal{N}_d(0_d, I_d)$.
\State Set $Z_x = Z + \{Z_{\nabla_x} - n(x)^\top Z \} n(x)$ and $Z_y = Z + \{Z_{\nabla y} - n(y)^\top Z \} n(y)$. \Comment{As in GCRN}

\Statex
\State \Return{$(Z_x, Z_y)$ and $(U_x, U_y)$}
  \end{algorithmic}
\end{algorithm} 

Algorithm~\ref{alg:opt-mark-coup} describes our proposed modification to the GCRN coupling.\footnote{The coupling in Line~3 of Algorithm~\ref{alg:opt-mark-coup} is arbitrary.} 
The algorithm induces a coupling of RWM kernels which is asymptotically optimally contractive coupling over $\mathcal{M}$ in the regimes considered in this paper.\footnote{We omit the proof, which follows similar lines to that of Theorem~\ref{thm:gcrn-opt-elliptgauss}.} 

To construct Algorithm~\ref{alg:opt-mark-coup}, we exploited that $\xi_x \sim \textup{EMG}(0,g^2(x),1)$, where $\textup{EMG}(\mu,\sigma^2, \lambda)$ denotes the \emph{exponentially modified Gaussian} distribution \citep[EMG;][]{grushka1972emg}, the distribution of the convolution of a Gaussian $\mathcal{N}_1(\mu,\sigma^2)$ variate and an exponential variate with rate $\lambda$. The correctness of Algorithm~\ref{alg:opt-mark-coup} stems from the conditional distributions of EMG random variables, see Lemma~\ref{lemma:emg-conditionals} below. Algorithm~\ref{alg:opt-mark-coup} is implementable using numerical inversion, as the cumulative distribution function $F(\cdot \mid \mu, \sigma^2, \lambda)$ of an $\textup{EMG}(\mu,\sigma^2,\lambda)$ variable  has the tractable expression,
\begin{equation*}
	F(x \mid \mu, \sigma^2, \lambda) = \Phi(x \mid \mu,\sigma^2) -\frac{1}{2}\exp\left(\frac{\lambda}{2}(2\mu+\lambda \sigma^2 - 2x)\right)\textup{erfc}\left( \frac{\mu + \lambda \sigma^2 - x}{\sqrt{2}\sigma} \right),
\end{equation*}
where $\Phi(\cdot \mid \mu,\sigma^2)$ is the CDF of a $\mathcal{N}_1(\mu,\sigma^2)$ variate and $\textup{erfc}(\cdot)$ is the complementary error function.

\begin{lemma}\label{lemma:emg-conditionals}
    Let $\xi = Z + E \sim \textup{EMG}(\mu,\sigma^2, \lambda)$, that is independently $Z \sim \mathcal{N}_1(\mu, \sigma^2)$ and $E \sim \textup{Exp}(\lambda)$. Then,
    \begin{equation*}
	\begin{aligned}
        Z \mid \xi &\sim \overline{\mathcal{N}}(\mu +\lambda \sigma^2, \sigma^2 \mid \xi),\\
        E \mid \xi &\sim \underline{\mathcal{N}}(\xi - \mu - \lambda\sigma^2, \sigma^2\mid 0),
	\end{aligned}
    \end{equation*}
    where $\overline{\mathcal{N}}(\mu,\sigma^2\mid u)$ denotes the normal distribution $\mathcal{N}_1(\mu,\sigma^2)$ truncated above at $u$, and analogously $\underline{\mathcal{N}}(\mu,\sigma^2 \mid l)$ denotes $\mathcal{N}_1(\mu,\sigma^2)$ truncated below at $l$.
\end{lemma}
\begin{proof}
The relevant density functions are
\begin{equation*}
    p_Z(x) \propto \exp\{-(x - \mu)^2/(2\sigma^2)\}, \quad p_E(x) \propto \exp (-\lambda x) \mathbbm{1}\{0 \le x\}.
\end{equation*}
Since $E + Z$ is a convolution, we have that
\begin{align*}
    p(Z = x \mid E + Z =\xi) &\propto p_Z(x) p_E(\xi - x) \propto \exp\{-(x - \mu - \lambda\sigma^2)^2/(2\sigma^2)\}  \mathbbm{1}\{x \le \xi\},\\
    p(E = x \mid  E + Z =\xi) &\propto p_Z(\xi - x) p_E(x) \propto \exp\{-(x -\xi + \mu + \lambda\sigma^2)^2/(2\sigma^2)\}  \mathbbm{1}\{x\ge 0 \}.
\end{align*}
These are truncated normal distributions, which concludes the proof.
\end{proof}

\subsubsection{ODE limit}

When the target is $\pi^{(d)} = \mathcal{N}_d(0_d, I_d)$ the proposed coupling satisfies a high-dimensional ODE limit as in Theorem~\ref{thm:ode-limit-stdgauss}. The drift in Proposition~\ref{prop:drift-stdgauss} requires the change
\begin{align*}
    b_{\textup{opt}}(x,y,v) &= \ell^2\mathbb{E}\big[1\land e^{\ell x^{1/2}Z - \ell^2/2}\big] \land \mathbb{E} \big[1\land e^{\ell  y^{1/2}Z - \ell ^2/2}\big] - \ell^2 v \left\{ q_\ell (x) + q_\ell (y)  \right\} \\
    &=: p_\ell(x) \land p_\ell(x) - v\left\{q_\ell (x) + q_\ell (y)\right\},
\end{align*}
where $Z \sim \mathcal{N}_1(0,1)$,
\begin{align*}
	q_\ell (x) &= \ell^2 e^{\ell^2(x-1)/2} \Phi\left(\frac{\ell}{2x^{1/2}} - \ell x^{1/2}\right), \tag{Lemma~\ref{lemma:gaussian-integrals}}\\
	p_\ell(x) &= q_\ell (x) + \ell^2\Phi\left(-\frac{\ell}{2x^{1/2}}\right). \tag{Proposition~\ref{prop:drift-stdgauss}}
\end{align*}
By construction, the drift $b_{\textup{opt}}(\cdot)$ is point-wise optimal among all couplings in $\mathcal{M}$. This drift is for the inner-product process; the corresponding drift for the squared-distance process is $\bar{b}_{\textup{opt}}(x,y,s) = a(x) + a(y) - 2b_{\textup{opt}}(x,y,(x+y-s)/2)$.

\subsubsection{Optimal scaling and relative efficiency}
\begin{figure}
    \centering
    \includegraphics[width = \textwidth]{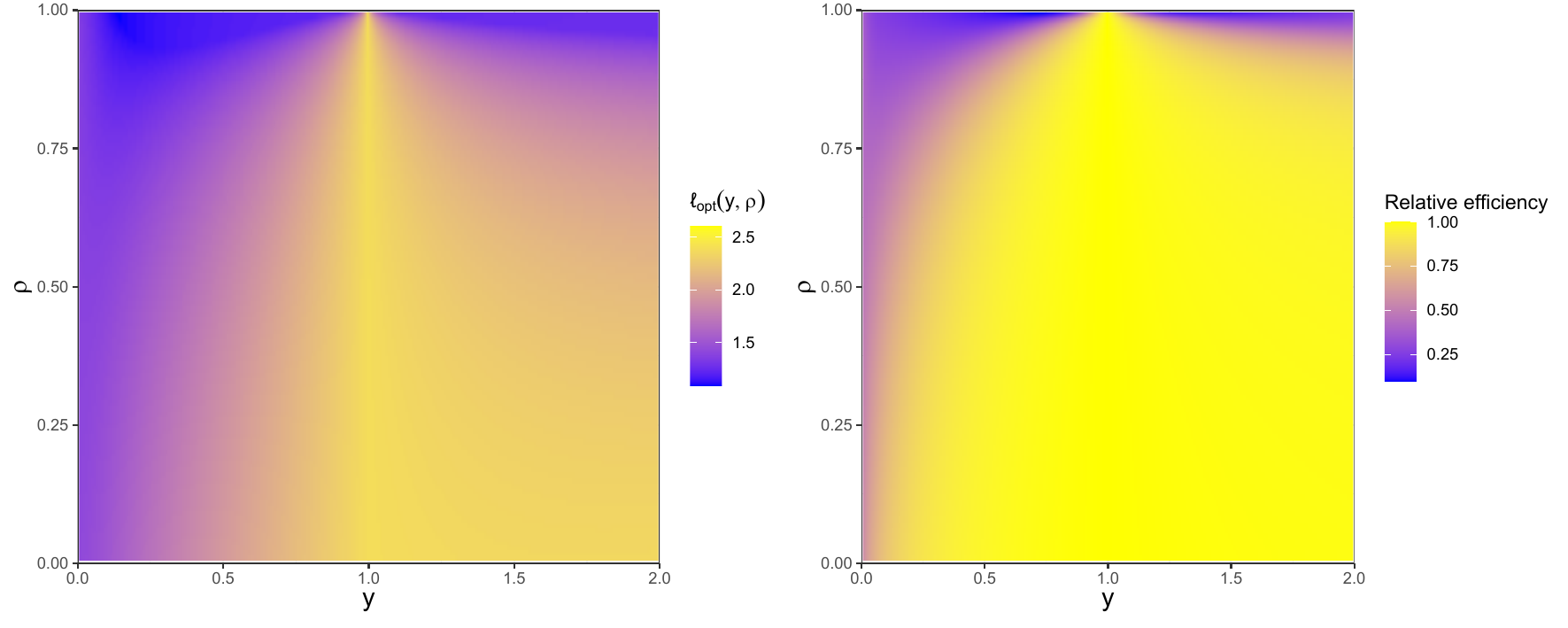}
    \caption{\textbf{Left:} Heatmap of optimal step size $\ell_\textup{opt}(y,\rho)$ for the optimal Markovian coupling. \textbf{Right:} Efficiency of the GCRN coupling relative to that of the optimal Markovian coupling, at the point-wise optimal step sizes.}
    \label{fig:scaling-optimal}
\end{figure}

We consider the optimal scaling of the optimal Markovian coupling. As in Section~\ref{sec:dim-scaling-stdgauss}, we optimize for contraction in the drift $\bar{b}(\cdot)$ corresponding to the squared Euclidean distance between the chains, parametrizing the state as $(x,y,\rho)$ where $\rho \in [-1,1]$ denotes cosine similarity. Figure~\ref{fig:scaling-optimal} (left) shows the point-wise optimal scaling $\ell_\textup{opt}(y, \rho)$, having fixed $x = 1$ so as to emulate a lagged coupling with a large lag parameter $L$.

Figure~\ref{fig:scaling-optimal} (right) shows the relative efficiency of the GCRN coupling against that of the optimal Markovian one. For each coupling, we chose the step size optimally at each $(y,\rho)$; the relative efficiency was then computed as the ratio of the drifts  $\bar{b}_\textup{gcrn}(\cdot) / \bar{b}_\textup{opt}(\cdot)$ at these optimal step sizes. We see that the GCRN coupling loses very little efficiency over most of the range.

\subsection{Preconditioning} \label{app:precondition}

Linear preconditioning can often speed up computations in practical MCMC. For a RWM algorithm with Gaussian proposals, preconditioning is equivalent to letting the proposals be of the form $x + L Z_x \sim \mathcal{N}(x, \Sigma)$ with $Z_x \sim \mathcal{N}_d(0_d,I_d)$ and $LL^\top = \Sigma$. (For instance, $L$ could be the Cholesky factor of $\Sigma$.) We describe here how couplings of proposals $(x + L Z_x, y + L Z_y)$ should be implemented. 

\paragraph{GCRN coupling.} The natural extension to the GCRN coupling of the RWM should account for first-order variation in the log acceptance ratio of, whose Taylor expansion is
\begin{equation*}
    V(x+LZ) - V(x) = Z^\top \{L^\top \nabla V(x)\} + Z^\top  \{L^\top \nabla^2 V(\bar x) L\} Z, 
\end{equation*}
where $V = \log \pi$, and $\bar x$ is on the segment from $x$ to $x+LZ$. It follows that the GCRN coupling should be
\begin{equation*}
    Z_x = Z - (n_x^{\top}Z)n_x + Z_1n_x, \quad Z_y = Z - (n_y^{\top}Z)n_y + Z_1n_y,
\end{equation*}
when preconditioning is used, where: $n_x = \norm\{L^\top \nabla V(x)\}$ and $n_y = \norm\{L^\top \nabla V(x)\}$; $\norm(x) = x/ \lVert x \rVert$; $Z_1\sim \mathcal{N}_1(0,1)$ and $Z\sim \mathcal{N}_d(0_d,I_d)$ are independent. In effect, this amounts to preconditioning the logarithmic gradient by $L^\top$.

One appeal of the GCRN coupling is that it can be straightforwardly adapted to chains which use position-dependent preconditioning $L(x)$, for instance by changing $L \leftarrow L(x)$ in the coupling above. This can lead to effective coupling strategies, as we have demonstrated in the case of the Hug kernel in Section~\ref{sec:hh}.

\paragraph{Reflection (-maximal) coupling.} The reflection coupling should maximize the variation of $L(Z_x - Z_y)$ in the direction of $(x-y)$, while minimizing the variation in all other directions. This can be achieved through the coupling
\begin{equation} \label{eqn:refl-precondition}
    Z_y = Z_x - 2(e^{\top}Z_x)e,
\end{equation}
where $e = \norm(L^{-1}(x-y))$. Following \citet[Section~4.1]{jacob2020unbiased}, one appeal of this coupling is that can be modified to allow for the proposals to be identical with maximal probability. This, of course, allows the chains to coalesce. Let $s(\cdot)$ be the density function of a $\mathcal{N}_d(0_d,I_d)$ variate and let $z = L^{-1}(x-y)$. The resulting \emph{reflection-maximal} coupling sets $Z_y = Z_x - z$ with probability $s(x-z)/s(x)$, and otherwise employs the reflection move~\eqref{eqn:refl-precondition}.

The reflection-maximal coupling furthermore applies to any pair of Gaussians $\mathcal{N}(x, \Sigma)$ and $\mathcal{N}(y, \Sigma)$ with the same covariance matrix. We use it to couple MALA proposals in the experiment of Section~\ref{sec:rwm-vs-mala}.

\section{Further details on the numerical experiments} \label{app:computation}

All computations were carried out in R \citep{r}; runtime-critical components were written in C++. Code to reproduce the numerical experiments can be found at \url{https://github.com/tamaspapp/rwmcouplings}. Part of the binary regression experiments were run on up to 112 processors of a computing cluster. All other experiments were run on a 2019-era Lenovo T490s laptop with 8 processors. (The processor counts include hyper-threading.)

\subsection{Experiments with standard Gaussian targets}\label{sec:ode-numerical}

\paragraph{Solving the ODEs} We solve the limiting ODEs numerically using \texttt{deSolve::ode} \citep{deSolve}. To calculate the drifts in Proposition~\ref{prop:drift-stdgauss}, we must compute the expectation
\begin{equation*}
    g(x,y,\rho) =  \mathbb{E}_{(Z_1, Z_2) \sim \bvn(\rho)} \left[1\land e^{\ell x^{1/2}Z_1 - \ell^2/2}\land e^{\ell y^{1/2}Z_2 - \ell^2/2}\right].
\end{equation*}
For $\rho\in(0,1)$, we evaluate $g(\cdot)$ in terms of numerically tractable quantities in Lemma~\ref{lemma:numerical-integral} below. We require the bivariate normal probabilities
\begin{align*}
	\overline{\bvn}(p,q \mid \rho) &:= \mathbb{P}_{(Z_1, Z_2) \sim \bvn(\rho)}(Z_1 \le p, Z_2 \le q),\\
	\underline{\bvn}(p,q\mid\rho) &:= \mathbb{P}_{(Z_1, Z_2) \sim \bvn(\rho)}(Z_1 \ge p, Z_2 \ge q),
\end{align*}
which we compute using \texttt{mvtnorm::pmvnorm} \citep{mvtnorm}. For $\rho = 1$, we evaluate $g(\cdot)$ using the expression in Lemma~\ref{lemma:gaussian-integrals} below instead.

\begin{lemma}\label{lemma:numerical-integral}
When $\rho <1$ it holds that
\begin{equation*}
    g(x,y,\rho) = \underline{\bvn}\left(\frac{\ell}{2x^{1/2}}, \frac{\ell}{2y^{1/2}} \Bigm\lvert \rho \right) + f(x,y,\rho) + f(y,x,\rho),
\end{equation*}
where:
\begin{equation*}
\begin{aligned}
	f(x,y,\rho) &= e^{\ell^2(x-1)/2}\overline{\bvn}\left(\frac{b \ell x^{1/2}}{\sqrt{1+b^2}}, U \Bigm\lvert -\frac{b}{\sqrt{1+b^2}}\right),\\
	b &= -\frac{(x/y)^{1/2} -  \rho}{\sqrt{1- \rho^2}}, \quad U = \frac{\ell}{2x^{1/2}} - \ell x^{1/2}.
\end{aligned}
\end{equation*}
\end{lemma}

\paragraph{Faster simulation of high-dimensional chains} In our experiments (Figure~\ref{fig:spherical-squaredist}), we simulated the full coupled chains $(X_t,Y_t)_{t \ge 0}$ as this was fast even in our considered dimension $d = 2{,}000$. We however note that it is possible to reduce the computation time by a factor $\bigO(d)$ by simulating the Markov process $(\lVert X_t \rVert^2, \lVert Y_t \rVert^2, \lVert X_t - Y_t\rVert^2)_{t \ge 0}$ directly.

\subsubsection{Proof of Lemma~\ref{lemma:numerical-integral}}
Write $g(x,y,\rho) = \mathbb{E} \left[\exp(0\land A \land B)\right]$, where $A = \ell x^{1/2}Z_1 - \ell^2/2$ and $B = \ell y^{1/2}Z_2 -\ell^2/2$. Partition the expectation according to the events:
\begin{equation*}
\{A \ge 0, B \ge 0\}, \quad \{A<0, A<B\}, \quad \{A<0, A=B\}, \quad \{A<0, A>B\}.
\end{equation*}
Using that $\mathbb{P}(A=B) = 0$, we have that
\begin{align*}
    g(x,y,\rho) 
    &= \mathbb{P}(A\ge 0, B \ge 0) + \mathbb{E}\left[ \mathbbm{1}_{\{A<0\}} \mathbbm{1}_{\{A<B\}} \exp(A) \right] + \mathbb{E}\left[ \mathbbm{1}_{\{B<0\}} \mathbbm{1}_{\{B<A\}} \exp(B)\right]\\
    &= \mathbb{P}\left(Z_1 \ge \frac{\ell}{2x^{1/2}}, Z_2 \ge \frac{\ell}{2y^{1/2}}\right) + f(x,y,\rho) + f(y,x,\rho),
\end{align*}
where $f(\cdot)$ is defined as
\begin{equation*}
    f(x,y,\rho) := \mathbb{E}\left[\mathbbm{1}_{\{\ell x^{1/2}Z_1 < \ell ^2/2\}} \mathbbm{1}_{\{\ell x^{1/2}Z_1 < \ell y^{1/2}Z_2\}} e^{\ell x^{1/2}Z_1 - \ell^2/2}\right].
\end{equation*}

To express $f(\cdot)$ in terms of tractable quantities, we expand $Z_2 = \rho Z_1 + \sqrt{1- \rho^2}Z_*$ where $Z_* \sim \mathcal{N}_1(0,1)$ is independent of $Z_1$. Collecting all terms in the integrand of $f(\cdot)$ that depend on $Z_*$ and integrating them out, we have the expression
\begin{equation*}
\mathbb{E}_{Z_*} \left[ \mathbbm{1}_{\{\ell x^{1/2}Z_1 < \ell y^{1/2}Z_2\}} \right] = \mathbb{P}_{Z_*}\left((x/y)^{1/2}Z_1 < \rho Z_1 + \sqrt{1- \rho^2}Z_*\right) =: \Phi (bZ_1),
\end{equation*}
where $b := - \{(x/y)^{1/2} -  \rho\}/ \sqrt{1- \rho^2}$. It immediately follows that
\begin{align*}
    f(x,y,\rho) 
    &= \mathbb{E}\left[\mathbbm{1}_{\{\ell x^{1/2}Z_1 < \ell^2/2\}}  \Phi (bZ_1) e^{\ell x^{1/2}Z_1 - \ell^2/2} \right]\\
    &= e^{\ell^2(x-1)/2} \mathbb{E}\left[\mathbbm{1}_{\{Z_1 < \ell/(2x^{1/2})\}}  \Phi (bZ_1) e^{
\ell x^{1/2}Z_1 - \ell^2x/2} \right]\\
    &= e^{\ell^2(x-1)/2} \int_{-\infty}^{\ell/(2x^{1/2})} \Phi(bz) e^{\ell x^{1/2}z - \ell^2x/2}\phi(z)\mathrm{d}z\\
    &= e^{\ell^2(x-1)/2} \int_{-\infty}^{\ell/(2x^{1/2})}\Phi(bz) \phi \left(z - \ell x^{1/2}\right) \mathrm{d}z\\
    &= e^{\ell^2(x-1)/2} \int_{-\infty}^{\ell/(2x^{1/2})- \ell x^{1/2}} \Phi\left(b\left(z + \ell x^{1/2}\right)\right) \phi(z) \mathrm{d}z\\
    &=: e^{\ell^2(x-1)/2} \int_{-\infty}^{U} \Phi(a + bz) \phi(z) \mathrm{d}z \\
    &= e^{\ell^2(x-1)/2} \overline{\bvn}\left( \frac{a}{\sqrt{1+b^2}}, U \Bigm\lvert -\frac{b}{\sqrt{1+b^2}} \right), \tag{\citealp[Eqn. (10,010.1)]{owen1980table}}
\end{align*}
where: (i) we have used the identity $e^{\ell x^{1/2}z - \ell^2x/2}\phi(z) = \phi(z - \ell x^{1/2})$ to obtain the fourth line; (ii) we have defined 
\begin{equation*}
	a := b\ell x^{1/2}, \quad U := \frac{\ell}{2x^{1/2}} - \ell x^{1/2}.
\end{equation*}
Substituing this expression into $g(\cdot)$ completes the proof.

\subsection{Experiments with elliptical Gaussian targets}

We evaluate the eccentricties of targets used in Section~\ref{sec:elliptgauss}.

\paragraph{AR(1) process} An AR(1) process with unit noise increments and correlation $\rho$ has a covariance with entries  $\Sigma^{(d)}_{ij} = \rho^{|i-j|}$ for all $(i,j)$. This is a Kac-Murdock-Szeg\H{o} matrix \citep[see e.g.][]{trench1999asymptotic}. It holds that
\begin{equation*}
\frac{1}{d}\tr\left(\Sigma^{(d)}\right) = 1, \quad \lim_{d\to\infty}\frac{1}{d}\tr\left(\big(\Sigma^{(d)}\big)^{-1}\right) = \frac{1+\rho^2}{1-\rho^2},
\end{equation*}
so that the limiting eccentricity is $\ecc = (1+\rho^2)/(1-\rho^2)$.

\paragraph{Chi-square eigenvalues} Let $\lambda \sim \chi^2_\nu$. It holds that $\mathbb{E}[\lambda] = \nu$ and (if $\nu>2$) that $\mathbb{E}\left[ \lambda^{-1} \right] = 1/(\nu -2)$. If the eigenvalues $\lambda_1,\dots,\lambda_d$ of the covariance matrix $\Sigma^{(d)}$ are sampled i.i.d from $\chi^2_\nu$, it holds that 
\begin{equation*}
\lim_{d \to \infty}\frac{1}{d}\tr\left(\Sigma^{(d)}\right) = \mathbb{E}[\lambda] = \nu, \quad \lim_{d\to\infty}\frac{1}{d}\tr\left(\big(\Sigma^{(d)}\big)^{-1}\right) =\mathbb{E}\left[ \lambda^{-1} \right] = \frac{1}{\nu - 2},
\end{equation*}
so that the limiting eccentricity is $\ecc = \nu/(\nu - 2)$.

\subsection{Experiments with stochastic volatility model} \label{app:svm}

\paragraph{Posterior log-density and score} The posterior log-density of the stochastic volatility model is
\begin{equation*}
    \log\pi(x_{1:d} \mid y_{1:d}) = -\frac{1}{2}\left(\sum_{t=1}^d x_t + \frac{1}{\beta^2} \sum_{t=1}^d y_t^2 \exp(-x_t) + \frac{1}{\sigma^2}\sum_{t = 1}^{d - 1}(\varphi x_t - x_{t + 1})^2 + \frac{1 - \varphi^2}{\sigma^2} x_1^2 \right) + \textup{const},
\end{equation*}
where ``const" is an offset constant in $x_{1:d}$. The score has entries
\begin{equation*}
\frac{\partial \log \pi}{\partial x_t} = -\frac{1}{2} + \frac{1}{2 \beta^2}y_t^2 \exp(-x_t) - \frac{\varphi}{\sigma^2}(\varphi x_t - x_{t + 1}) \mathbbm{1}_{\{t \ne d\}} - \frac{1}{\sigma^2}(x_t - \varphi x_{t-1}) \mathbbm{1}_{\{t \ne 1\}} - \frac{1 - \varphi^2}{ \sigma^2}x_1 \mathbbm{1}_{\{t = 1\}}
\end{equation*}
for all $t \in \{1,2,\dots,d\}$.

\paragraph{Laplace approximation} We use the LBFGS optimizer of the R function \texttt{optim} to compute the Laplace approximation. The optimization is initialized at a single draw from the prior.

\subsubsection{Rate of convergence of the RWM}

\begin{figure}
    \centering
    \includegraphics[width=0.8\textwidth]{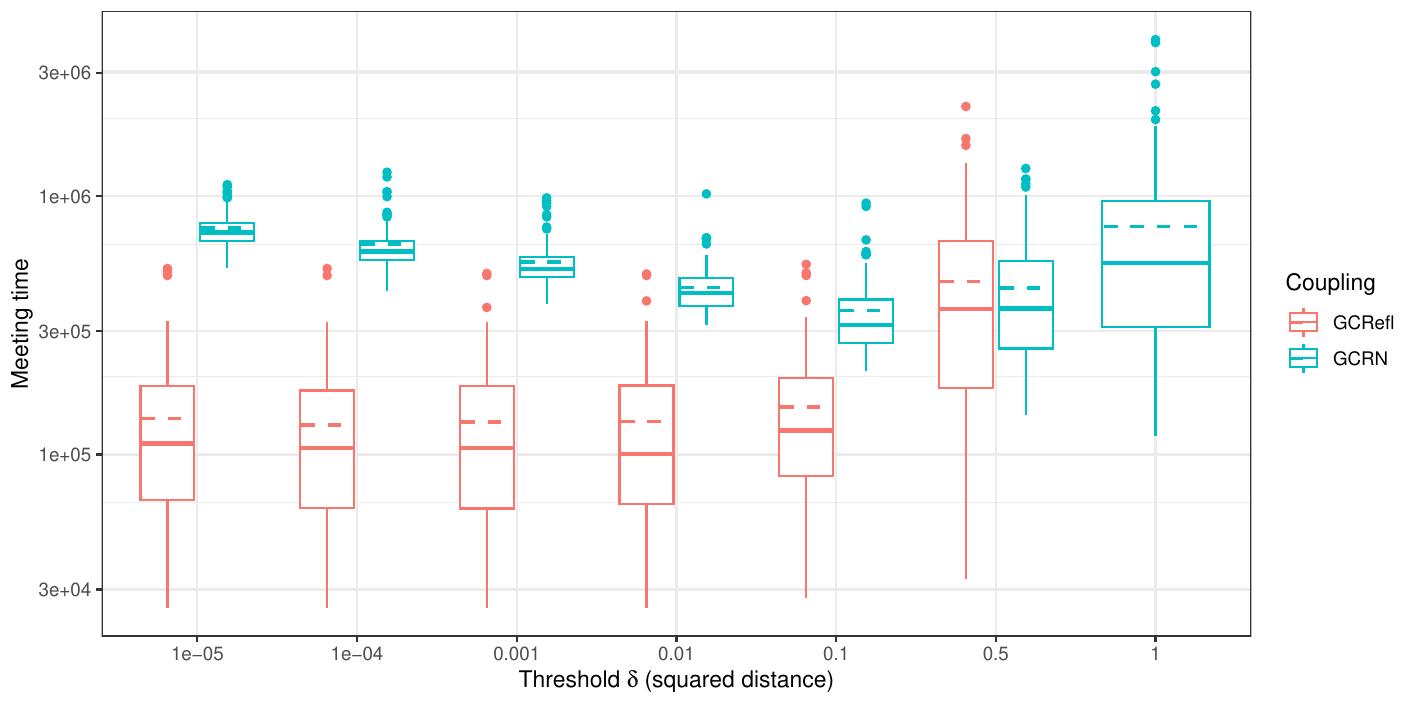}
    \caption{\textbf{Stochastic volatility model.} Box plots of $R = 100$ meeting times for various thresholds $\delta$ and two-scale RWM couplings. The dashed lines denote the sample means.}
    \label{fig:app-rwm-threshold}
\end{figure}

\paragraph[Choice of threshold delta in two-scale couplings]{Choice of threshold $\delta$ in two-scale couplings} We search for a sensible threshold $\lVert X_t - Y_t \rVert^2 = \delta$ over a coarse a grid $\{10^{-4},10^{-3},10^{-2}, 10^{-1}, 0.5, 1\}$. For each coupling and choice of threshold, we measured the meeting time between chains initialized at independent draws from the Laplace approximation (with no lag, i.e. $L=0$) and repeated this 100 times. The results are displayed in Figure~\ref{fig:app-rwm-threshold}. Compared to GCRN, GCRefl prefers a smaller $\delta$, is less sensitive to values smaller than optimal but is much more sensitive to values larger than optimal. (We omitted $\delta=1$ for the GCRefl coupling as we found the meeting times in preliminary runs to be exceedingly large.)

With the above grid of values and when using a maximal coupling, the one-step probability of coalescing the proposals at $\lVert X_t - Y_t \rVert^2 = \delta$ would be $\{0.81,0.45,0.02,\dots\}$, where dots denote probabilities under $10^{-13}$. The fact that $\delta = 0.1$ performs best for the GCRN coupling, even though the chance of meeting is extremely small at this threshold, indicates that the extra variability induced by the reflection coupling contracts the chains, on average, faster than exponential rate $\bigO(h^2)$ of GCRN.  

\subsubsection{Bias of Laplace approximation} \label{sec:svm-bias}

\paragraph{Lower bound of \cite{gelbrich1990on}} We have the explicit expression
\begin{equation*}
\wass{2}^2(\mathcal{N}_d(\hat\mu, \hat\Sigma), \mathcal{N}_d(\mu, \Sigma)) = \lVert \hat\mu - \mu\rVert^2 
+ \tr(\Sigma) + \tr(\hat\Sigma)- 2\tr \Big ( \big (\Sigma^{1/2} \hat\Sigma \Sigma^{1/2} \big)^{1/2}\Big).
\end{equation*}
We estimated the posterior mean and covariance $(\mu,\Sigma)$ by averaging over $R=50$ independent Hug and Hop chains (see Appendix~\ref{app:hh}); each chain was run for $50{,}000$ iterations and was warm-started from an independent draw from the Laplace approximation. We followed the guidelines in \cite{ludkin2022hug} and tuned Hug to $(T,B)=(0.5,10)$ integration time and bounce count (for an acceptance rate of $79\%$) and Hop to $(\lambda,\kappa)=(20,1)$ for an acceptance rate of $40\%$.

Jackknife \citep{efron1981jackknife} bias and standard error estimates suggested that the mean-squared error of our estimate of $\wass{2}^2(\mathcal{N}_d(\hat\mu, \hat\Sigma), \mathcal{N}_d(\mu, \Sigma))$ was small. We note that the bootstrap is known to be consistent in our case, see \citet[Section 2.3]{rippl2016limit}.

\paragraph{Total variation distance bound} The total variation distance bound was computed with a two-scale GCRefl coupling with the same parameter settings $(h, \delta)$ 
 as in the experiment of Section~\ref{sec:numeric-conv}.

\subsubsection{Convergence of Hug and Hop} \label{app:hh}

\begin{algorithm}
  \caption{Hug kernel \label{alg:hug}}
  \begin{algorithmic}[1]
    \Require{Target density $\pi:\mathbb{R}^d \to \mathbb{R}$, score $\nabla\log\pi:\mathbb{R}^d \to \mathbb{R}^d$, current state $x \in \mathbb{R}^d$.}
    \Require{Step size $\delta >0$ and bounce count $B\in \mathbb{N}$.}

\Statex
\Function{Bounce}{$x,z,\delta,B$}
      \For{$i = 1,\dots,B$}
	 \State $x \leftarrow x  + (\delta/2) z$
	 \State $z \leftarrow z - 2  (g(x)^\top z) z$ 
        \State $x \leftarrow x  + (\delta/2) z$
    \EndFor
	\State \Return $(x,z)$
\EndFunction

\Statex
    \State Sample $z \sim \mathcal{N}_d(0_d, I_d)$. \Comment{Generate proposal}
    \State Propose $(X,Z) \leftarrow \textsc{Bounce}(x,z,\delta,B)$. 
    \State Sample $U \sim \textup{Unif}(0,1)$.  \Comment{Metropolis filter}
     \If{$\log U < \left( \log \pi(X) - \log \pi(x) \right)$} 
          \State $x \leftarrow X$
    \EndIf
	\State \Return{$x$}
  \end{algorithmic}
\end{algorithm} 

\begin{algorithm}
  \caption{Hamiltonian Monte Carlo kernel \label{alg:hmc}}
  \begin{algorithmic}[1]
    \Require{Target density $\pi:\mathbb{R}^d \to \mathbb{R}$ and score $\nabla\log\pi:\mathbb{R}^d \to \mathbb{R}^d$, current state $x \in \mathbb{R}^d$.}
    \Require{Leapfrog step size $\delta >0$ and iteration count $B\in \mathbb{N}$.}

\Statex
\Function{Leapfrog}{$x,z,\delta,B$}
      \For{$i = 1,\dots,B$}
	 \State $z \leftarrow z  + (\delta/2) \nabla\log\pi(x)$
	 \State $x \leftarrow z + \delta z$
        \State $z \leftarrow z + (\delta/2) \nabla\log\pi(x)$
    \EndFor
	\State \Return $(x,z)$
\EndFunction

\Statex
    \State Sample $z \sim \mathcal{N}_d(0_d, I_d)$.  \Comment{Generate proposal}
    \State Propose $(X,Z) \leftarrow \textsc{Leapfrog}(x,z,\delta,B)$. 
    \State Sample $U \sim \textup{Unif}(0,1)$. \Comment{Metropolis filter}
     \If{$\log U < \left( \log \pi(X) - \log \pi(x) - \lVert Z \rVert^2/2 + \lVert z \rVert^2/2 \right)$} 
          \State $x \leftarrow X$
    \EndIf
	\State \Return{$x$}
  \end{algorithmic}
\end{algorithm} 

\paragraph{Additional algorithmic descriptions} The Hug kernel is described in Algorithm~\ref{alg:hug} (see also \citealp[Algorithm 1]{ludkin2022hug}) and the Hamiltonian Monte Carlo (HMC) kernel is described in Algorithm~\ref{alg:hmc}. The synchronous coupling of Hug and HMC proceeds by using identical initial momenta (Line 9) and identical acceptance uniforms (Line 11).

\paragraph{Tuning Hop} We follow the guidelines of \cite{ludkin2022hug} and first tune Hop independently of Hug, choosing $(\lambda, \kappa) = (20, 1)$ for an acceptance rate of $40\%$.

\paragraph{Tuning HMC and Hug for contractivity} We assess the contractivity of coupled HMC and coupled Hug and Hop (H\&H) as follows: we independently initialize a pair of coupled chains from the Laplace approximation and track the squared distance between the chains for a fixed number of iterations. We make no attempt to coalesce the chains: HMC is not mixed with the RWM, and for Hop we fix $\delta_\textup{hop} = 0$ so that only the GCRN coupling is applied.

\begin{figure}[tb]
    \centering
    \includegraphics[width=\textwidth]{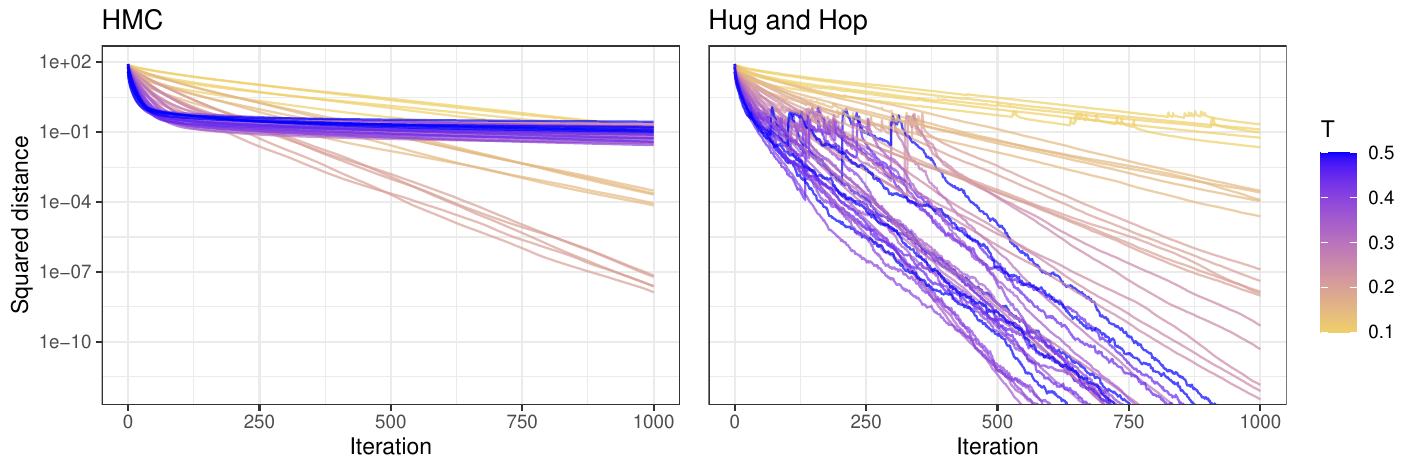}
    \caption{\textbf{Stochastic volatility model.} Contractivity of HMC, and Hug and Hop, varying the integration time $T$ and using a fine integration grid.}
    \label{fig:hug-hmc-integration-time}
\end{figure}

HMC and Hug require the tuning of two parameters $(T,B)$, where $T = \ecc B$ represents the integration time, $B$ is either the number of leafrog steps (for HMC) or the number of ``bounces" (for Hug), and $\ecc$ represents the step size used to generate the respective discretized proposals. First, we look at the impact of the integration time on the contractivity, fixing a small $\ecc = 10^{-3}$ and varying $T\in\{0.1, 0.15, \dots, 0.5\}$. Figure~\ref{fig:hug-hmc-integration-time} displays trace plots of 5 replicates for each algorithm and parameter setting. HMC requires a short integration time in order to be contractive, suffering a sharp phase transition from $T = 0.2$ to $T=0.25$. In contrast, Hug is more robust with respect to this tuning parameter and benefits from longer integration times than HMC.

\begin{figure}[tb]
    \centering
    \includegraphics[width=\textwidth]{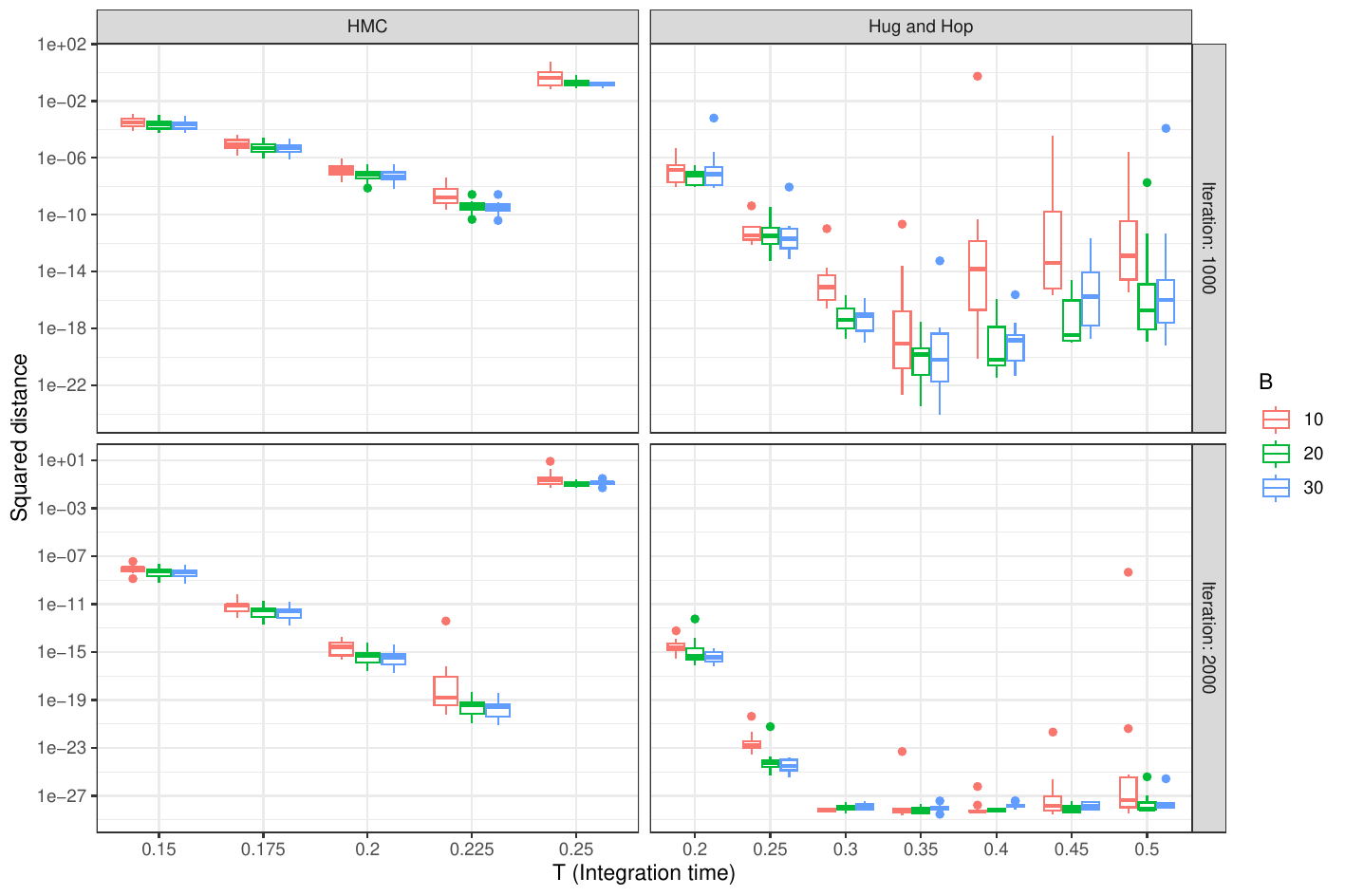}
    \caption{\textbf{Stochastic volatility model.} Contractivity of HMC, and Hug and Hop, varying the integration time $T$ and leapfrog steps/bounces $B$.}
    \label{fig:hug-hmc-paramtuning}
\end{figure}

Next, we assess which configurations are contractive among the grid of parameters
\begin{equation*}
	T_\textup{hug} \in \{0.2,0.25,\dots,0.5\},\quad T_\textup{hmc} \in \{0.15,0.175,\dots,0.25\}, \quad B \in \{10, 20, 30\}.
\end{equation*}
Figure~\ref{fig:hug-hmc-paramtuning} displays box plots of the squared distance between chains after 1{,}000 and 2{,}000 iterations, from 20 replicates each. We select from among the more efficient contractive configurations $(T_\textup{hug},B_\textup{hug}) = (0.35,10)$ and $(T_\textup{hmc},B_\textup{hmc}) = (0.225,10)$, for approximately equal cost per iteration for both H\&H and HMC. The configurations correspond to acceptance rates of $\alpha_\textup{hug} = 90\%$ and $\alpha_\textup{hmc} = 78\%$.

\paragraph{RWM and HMC mixture} With probability $\gamma = 0.05$ (the default in \citealp{heng2019unbiased}) we switch from coupled HMC to coupled RWM kernels in order to allow the chains to meet. For the RWM, we use the parameter tuning and two-scale GCRN coupling of Section~\ref{sec:numeric-conv}, which we know perform well. We expect these settings to be close to optimal as \cite{heng2019unbiased} demonstrate that the performance of the overall algorithm is insensitive to the mixture probability $\gamma$ and to the tuning of the RWM.

\begin{figure}[tb]
    \centering
    \includegraphics[width=0.7\textwidth]{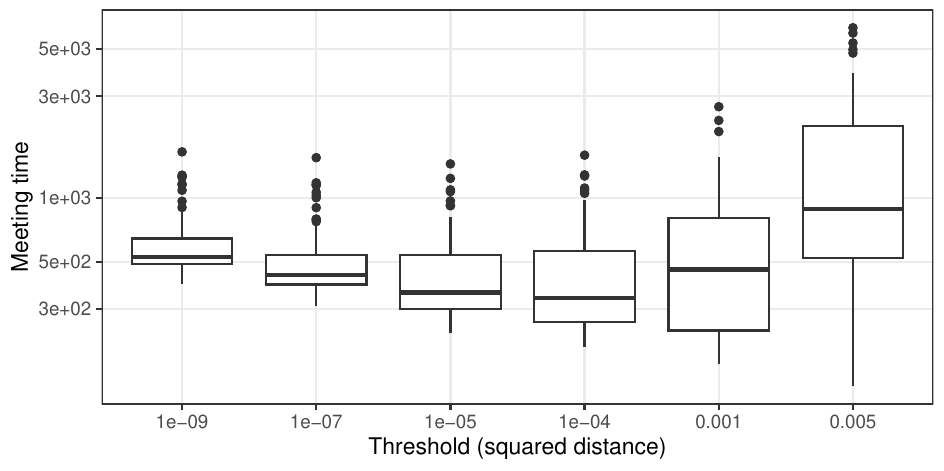}
    \caption{\textbf{Stochastic volatility model.} Box plots of the meeting times for various choices of the threshold $\delta$ in the two-scale Hug and Hop coupling.}
    \label{fig:hh-thresh}
\end{figure}

\paragraph{Choosing $\delta_\textup{hop}$ for two-scale Hop coupling} We sweep over a grid $\delta\in[10^{-9}, 5 \times 10^{-3}]$ in search of a sensible threshold for the two-scale Hop coupling. Figure~\ref{fig:hh-thresh} displays box plots of 100 replicates for each $\delta$. The meeting times are insensitive to the theshold as long as the chains have a high probability of meeting when $\lVert X_t - Y_t \rVert^2 < \delta$; we select $\delta_\textup{hop} = 10^{-5}$.

\begin{figure}[tb]
    \centering
    \includegraphics[width=0.4\textwidth]{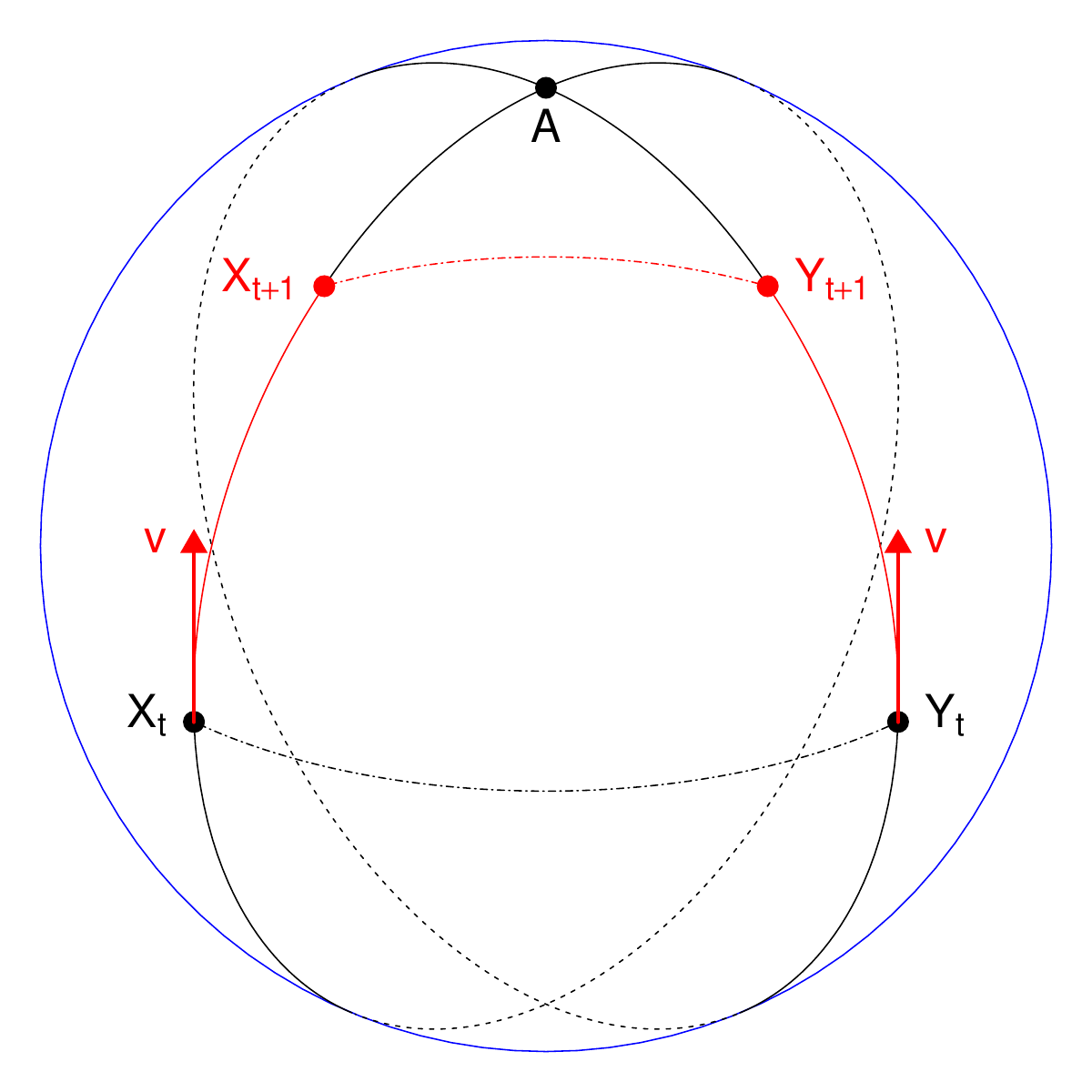}
    \caption{Visualisation of the contractive behaviour of the synchronous Hug coupling. For a spherical target of increasing dimension, the pictured scenario occurs with probability approaching 1.}
    \label{fig:hug-crn-intuition}
\end{figure}

\paragraph{Contractivity of synchonous Hug coupling in high dimensions} We illustrate our intuition as to why the CRN coupling of Hug chains is contractive in high dimensions in Figure~\ref{fig:hug-crn-intuition}. For clarity of exposition, we assume that the target is spherically symmetric (so its level sets are hyperspheres), that the Hug dynamics are exact (so they traverse great circles with constant speed), and that the two CRN-coupled Hug chains start at $(X_t,Y_t)$ which lie on the same level set. In high dimensions, the velocity $v$ is essentially always orthogonal to the great circle traversing $(X_t,Y_t)$. As a consequence, essentially almost sure contraction is achieved by synchonous movement towards one of the two intersections of the two great circles which support the coupled Hug paths (in Figure~\ref{fig:hug-crn-intuition}, movement is towards the point A or its antipode).

Intriguingly, as the coupled paths are farthest away at $(X_t,Y_t)$, the chains contract irrespectively of the length of the integration time. This is consistent with our empirical observation that the synchonous Hug coupling is contractive even if the integration time is long (see Figure~\ref{fig:hug-hmc-integration-time}).

\paragraph{Alternative coupling strategies for Hop} We have also experimented with replacing the GCRN coupling of Hop with a CRN coupling. The CRN coupling of Hop was significantly less contractive, and produced significantly larger meeting times. This was primarily due to its inability to synchronize acceptance events with the same frequency as GCRN.

\subsection{Experiments with binary regression} \label{app:binreg}

\paragraph{Posterior log-density and score}
Let $\mathbf{X} \in \mathbb{R}^{n\times d}$ be the design matrix with rows $\{\mathbf{x}_1,\dots,\mathbf{x}_n\} \in \mathbb{R}^{d}$  and let $\mathbf{y} \in \{0,1\}^n$ be the response. For all $x \in \mathbb{R}^d$, the posterior log-density of a binary regression model with a $\mathcal{N}_d(0_d,\lambda^2 I_d)$ prior is
\begin{equation*}
    \log\pi(x \mid \mathbf{X}, \mathbf{y}) = \sum_{i=1}^n \log F \left( (2\mathbf{y} _i - 1) \mathbf{x}_i^\top x \right) - \frac{1}{2\lambda^2}\lVert x \rVert^2 + \textup{const},
\end{equation*}
where: $F:\mathbb{R} \to (0,1)$ is a cumulative distribution function; ``const" is an offset constant in $x$. The score is
\begin{equation*}
\nabla \log \pi(x \mid \mathbf{X}, \mathbf{y}) = \sum_{i=1}^n (\log F)' \left( (2\mathbf{y} _i - 1) \mathbf{x}_i^\top x \right) (2\mathbf{y} _i - 1) \mathbf{x}_i - \frac{1}{\lambda^2}x.
\end{equation*}
We consider the logistic case, in which case $\log F(z) = - \log(1+ e^{-z})$ is the logistic log-CDF.

\subsubsection{Choice of two-scale couplings} \label{app:binreg-ablation}

We experimented with two-scale couplings that swapped from a contractive coupling to a reflection-maximal coupling when the chains were close enough. As a default, when coupling proposals $\mathcal{N}(x, PP^\top)$ and $\mathcal{N}(y, PP^\top)$ we specified the swapping rule as $\| P^{-1} (x - y) \|^2 \le \delta^2$ and we varied the value of $\delta > 0$. However, for the RWM using a diagonal preconditioner and the GCRefl coupling, we instead specified the swapping rule as $\| P^{-1} (X-Y) \|^2 \le \delta^2 h^2$, which is adapted to the local scale of the proposal, as we found the performance of the resulting two-scale coupling to be less sensitive to the choice of $\delta$.

\begin{figure}[tb]
    \centering
    \includegraphics[width=\textwidth]{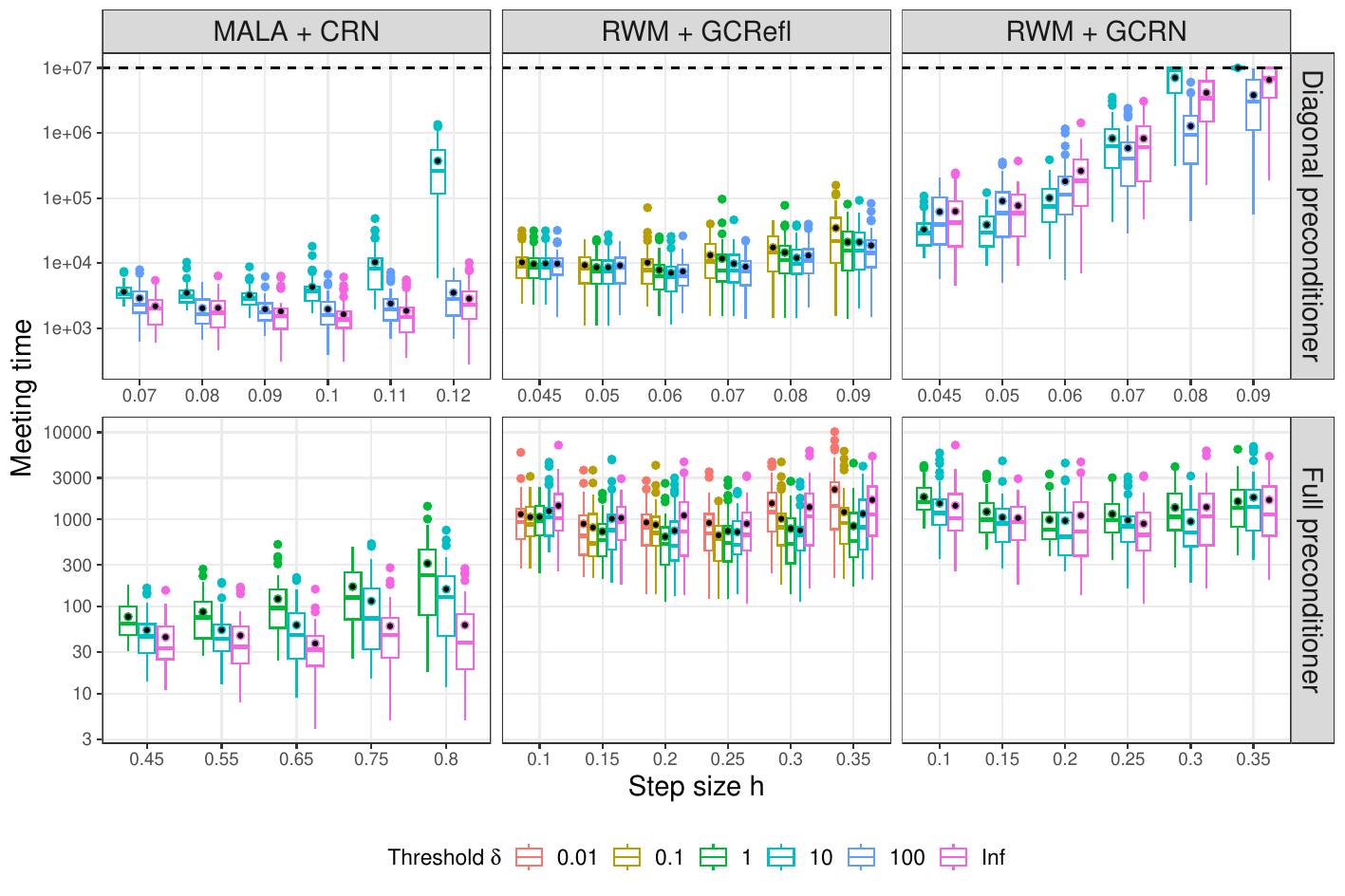}
    \caption{\textbf{Binary regression.} Box plots of meeting times for various algorithms and two-scale couplings, see Appendix~\ref{app:binreg-ablation} for details. Black dots indicate sample means. In the top row, the meeting times were truncated at $T = 10^7$ as indicated by the black dashed line. Note that the definition of $\delta$ may vary between plots.}
    \label{fig:binreg-ablation}
\end{figure}

Figure~\ref{fig:binreg-ablation} displays box plots of $R = 50$ replicates for several configurations; we measured meeting times between coupled chains started independently from a Gaussian approximation of the target. For the RWM, the two-scale GCRefl coupling was consistently the most effective: (1) when using the diagonal preconditioner, this was the only practical coupling out of the ones considered; (2) when using the full preconditioner, GCRefl outperformed both GCRN and the reflection coupling, particularly for larger step sizes. For MALA, the reflection-maximal coupling on its own consistently performed the best, across both types of preconditioning.

\subsubsection{Comparison of RWM and MALA} \label{app:rwm-vs-mala}

\paragraph{Parameters for the main experiment} 

We first approximately sampled from the target using $R = 112{,}000$ optimally-scaled MALA chains, warm-started from a Gaussian approximation to the posterior $\pi$. Discarding burn-in, we used these chains to estimate the posterior mean and covariance to a very low degree of error; these quantities are relevant for the initialization of our experiment and for the asymptotic variance estimator below.

For the diagonal preconditioner, we used the grid of step sizes 
\begin{equation*}
	\begin{aligned}
	h_\textup{rwm} &= \{0.045, 0.05, 0.06, 0.07, 0.08, 0.09\},\\
	h_\textup{mala} &= \{0.07, 0.08, 0.09, 0.1, 0.11, 0.12\},
	\end{aligned}
\end{equation*}
and we ran $R = 112{,}000$ coupled chains to estimate the relevant efficiency metrics for the RWM and MALA algorithms. For the full preconditioner, we used the grid of step sizes 
\begin{equation*}
	\begin{aligned}
	h_\textup{rwm} &= \{0.1, 0.15, 0.2, 0.25, 0.3, 0.35\},\\
	h_\textup{mala} &= \{0.45, 0.55, 0.65, 0.75, 0.8\},
	\end{aligned}
\end{equation*}
and we ran $R = 11{,}200$ coupled chains to estimate the relevant efficiency metrics for the RWM and MALA algorithms.

\begin{figure}[tb]
    \centering
    \includegraphics[width=\textwidth]{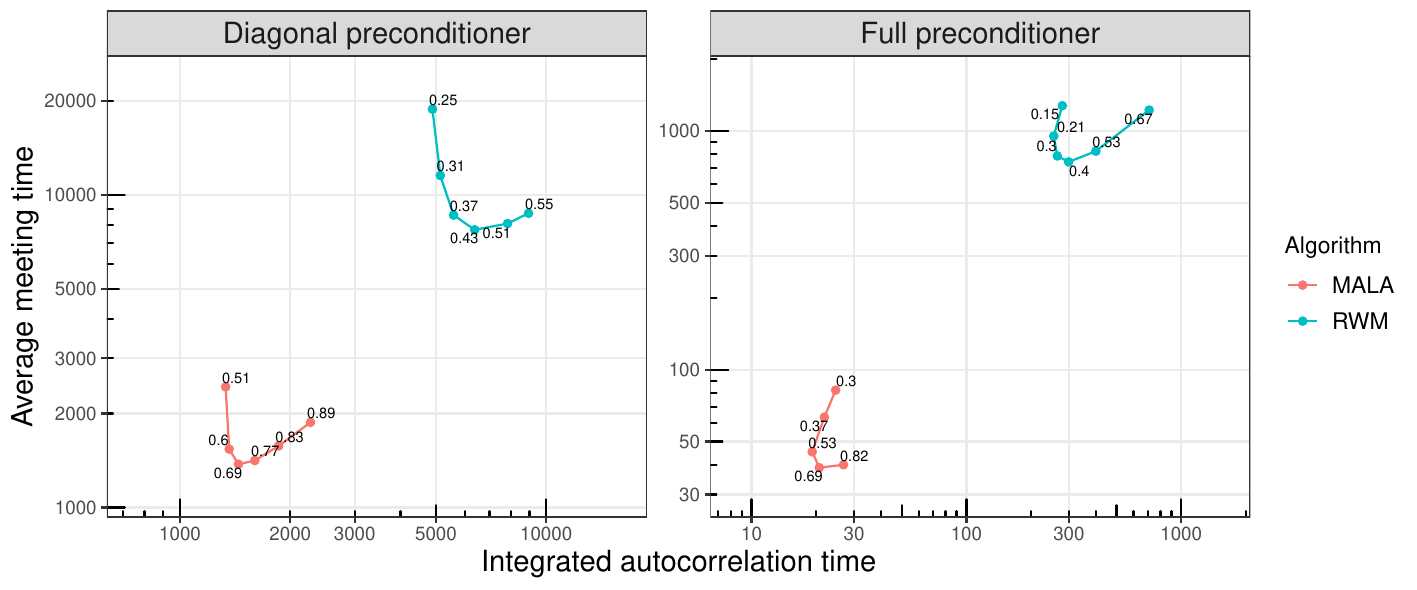}
    \caption{\textbf{Binary regression.} Efficiency comparison of RWM and MALA, varying the step sizes. All estimates are shown with two standard errors. The acceptance rate for each step size is overlaid.}
    \label{fig:binreg-iter}
\end{figure}

\paragraph{Measuring wall-time}

The experiment in the main text was performed on a shared computing cluster; as a consequence, even when using identical random seeds, we observed large variations in wall-time between different runs.
To obtain more reliable measurements, in Figure~\ref{fig:binreg} we instead scaled the meeting times according to the per-iteration cost of a marginal chain.
To account for the gradient evaluations performed by the GCRefl coupling of the RWM, we used the fact that the GCRefl coupling only updates the gradient when a marginal chain accepts, and scaled values according to the the acceptance rate of a single stationary RWM chain.
Note that this is a slight overestimate of the actual computing cost for the RWM, since (1) we combined GCRefl with a reflection-maximal coupling, which does not need gradient evaluations, and (2) one of the coupled chains was initialized near the target mode, so its acceptance rate was somewhat smaller than at stationarity.

\paragraph{Efficiency metrics} We measure the stationary efficiency of the Markov kernel $K$ by the integrated autocorrelation time (IACT) of a test function $f(\cdot)$, 
\begin{equation*}
	\textup{IACT}(K,f) = \frac{\textup{Var}(K,f)}{\textup{Var}_\pi(f(X))},
\end{equation*}
where the denominator is the target variance and where the numerator is the asymptotic variance,
\begin{equation*}
	\textup{Var}(K,f) = \textup{Var}_\pi(f(X_0)) + 2\sum_{t=1}^\infty \textup{Cov}_\pi(f(X_0), f(X_t)),
\end{equation*}
with the subscript $\pi$ indicating a stationary chain ($X_0 \sim \pi$ and $X_{t+1} \sim K(X_t, \cdot)$ for all $t \ge0$).

Figure~\ref{fig:binreg-iter} shows the efficiency comparison of the RWM and MALA in terms of worst IACT (when estimating the regression coefficients; we report the highest coordinate-wise sample average) and average meeting time. We account for wall-time in the main text: we define the time per effective sample as the worst IACT multiplied by the wall-time of one iteration.

\paragraph{On the computing cost of the RWM and MALA}

With our hardware and hand-optimized C++ implementation, a gradient evaluation was as expensive as a log-density evaluation, so one iteration of MALA took roughly twice as much as one of the RWM. We found the log-density and gradient computations to dominate the cost of the MCMC iteration, even when dense preconditioners were used.

\paragraph{Unbiased estimation of the asymptotic variance} Consider the setup of Section~\ref{sec:background} at no lag ($L=0$): we run two coupled chains $(X,Y)$ which marginally evolve under the same Markov kernel $K$, and which meet at some time $\tau$. Suppose furthermore that the chains are initialized \emph{independently} at $X_0 \sim \pi$ and $Y_0 \sim \pi_0$. \cite{douc2023solving} show that \begin{equation*}
V(X,Y) = - \textup{Var}_\pi(f(X)) + 2 \left\{ f(X_0) - \mathbb{E}_\pi[f(X)] \right\} \sum_{t=0}^{\tau - 1}\{f(X_t) - f(Y_t)\}
\end{equation*}
is an unbiased estimator of the asymptotic variance $\textup{Var}(K,f)$. This estimator coincides with ``EPAVE" \citep[Equation~3.2]{douc2023solving} with $t=0$, and is unbiased in our case as the $X$-chain is stationary.

In our experiments, we estimate $\textup{IACT}(K,f)$ near-unbiasedly by $V(X,Y)/\textup{Var}_\pi(f(X))$, up to negligible MCMC errors in computing $\mathbb{E}_\pi[f(X)]$ and $\textup{Var}_\pi(f(X))$. The near-unbiased estimator allows us to exploit parallelism and to investigate the efficiency of the marginal kernel $K$ without running long chains. Furthermore, it does not suffer from the underestimation bias of standard (spectral variance or batch means) estimators of the asymptotic variance (see \cite{vatsjones2022lugsail} and references therein), a bias which we have observed in preliminary experiments with the packages \texttt{coda} \citep{coda} and \texttt{mcmcse} \citep{mcmcse}.

\section{Proofs} \label{sec:proofs}

\subsection{Notation} 
We make the following conventions in the proofs below. For simplicity of notation, we omit most subscripts and superscripts relating to the dimension $d$. The Euclidean norm is denoted by $\lVert \cdot \rVert$. The standard normal probability density function is denoted by $\phi$, the cumulative density function is $\Phi$. $\bvn(\sigma_1^2, \sigma_2^2; \eta)$ denotes the bivariate normal distribution with coordinate-wise variances $(\sigma_1^2, \sigma_2^2)$ and covariance $\eta \in [-\sigma_1\sigma_2,\sigma_1\sigma_2]$. We write $\bvn(\eta)$ instead when both variances are unit ($\sigma_1^2 = \sigma_2^2 = 1$). We use $\implies$ to denote the weak convergence of random variables and stochastic processes.

\subsection{Auxiliary results} \label{app:proofs-aux}

We shall first require a few auxiliary results. Lemma~\ref{lemma:acc-fun-lipchitz} below recalls the Lipschitz property of the function $f(x) = 1 \land \exp(x)$ which appears in the Metropolis-Hastings acceptance ratio. Lemma~\ref{lemma:cross-term-bound} below extends Lemma~\ref{lemma:acc-fun-lipchitz}.

\begin{lemma}[{\citealp[Proposition 2.2]{roberts1997weak}}] \label{lemma:acc-fun-lipchitz} 
Let $a \in \mathbb{R}$. For all $x,y \in \mathbb{R}$, it holds that
\begin{equation*}
|1 \land \exp(x) - 1 \land \exp(y)| \le 1 \land |x-y|.
\end{equation*}
\end{lemma}

\begin{lemma} \label{lemma:cross-term-bound}
For all $a,b,x,y,u,v \in \mathbb{R}$, it holds that
\begin{equation*}
|a 1 \land \exp(x)\land \exp(u) - b 1 \land \exp(y)\land \exp(v)| \le |a-b| + |b| (1 \land |x-y| + 1 \land |u-v|).
\end{equation*}
\end{lemma}
\begin{proof}
Firstly, $a 1 \land e^{x\land u} - b 1 \land e^{y\land v} = (a - b) 1 \land e^{x\land u} + b(1 \land e^{x\land u} - 1 \land e^{y\land v})$. By the triangle inequality, it follows that
\begin{equation*}
	|a 1 \land e^{x\land u} - b 1 \land e^{y\land v}| \le |a - b| +b |1 \land e^{x\land u} - 1 \land e^{y\land v}|.
\end{equation*}

Now, by adding and subtracting the cross-term $1 \land e^{y\land u}$, then using the triangle inequality, we have that
\begin{align*}
|1 \land e^{x\land u} - 1 \land e^{y\land v}| 
&= |1 \land e^{x\land u} - 1 \land e^{y\land u}| + |1 \land e^{y\land u} - 1 \land e^{y\land v}| \\
&\le 1 \land |x\land u - y\land u| + 1 \land |y\land u - y\land v|, \tag{Lemma~\ref{lemma:acc-fun-lipchitz}}\\
&\le 1 \land |x-y| + 1 \land |u-v|.
\end{align*}
where in the final line we used that the function $f_z(x) = z \land x$ is 1-Lipschitz, for all $z \in \mathbb{R}.$ This concludes the proof.
\end{proof}

Lemmas~\ref{lemma:conv-1} and~\ref{lemma:conv-2} below are used to express the limiting drifts in Propositions~\ref{prop:drift-stdgauss} and~\ref{prop:drift-elliptgauss}.

\begin{lemma} \label{lemma:conv-1}
Let $(Z_1, Z_2) \sim \bvn(\sigma_1^2,\sigma_2^2,\eta)$, let $A$ be a constant and let $A_d \to A$ in probability as $d \to \infty$. Then, it holds that
\begin{equation*}
    \lim_{d \to \infty} \mathbb{E}\left[ Z_2 1\land \exp(Z_1 - A_d)\right] = \mathbb{E}\left[ Z_2 1\land \exp( Z_1 - A)\right].
\end{equation*}
Furthermore, the convergence is uniform over $(\sigma_1^2, \sigma_2^2, \eta)$ in a compact set provided that $A_d \to A$ uniformly over the same set.
\end{lemma}
\begin{proof}
We shall actually show uniform convergence in $L_2$. Firstly by the Cauchy-Schwarz inequality, we have that
\begin{align*}
\mathbb{E}^2 \left[Z_2 \left(1\land e^{Z_1 - A_d } - 1\land e^{Z_1 - A}\right) \right]
    &\le \mathbb{E} \left[Z_2^2\right] \mathbb{E}\left[\left(1\land e^{Z_1 - A_d} - 1\land e^{Z_1 - A}\right)^2\right] \\
    &\le \mathbb{E} \left[Z_2^2\right] \mathbb{E}\left[1\land(A_d - A)^2 \right] \tag{Lemma~\ref{lemma:acc-fun-lipchitz}}\\
    &= \sigma_2^2 \mathbb{E}\left[1\land(A_d - A)^2 \right] \\
    &\to 0 \quad\textup{as $d\to \infty$,}
\end{align*}
by the weak convergence $(A_d-A)\implies 0$ as $d\to \infty$, because the function $f(x) = 1 \land x^2$ is continuous and bounded and $f(0) = 0$. The squeeze theorem takes us to the claimed convergence. The unformity of the convergence follows from the bound.
\end{proof}

\begin{lemma} \label{lemma:conv-2}
Let $(Z_1, Z_2) \sim \bvn(\sigma_1^2,\sigma_2^2,\eta)$, let $(A,B,C)$ be constants and let $\{A_d,B_d\} \to \{A,B\}$ in probability and $C_d \to C$ in $L_1$ as $d \to \infty$. Then, it holds that
\begin{equation*}
    \lim_{d \to \infty} \mathbb{E}\left[ C_d 1\land \exp(Z_1 - A_d) \land \exp(Z_2 - B_d)\right] = C \mathbb{E}\left[ 1 \land \exp( Z_1 - A) \land \exp( Z_2 - B)\right].
\end{equation*}
Furthermore, the convergence is uniform over $(\sigma_1^2, \sigma_2^2, \eta)$ in a compact set provided that $\{A_d,B_d,C_d\} \to \{A,B,C\}$ uniformly over the same set.
\end{lemma}
\begin{proof}
We shall actually show uniform convergence in $L_1$. By Lemma~\ref{lemma:cross-term-bound}, we have that
\begin{align*}
	\mathbb{E}\left[ \left|C_d 1 \land e^{(Z_1 - A_d)  \land (Z_2 - B_d)} - C 1 \land e^{(Z_1 - A)  \land (Z_2 - B)} \right|\right] &\le \mathbb{E}\left[\left| C_d - C \right| \right] + C \mathbb{E}\left[1 \land \left| A_d - A \right| \right] + \\
	&\quad+ C \mathbb{E}\left[1 \land \left| B_d - B \right| \right]\\
	&\to 0 \quad\textup{as $d\to \infty$,}
\end{align*}
since $C_d \to C$ in $L_1$ and $\{A_d,B_d\} \to \{A,B\}$ in probability as $d \to \infty$. The squeeze theorem takes us to the claimed convergence; the uniformity of the convergence follows from the bound.
\end{proof}

Lemma~\ref{lemma:gauss-projections} below is used to prove Proposition~\ref{prop:corr-grad-proj}; it recalls a standard fact concerning the joint distribution of projections a multivariate Gaussian.
\begin{lemma} \label{lemma:gauss-projections}
    Let $Z \sim \mathcal{N}_d(0_d,I_d)$ be independent of $u,v\in\mathbb{R}^d$. Then,
    \begin{equation*}
        \left(u^\top Z, v^\top Z\right) \bigm\lvert \left\{\|u\|^2, \|v\|^2, u^\top v\right\} \sim \bvn \left( \|u\|^2, \|v\|^2; u^\top v \right).
    \end{equation*}
\end{lemma}

Lemma~\ref{lemma:coupling-min} below is used to prove the optimality of GCRN in Theorems~\ref{thm:gcrn-opt-stdgauss} and~\ref{thm:gcrn-opt-elliptgauss}; it characterizes the optimal coupling under the utility function $u(x,y) = x \land y$. 

\begin{lemma} \label{lemma:coupling-min}
	Let $\mu,\nu$ be real-valued distributions with finite first moments and let $\Gamma(\mu,\nu)$ be the set of all couplings of $(\mu,\nu)$. Then,
	\begin{equation*}
		\max_{(X,Y)\in\Gamma(\mu,\nu)}\mathbb{E}[X \land Y] = \min_{(X,Y)\in\Gamma(\mu,\nu)}\mathbb{E}[ |X - Y| ].
	\end{equation*}
	Furthermore, the optimal coupling is $X = F_\mu^{-1}(U)$, $Y = F_\nu^{-1}(U)$, where $U\sim\unif(0,1)$ and where $(F_{\mu},F_{\nu})$ denote the cumulative distribution functions of $(\mu,\nu)$ respectively.
\end{lemma}
\begin{proof}
	We have that $|X - Y| = X+Y - 2 X \land Y.$ Since $\mathbb{E}[X+Y]$ is independent of the coupling, the first claim immediately follows. The second claim follows by \citet[Remark 2.19(iii)]{villani2003topics}.
\end{proof}

Lemma~\ref{lemma:sublinear-growth-max-iid} states that the expected maximum of positive i.i.d. random variables grows strictly sub-linearly; it is used to show the limiting drift in Proposition~\ref{prop:drift-elliptgauss}. See \cite{correa2021expected-maximum-iid} for a simple proof.

\begin{lemma}[{\citealp[Theorem~6]{downey1990expected-maximum-iid}}]\label{lemma:sublinear-growth-max-iid}
Let $\mu$ be a positive real-valued distribution such that $\mathbb{E}_\mu[X] < \infty$ and let $(X_i)_{i=1}^d \iid \mu$. Then,
\begin{equation*}
    \lim_{d \to \infty}\mathbb{E}\left[ \max\left\{X_1,\dots,X_d\right\} \right] /d = 0.
\end{equation*}
\end{lemma}

Lemma~\ref{lemma:gaussian-integrals} evaluates some Gaussian integrals that appear in the limiting drifts of Propositions~\ref{prop:drift-stdgauss} and~\ref{prop:drift-elliptgauss}.

\begin{lemma}\label{lemma:gaussian-integrals}
Let $\alpha, \beta, \ell > 0$ and let $Z \sim \mathcal{N}_1(0,1)$. Then, it holds that
\begin{align*}
    \mathbb{E}\left[Z 1\land e^{-\ell \alpha Z - \ell^2/2}\right] &= -\ell\alpha e^{\ell^2(\alpha^2 - 1)/2} \Phi \left(\frac{\ell}{2\alpha}-\ell\alpha \right),\\
    \mathbb{E}\left[1 \land e^{- \ell \alpha Z - \ell^2/2} \land e^{-\ell \beta Z - \ell^2/2} \right] &= \Phi\left(-\frac{\ell}{2m}\right) + e^{\ell^2(m^2-1)/2} \left\{ \Phi \left(\frac{\ell}{2m} - lm \right) - \Phi(-\ell m) \right\} \\
&\quad + e^{\ell^2(M^2 - 1)/2} \Phi(-\ell M),
\end{align*}
where $m = \alpha \land \beta$ and $M = \alpha \lor \beta$.
\end{lemma}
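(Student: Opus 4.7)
Both identities reduce to direct integration once the piecewise minimum is resolved. The plan is to decompose the domain of $Z$ according to which branch of the minimum is active, apply complete-the-square to each truncated Gaussian integral, and change variables to arrive at $\Phi$-terms with the stated arguments.

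For the first identity, observe that $1 \wedge e^{-l\alpha Z - l^2/2} = 1$ when $Z \le -l/(2\alpha)$ and equals $e^{-l\alpha Z - l^2/2}$ otherwise. Splitting the expectation at $-l/(2\alpha)$ gives two integrals. The lower piece $\int_{-\infty}^{-l/(2\alpha)} z\phi(z)\,dz = -\phi(l/(2\alpha))$ is immediate from $\int z\phi(z)\,dz = -\phi(z)$. For the upper piece, the identity $e^{-l\alpha z - l^2/2}\phi(z) = e^{l^2(\alpha^2-1)/2}\phi(z + l\alpha)$ obtained by completing the square reduces the integrand to a shifted Gaussian density times $z$; the substitution $u = z + l\alpha$ then splits it into $\int u\phi(u)\,du - l\alpha \int \phi(u)\,du$ over $u \ge l\alpha - l/(2\alpha)$, producing a $\phi$-term and a $\Phi$-term. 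The $\phi$-term $e^{l^2(\alpha^2-1)/2}\phi(l\alpha - l/(2\alpha))$ simplifies to $\phi(l/(2\alpha))$ by direct expansion of the exponentials, so it cancels against the lower-piece contribution, leaving only $-l\alpha\, e^{l^2(\alpha^2-1)/2}\Phi(l/(2\alpha) - l\alpha)$.

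For the second identity, assume without loss of generality that $\alpha \le \beta$, so $m=\alpha$ and $M=\beta$. The three quantities being minimised cross at $Z = -l/(2\alpha)$ (where $1$ meets the first exponential), $Z = -l/(2\beta)$ (where $1$ meets the second), and $Z = 0$ (where the two exponentials coincide). Since $-l/(2\alpha) \le -l/(2\beta) \le 0$ and for $Z<0$ the smaller coefficient $m$ gives the smaller exponential while for $Z>0$ the larger coefficient $M$ gives the smaller one, the three branches partition $\mathbb{R}$ cleanly as $Z<-l/(2m)$ (minimum is $1$), $-l/(2m) < Z < 0$ (minimum is $e^{-lmZ - l^2/2}$), and $Z > 0$ (minimum is $e^{-lMZ - l^2/2}$). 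The first region contributes $\Phi(-l/(2m))$ directly. For the other two, the identity $e^{-l\gamma z - l^2/2}\phi(z) = e^{l^2(\gamma^2-1)/2}\phi(z + l\gamma)$ (with $\gamma = m$ or $\gamma = M$) converts each integral into an integral of a shifted Gaussian density, yielding $e^{l^2(m^2-1)/2}[\Phi(lm) - \Phi(lm - l/(2m))]$ and $e^{l^2(M^2-1)/2}\Phi(-lM)$ respectively. Rewriting $\Phi(lm) - \Phi(lm - l/(2m)) = \Phi(l/(2m) - lm) - \Phi(-lm)$ via $\Phi(x) = 1 - \Phi(-x)$ produces the stated form.

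There is no conceptual obstacle: the main work is purely bookkeeping in identifying the three regions correctly and carrying out the complete-the-square manipulations. The one place where a small algebraic check is needed is verifying the cancellation of the $\phi$-terms in the first identity, and carefully handling the ordering $-l/(2\alpha) \le -l/(2\beta) \le 0$ in the second.
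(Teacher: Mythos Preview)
Your proposal is correct and follows essentially the same approach as the paper: the same split at $-l/(2\alpha)$ with the complete-the-square identity $e^{-l\alpha z - l^2/2}\phi(z) = e^{l^2(\alpha^2-1)/2}\phi(z+l\alpha)$ and the same cancellation of $\phi$-terms for the first identity, and the same three-region decomposition $Z<-l/(2m)$, $-l/(2m)<Z<0$, $Z>0$ for the second. The paper's proof is just a slightly more explicit write-out of exactly the computations you describe.
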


Lemma~\ref{lemma:h-of-rho} below establishes some properties of the function $h:[-1,1] \times (0,\infty) \to (0, \infty)$,
\begin{equation*}
    h(\rho; \ell) = \mathbb{E}_{(Z_1,Z_2) \sim \bvn(\rho)} \left[1\land e^{\ell Z_1 - \ell^2/2}\land e^{\ell Z_2 - \ell^2/2}\right],
\end{equation*}
and is the key to our fixed-point analyses in Propositions~\ref{prop:fixedpoint-stdgauss} and ~\ref{prop:fixedpoint-elliptgauss}.

\begin{lemma}[Properties of $h$] \label{lemma:h-of-rho}
The function $h(\cdot)$ has the following properties:
\begin{enumerate}
\item $h(1; \ell) = 2 \Phi(-\ell/2)$.
\item $\partial_\rho h(\rho;\ell) > 0$ for all $\rho \in (-1,1)$ and $\lim_{\rho \nearrow 1}\partial_\rho h(\rho;\ell) = \infty$. 
\item $\partial^2_\rho h(\rho;\ell) > 0$ for all $\rho \in [0,1)$.
\end{enumerate}
\end{lemma}

The proofs of Lemmas~\ref{lemma:gaussian-integrals} and~\ref{lemma:h-of-rho} rely on repeated applications of elementary calculus. As they are not instructive, we postpone them to Appendix~\ref{app:proofs-postponed} at the end of this section.

\subsection{Proof of Proposition~\ref{prop:corr-grad-proj}}

This is a consequence of Lemma~\ref{lemma:gauss-projections}. It is clear that $(n_x^{\top} Z_x, n_y^{\top} Z_y)$ is bivariate Gaussian under each coupling, and that its coordinates have unit variance. The covariance is coupling-specific:
\begin{itemize}
	\item For CRN, $Z_y = Z_x$ and therefore $\cov(n_x^{\top} Z_x, n_y^{\top} Z_y) = n_x^{\top}n_y.$
	\item For reflection, $Z_y = Z_x - 2 (e^{\top} Z_x) e$ and so 
		\begin{equation*}
		    \cov(n_x^{\top} Z_x, n_y^{\top} Z_y) = \cov(n_x^{\top} Z_x, n_y^{\top} Z_x - 2(n_y^{\top}e)e^{\top}Z_x) = n_x^{\top}n_y - 2(n_y^{\top}e)(n_x^{\top}e).
		\end{equation*}
	\item For GCRN, $n_x^{\top} Z_x = n_y^{\top} Z_y$ and so trivially $\cov(n_x^{\top} Z_x, n_y^{\top} Z_y) = 1.$
\end{itemize}
This concludes the proof.

\subsection{The process $W$ is Markov in the standard Gaussian case} \label{sec:3d-process-markov}

Suppose that the target is standard Gaussian $\pi = \mathcal{N}_d(0_d,I_d)$. We assume that $\lVert X_t \rVert, \lVert Y_t \rVert \ne 0$; dealing with the case when one or both are null is straightforward and the proof is omitted. Let $\widehat{X}_t = X_t / \lVert X_t \rVert$ and $\widehat{Y}_t = Y_t / \lVert Y_t \rVert$.

In order to show that $W$  is Markov, it is sufficient to show that $\{(\lVert X_t \rVert^2, \lVert Y_t \rVert^2, X_t^{\top}Y_t)\}_{t \ge 0}$ is Markov. We have the following expressions, shared by all three couplings:
\begin{equation*}
\begin{aligned}
\lVert X_{t+1} \rVert^2 &= \lVert X_t \rVert^2 + \big(2h \lVert X_t \rVert \widehat{X}_t^{\top}Z_x + h^2 \lVert Z_x \rVert^2 \big) B_x,\\
\lVert Y_{t+1} \rVert^2 &= \lVert Y_t \rVert^2 + \big(2h \lVert Y_t \rVert \widehat{Y}_t^{\top}Z_y + h^2 \lVert Z_y \rVert^2 \big) B_y,\\
X_{t+1}^{\top} Y_{t+1} &= X_t^{\top} Y_t + h \lVert Y_t \rVert \widehat{Y}_t^\top Z_x B_x + h\lVert X_t \rVert \widehat{X}_t^{\top}Z_y B_y + h^2 Z_x^{\top}  Z_y B_x B_y,
\end{aligned}
\end{equation*}
where $B_x = \mathbbm{1}\big\{\log U \le - h \lVert X_t \rVert \widehat{X}_t^{\top}Z_x - h^2\lVert Z_x \rVert^2/2\big\}$ and analogously for $B_y$. Since $U$ is independent of the remaining randomness, $\big( \widehat{X}_t^{\top}Z_x, \widehat{Y}_t^{\top}Z_x, \widehat{X}_t^{\top}Z_y, \widehat{Y}_t^{\top}Z_y, \lVert Z_x \rVert^2, \lVert Z_y \rVert^2 \big),$ it suffices to show that the joint distribution of these six random variables is uniquely determined by the triplet $(\lVert X_t \rVert^2, \lVert Y_t \rVert^2, X_t^{\top}Y_t)$.

Consider the projections of $Z_x,Z_y$ onto $\textup{span}\{X_t,Y_t\}$ and its orthogonal complement separately. Under the couplings of Section~\ref{sec:coupling-list} (CRN, reflection, or GCRN), the joint randomness reduces to a $\chi^2_{d-2}$ random variable and an independent 6-dimensional multivariate normal with zero mean and a covariance matrix that is uniquely determined by $(\lVert X_t \rVert^2, \lVert Y_t \rVert^2, X_t^{\top}Y_t)$. (The calculations themselves are straightforward, but tedious, and are omitted.) It follows that the process of interest $W$ is Markov, which concludes the proof.

\subsection{Proof of Proposition~\ref{prop:drift-stdgauss}}

Throughout this proof, we condition on $W_{t/d} = \left(\lVert X_{t}\rVert^2, \lVert Y_{t}\rVert^2, X_{t}^\top Y_{t} \right)/d = (x,y,v) \in \mathcal{S}.$
% let $\mathbb{E}$ denote the expectation conditional on 

Firstly, we expand the drift to 
\begin{equation*}
d(W_{(t+1)/d} - W_{t/d}) = \left(\lVert X_{t+1}\rVert^2 - \lVert X_{t}\rVert^2, \lVert Y_{t+1}\rVert^2 - \lVert Y_{t}\rVert^2,  X_{t+1}^\top Y_{t+1} - X_{t}^\top Y_{t} \right).
\end{equation*}
Further expanding the first and last terms, we have that
\begin{equation}\label{eqn:pf-drift-stdgauss-1}
\begin{aligned}
	\lVert X_{t+1}\rVert^2 - \lVert X_{t}\rVert^2 &= (2h X_t^\top Z_x +  h^2 \lVert Z_x \rVert^2 ) B_x, \\
 	X_{t+1}^{\top} Y_{t+1} - X_t^{\top} Y_t &= h Y_t^{\top} Z_x B_x + h X_t^{\top} Z_y B_y + h^2 Z_x^{\top} Z_y B_x B_y,
\end{aligned}
\end{equation}
where $B_x = \mathbbm{1}\{ U \le \exp( - h X_t^\top Z_x -  h^2 \lVert  Z_x \rVert^2 /2 ) \}$ and similarly for $B_y$ using the same $U\sim\unif(0,1)$. Except for the jump concordance $h^2 Z_x^{\top} Z_y B_x B_y$, all terms are coupling-independent, so we deal with these first.

\subsubsection{Coupling-independent terms}

Integrating over $U\sim\unif(0,1)$, we have that 
\begin{align*}
	\mathbb{E}\left[\lVert X_{t+1}\rVert^2 - \lVert X_{t}\rVert^2\right] 
&= \mathbb{E}\left [ (-2Z_1 + h^2 \lVert Z_x \rVert^2)1 \land \exp \left( Z_1 -  h^2 \lVert  Z_x \rVert^2 /2 \right) \right],\\
\mathbb{E}\left[h Y_t^{\top} Z_x B_x\right] &= -\mathbb{E}\left [ Z_2 1 \land \exp \left( Z_1 -  h^2 \lVert  Z_x \rVert^2 /2 \right) \right],
\end{align*}
where $(Z_1,Z_2) = (- h X_t^\top Z_x, - h Y_t ^\top  Z_x )$. By Lemma~\ref{lemma:gauss-projections}, we have that 
$$(Z_1,Z_2) \sim \bvn (h^2\lVert  X_t \rVert^2, h^2\lVert  X_t \rVert^2; h^2 X_{t}^\top Y_{t}) = \bvn(\ell^2 x, \ell^2 y; \ell^2 v).$$

We have that $\lim_{d \to \infty} h^2 \lVert  Z_x \rVert^2 = \ell^2$ in $L_1$ uniformly over $(x,y,v) \in \mathcal{S}$, for any compact $\mathcal{S}$. By Lemmas~\ref{lemma:conv-1} and~\ref{lemma:conv-2} it follows that
\begin{align*}
	\lim_{d \to \infty}\mathbb{E}\left[\lVert X_{t+1}\rVert^2 - \lVert X_{t}\rVert^2\right] 
		&= \mathbb{E}\left [ (-2Z_1 + \ell^2) 1 \land \exp \left( Z_1 -  \ell^2 /2 \right) \right],\\
	\lim_{d \to \infty}\mathbb{E}\left[h Y_t^{\top} Z_x B_x\right] 
		&= -\mathbb{E}\left [ Z_2 1 \land \exp \left( Z_1 -  \ell^2/2 \right) \right],
\end{align*}
uniformly over the same set. The desired formulae follow by Lemma~\ref{lemma:gaussian-integrals}; the first limit is also derived in \cite{christensen2005scaling}. The quantities $\lim_{d \to \infty}\mathbb{E}\left[\lVert Y_{t+1}\rVert^2 - \lVert Y_{t}\rVert^2\right]$ and  $\lim_{d \to \infty}\mathbb{E}\left[ h X_t^{\top} Z_y B_y\right]$ follow by symmetry.

This completes the calculations relating to the coupling-independent terms. We now turn to the coupling-dependent term, the jump concordance. 

\subsubsection{Correlation of projections}

To evaluate the limiting expected jump concordance, we must express $ \rho = \cov(n_x^\top Z_x, n_y^\top Z_y)$ as a function of $(x,y,v)$. Recall that $e = (X_t - Y_t) / \lVert X_t - Y_t \rVert$; the normalized gradient at $X_t$ is $n_x = -X_t / \lVert X_t \rVert$. We turn to Proposition~\ref{prop:corr-grad-proj} and compute 
\begin{equation*}
	n_x^\top n_y = \frac{X_t^\top Y_t}{\lVert X_t \rVert \lVert Y_t \rVert} = \frac{v}{xy^{1/2}}, \quad 
	n_x^\top e = \frac{X_t^\top (Y_t - X_t)}{\lVert X_t \rVert \lVert X_t - Y_t \rVert} = \frac{v-x}{x^{1/2}(x+y-2v)^{1/2}},
\end{equation*}
and also $n_y^\top e  = (v-y) \{y(x+y-2v)\}^{-1/2}$ by symmetry. Plugging these into Proposition~\ref{prop:corr-grad-proj}, we have that
\begin{equation*}
    \rho_\textup{crn}= \frac{v}{(xy)^{1/2}}, \quad  \rho_\textup{refl} = \frac{2xy-(x+y)v}{(xy)^{1/2}(x+y-2v)}, \quad \rho_\textup{gcrn} = 1.
\end{equation*}

\subsubsection{Coupling-dependent term}

Integrating over $U\sim\unif(0,1)$, we have that 
\begin{align*}
	\mathbb{E}\left[h^2 Z_x^{\top} Z_y B_x B_y\right] 
	&= \mathbb{E}\left[ h^2  Z_x^\top Z_y 1 \land \exp \left(Z_3 -  h^2 \lVert  Z_x \rVert^2 /2 \right) \land \exp \left(Z_4 -  h^2 \lVert  Z_y \rVert^2 /2 \right) \right],
\end{align*}
where $(Z_3,Z_4) = (- h X_t^\top Z_x, - h Y_t ^\top  Z_y) = (\ell x^{1/2} n_x^\top Z_x,\ell y^{1/2}n_y^\top Z_y)$. By Proposition~\ref{prop:corr-grad-proj}, we have that 
$$ (Z_3,Z_4) \sim \bvn(\ell^2 x, \ell^2 y; \ell^2 (xy)^{1/2} \rho),$$
where $\rho$ is coupling-specific and evaluated above.

We have the following limits in $L_1$ as $d\to\infty$: $h^2 \lVert  Z_x \rVert^2 \to \ell^2,$ $h^2 \lVert  Z_y \rVert^2 \to \ell^2$ and $h^2  Z_x^\top Z_y \to \ell^2$. The latter limit holds for all considered couplings, as they are low-rank perturbations of the CRN coupling. Moreover, the limits hold uniformly over $(x,y,v) \in \mathcal{S}$ for any compact $\mathcal{S}$. By Lemma~\ref{lemma:conv-2} it follows that
\begin{align*}
	\lim_{d \to \infty}\mathbb{E}\left[h^2 Z_x^{\top} Z_y B_x B_y\right] 
	&= \ell^2 \mathbb{E}\left[ 1 \land \exp \left(Z_3 -  \ell^2 /2 \right) \land \exp \left(Z_4 -  \ell^2 /2 \right) \right],
\end{align*}
and this limit is uniform the same set. Putting the limits together concludes the proof.

\subsection{Proof of Proposition~\ref{prop:fluctuations-stdgauss}}

Repeat the proof of Proposition~\ref{prop:drift-stdgauss} up to and including the expansions~\eqref{eqn:pf-drift-stdgauss-1}. We bound the second moment of each term in these expansions; all bounds hold due to $B_{x,y} \in [0,1]$:
\begin{align*}
	\mathbb{E}[(2h X_t^\top Z_x  B_x)^2] &\le \mathbb{E}[(2h X_t^\top Z_x)^2] = \mathbb{E}_{Z \sim \mathcal{N}(0,1)}[4h^2 \lVert X_t \rVert^2 Z^2] = 4 \ell^2 x, \\
	\mathbb{E}[(h^2 \lVert Z_x \rVert^2 B_x)^2] &\le \mathbb{E}[h^4 \lVert Z_x \rVert^4 ] = \ell^4 (d+2)/d \le 3\ell^4,\\
	\mathbb{E}[(h^2  Z_x^\top Z_y B_x B_y)^2] &\le \mathbb{E}[ h^4  (Z_x^\top Z_y)^2 ] \le  \mathbb{E}[  h^4 (\lVert Z_x \rVert^4 + \lVert Z_y \rVert^4)/2] = \ell^4 (d+2)/d \le 3\ell^4,
\end{align*}
where for the last two bounds we used the moments of $\lVert Z_x \rVert^2 \sim \chi^2_d$ and that $d \ge 1$. By symmetry, $\mathbb{E}[(2h Y_t^\top Z_y  B_y)^2] \le 4 \ell^2 y$, $\mathbb{E}[(h Y_t^\top Z_x  B_x)^2] \le \ell^2 y$ and $\mathbb{E}[(h X_t^\top Z_y  B_y)^2] \le \ell^2 x$. These bounds are independent of the coupling and the dimension $d$.

Now, by Cauchy-Schwarz, $(x+y+z)^2 \le 3(x^2 + y^2 + z^2)$ for all $x,y,z\in\mathbb{R}$. Using this inequality twice, we obtain that $\mathbb{E} [d^2 \lVert W_{(t+1)/d} - W_{t/d}\rVert^2]\le f_\ell(x,y,v)$ for all $d \ge 1$ and for all couplings, where $f_\ell(x,y,v)$ is a linear combination of the moment bounds obtained above; $f_\ell(\cdot)$ is therefore linear in all of its arguments and is independent of $d$. It follows that, for any compact set $\mathcal{S} \subset [0, \infty)^3$,
\begin{equation*}
    \lim_{d \to \infty}\sup_{(x,y,v) \in \mathcal{S}} \mathbb{E} \left[ d^2\lVert W_{(t+1)/d} - W_{t/d} \rVert^2 \right] \le \lim_{d \to \infty}\sup_{(x,y,v) \in \mathcal{S}} f_\ell(x,y,v) < \infty.
\end{equation*}
This concludes the proof.

\subsection{Proof of Theorem~\ref{thm:ode-limit-stdgauss}}

We show here that the infinitesimal generator of the process $W^{(d)}$ converges to that of the ODE $\dot{w(t)}=c_\ell(w(t))$ as $d \to \infty$. Textbook results concerning the convergence of stochastic processes then allow us to conclude. This proof mirrors that of \citet[Theorem 1]{christensen2005scaling} and is identical for each coupling of Section~\ref{sec:coupling-list}.

\subsubsection{Technical preliminaries}

We first recall some standard technical results. Let $\Sbar=\{(x,y,v): x\ge 0, y \ge 0, v\le \sqrt{xy} \}$ and let $\mathcal{C}^\infty$ be the set of all infinitely differentiable functions $h:\Sbar \to \mathbb{R}^3$ with compact support. Let $c:\Sbar \to \mathbb{R}^3$ be as in Proposition~\ref{prop:drift-stdgauss} and let $w:[0, \infty) \to \Sbar$ be the solution to the ordinary differential equation (ODE) $\dot{w}(t) = c(w(t))$. Let $G$ be the infinitesimal generator of $\dot{w}(t) = c(w(t))$, which we recall satisfies $Gh(x) = \nabla h(x)^{\top} c(x)$ for all $h \in \mathcal{C}^\infty$ and $x \in \mathcal{S}$ \citep[e.g.][Theorem 7.3.3]{oksendal1998stochastic}. Moreover, the set $\mathcal{C}^\infty$ is a core for the infinitesimal generator of the ODE $\dot{w}(t) = c(w(t))$ \citep[e.g.][Theorem 31.5, as an ODE is a L\'evy process]{sato1999levy}.

\subsubsection{Convergence of generator}

We now proceed to the main body of the proof, showing the convergence of the discrete time generator to the continuous time generator.

Let $G^{(d)}$ be the discrete time infinitesimal generator of the process $W = W^{(d)}$ and let $h \in \mathcal{C}^{\infty}$. For $w \in\Sbar$, a Taylor expansion gives
\begin{align*}
    G^{(d)}h(w) &= \mathbb{E}\left[ h(W_{(t+1)/d}) - h(W_{t/d}) \mid W_{t/d} = w \right]d\\
    &= \mathbb{E}\left[\nabla h(w)^{\top}(W_{(t+1)/d} - w) \mid W_{t/d} = w \right] d \\
    &\quad+ \frac{1}{2} \mathbb{E}\left[(W_{(t+1)/d} - w)^{\top} \nabla^2 h(w^*) (W_{(t+1)/d} - w) \mid W_{t/d} = w \right]d,
\end{align*}
for some $w^*$ on the segment from $W_{(t+1)/d}$ to $w$, where $\nabla^2 h$ denotes the Hessian of $h$.

Recall that the support of $h$ is compact. Proposition~\ref{prop:drift-stdgauss} therefore implies that the first term above converges to $Gh(w) = \nabla h(w)^\top c(w)$, uniformly over $w \in \Sbar$. We claim that the second term converges to zero uniformly over the same set. To see this, since $h \in \mathcal{C}^{\infty}$, it follows that there exists an $M < \infty$ such that $\sup_{x \in \Sbar}\lVert \nabla^2 h(x) \rVert_\infty \le M$, where $\lVert \cdot \rVert_\infty$ is the sup-norm. It follows that 
\begin{equation*}
(W_{(t+1)/d} - w)^{\top} \nabla^2 h(w^*) (W_{(t+1)/d} - w) \le M \lVert W_{(t+1)/d} - w \rVert^2.
\end{equation*}
Proposition~\ref{prop:fluctuations-stdgauss} then takes us to the claimed convergence.

Altogether, we have that
\begin{equation}\label{eqn:thm1-pf}
    \lim_{d \to \infty} \sup_{w \in \Sbar} \left|G^{(d)}h(w) - Gh(w) \right| = 0,
\end{equation}
that is the convergence of the infinitesimal generators. The limit \eqref{eqn:thm1-pf} is analogous to \citet[Eqn.~7]{christensen2005scaling}.

\subsubsection{Convergence of stochastic process}

The final part of the proof promotes the convergence of the infintesimals $G^{(d)} \to G$ to the weak convergence of the stochastic processes $W^{(d)} \implies w$. Since $\mathcal{C}^{\infty}$ is a core for the generator $G$, the limit~\eqref{eqn:thm1-pf} is equivalent to point (i) of \citet[Theorem 17.28]{kallenberg2021foundations}. Point (iv) of the same theorem concludes the proof.

\subsection{Proof of Proposition~\ref{prop:fixedpoint-stdgauss}}

Recall that the drift is $c_\ell(x,y,v) = (a_\ell(x), a_\ell(y), b_\ell(x,y,v))$, where
\begin{align*}
	a_\ell(x) ={}& \ell^2(1-2x) e^{\ell^2(x-1)/2} \Phi\Big(\frac{\ell}{2x^{1/2}} - \ell x^{1/2}\Big) +\ell^2\Phi\Big(-\frac{\ell}{2x^{1/2}}\Big), \\
	\begin{split}
	b_\ell(x,y,v) ={}& \ell^2\mathbb{E}_{(Z_1,Z_2)\sim \bvn(\rho(x,y,v))} \left[1\land e^{\ell x^{1/2}Z_1 - \ell^2/2}\land e^{\ell y^{1/2}Z_2 - \ell^2/2}\right] \\
	&- \ell^2v\Big[ e^{\ell^2(x-1)/2} \Phi\Big(\frac{\ell}{2x^{1/2}} - \ell x^{1/2}\Big) + e^{\ell^2(y-1)/2} \Phi\Big(\frac{\ell}{2y^{1/2}} - \ell y^{1/2}\Big) \Big],
	\end{split}
\end{align*}
and where $\rho(\cdot)$ is coupling-specific:
\begin{equation*}
	\rho_\textup{crn}(x,y,v) = \frac{v}{(xy)^{1/2}}, \quad  \rho_\textup{refl}(x,y,v) = \frac{2xy-(x+y)v}{(xy)^{1/2}(x+y-2v)}, \quad \rho_\textup{gcrn}(x,y,v) = 1.
\end{equation*}

The fixed points are the solutions of $a_\ell(x) = a_\ell(y) = b_\ell(x,y,v) = 0$. Since all but one of the coordinates of the ODE are autonomous, the fixed points are stable if and only if $\partial_x a_\ell(x)<0$, $\partial_y a_\ell(y)<0$  and $\partial_v b_\ell(x,y,v) <0$. The remainder of the proof is entirely elementary calculus.

\subsubsection{Fixed points and their stability}

We start with the fixed points and their linear stability analysis. Firstly, by \citet[Lemma~4.1]{kuntz2019diffusion}: $a_\ell(x) > 0$ for $x \in [0,1)$; $a_\ell(1) = 0$;  $a_\ell(x) < 0$ for $x \in (1,\infty)$. It follows that the unique solution of $a_\ell(x) = 0$ is $x^* = 1$, and furthermore this solution must be stable.

The fixed points are therefore of the form $(1,1,v^*)$, with stability in the first two coordinates and where $v^* \in[-1,1]$ is a root of $b_\ell(1,1,v) = 0$. By Lemma~\ref{lemma:h-of-rho}, we can re-write this equation to
\begin{equation*}
	h_\ell(\rho(v)) - vh_\ell(1) = 0,
\end{equation*}
where $h_\ell(\rho) = \mathbb{E}_{(Z_1,Z_2)\sim \bvn(\rho)} \big[1\land e^{\ell Z_1 - \ell^2/2}\land e^{\ell Z_2 - \ell^2/2}\big]$ and $\rho(\cdot)$ is coupling-specific and takes the values 
\begin{equation*}
\rho_\textup{crn}(v) = v, \quad \rho_\textup{refl}(v) = \rho_\textup{gcrn}(v) = 1.
\end{equation*}
For both the GCRN and reflection couplings, since $h_\ell(1) > 0$ it follows that $v^* = 1$ is the unique fixed point, and it is stable since $\partial_v g_\ell(1)  = -h_\ell(1) < 0$. For the CRN coupling, let $g_\ell(v) = h_\ell(v) - vh_\ell(1)$. By  Lemma~\ref{lemma:h-of-rho}, $g_\ell$ is convex on $[0,1]$ and satisfies $g_\ell(0) > 0$, $g_\ell(1) = 0$ and $\lim_{v \to 1}\partial_v g_\ell(v) = \infty$. It follows that there are two fixed points: $v^*_\textup{u} = 1$, which is unstable and $v^*_\textup{crn} \in(0, 1)$, which is stable as we must have $g_\ell(v^*_{\textup{crn}}) < 0$ due to the convexity of $g_\ell$. This concludes the proof.

\subsection{Proof of Theorem~\ref{thm:gcrn-opt-stdgauss}}

Condition on the same quantities as in the proof of Proposition~\ref{prop:drift-stdgauss}.

\subsubsection{GCRN attains the claimed upper bound}

From the proof of Proposition~\ref{prop:drift-stdgauss}, under the GCRN coupling it holds that
\begin{equation*}
\lim_{d\to\infty} \mathbb{E}\left[ h^2 Z_x^{\top} Z_y B_x B_y \right] = \ell^2\mathbb{E}_{Z\sim \mathcal{N}(0,1)} \left[1\land e^{\ell x^{1/2}Z - \ell^2/2}\land e^{\ell y^{1/2}Z - \ell^2/2}\right].
\end{equation*}
This coincides with the claimed limit supremum. 

To complete the proof, it is sufficient to show that the quantity on the right-hand side is an upper upper bound on the limit supremum over $\mathcal{P}$ of the left-hand side. We now show this by an argument similar to the proof of Lemma~\ref{lemma:conv-2}.

\subsubsection{Claimed upper bound on the limit supremum}

We have the sequence of bounds
\begin{align*}
    \mathbb{E} [h^2 Z_x^{\top}  Z_y B_x B_y] 
    &= \ell^2 \mathbb{E} [B_x B_y] + \ell^2 \mathbb{E} \left[\left(\frac{1}{d} Z_x^\top Z_y - 1\right) B_x B_y \right] \\ 
    & \le \ell^2 \mathbb{E} [B_x B_y]  + \frac{\ell^2}{2} \mathbb{E}\left[\Big(\frac{1}{d} \|Z_x\|^2 + \frac{1}{d}\|Z_y\|^2 -2 \Big) B_x B_y \right] \\
	& \le \ell^2 \mathbb{E} [B_x B_y]  + \frac{\ell^2}{2} \mathbb{E}\left[\Big|\frac{1}{d} \|Z_x\|^2 + \frac{1}{d}\|Z_y\|^2 -2 \Big| \right] \\
    & \le \ell^2 \mathbb{E} [B_x B_y]  + \ell^2 \mathbb{E}\left[\Big|\frac{1}{d} \|Z_x\|^2 -1 \Big| \right],
\end{align*}
where in the second line we have used that $B_x B_y \ge 0$ and that $x^\top y \le (\|x\|^2 + \|y\|^2)/2$ for all $x,y\in \mathbb{R}^d$; in the third line that $B_x B_y \le 1$; in the final line the triangle inequality and that $\{Z_x, Z_y\}$ are equal in distribution. Since $\lim_{d \to \infty}\|Z_x\|^2 /d =1$ in $L_1$, the second term in the bound converges to zero; moreover, the convergence is uniform over the kernel coupling $\bar{K}$ used.

We now bound the first term above. Let $(Z_1,Z_2) = (n_x^\top Z_x, n_y^\top Z_y)$ so that 
\begin{align*}
	\mathbb{E} [B_x B_y] 
		&= \mathbb{E} \left[ \mathbbm{1}\{U_x \le  e^{\ell x^{1/2} Z_1 - (\ell^2/2)  \|Z_x\|^2 /d } \}  \mathbbm{1}\{U_y \le  e^{\ell y^{1/2} Z_2 - (\ell^2/2)  \|Z_y\|^2 /d }  \} \right].
\end{align*}
Since we are in the class $\mathcal{P}$ of product couplings, we have that $(U_x,U_y)$ are independent of $\{Z_x,Z_y,Z_1,Z_2\}$. Taking expectations with respect to the uniforms first, it follows that
\begin{align*}
	\mathbb{E} [B_x B_y] 
		&\le \mathbb{E} \left[ 1 \land e^{\ell x^{1/2} Z_1 - (\ell^2/2)  \|Z_x\|^2 /d } \land  e^{\ell y^{1/2} Z_2 - (\ell^2/2)  \|Z_y\|^2 /d }\right]\\
		&\le \mathbb{E} \left[ 1 \land e^{\ell x^{1/2} Z_1 - \ell^2/2 } \land  e^{\ell y^{1/2} Z_2 - \ell^2/2 }\right] +\ell^2 \mathbb{E}\left[1 \land \Big| \frac{1}{d}\|Z_x\|^2 -1 \Big|\right],
\end{align*}
where finally we have used Lemma~\ref{lemma:cross-term-bound} and that $\{Z_x, Z_y\}$ are equal in distribution. The second term converges to zero as $d \to \infty$, uniformly over the kernel coupling $\bar{K}$ used. For the first term, Lemma~\ref{lemma:coupling-min} implies that
\begin{equation*}
	\sup_{\bar{K} \in \mathcal{P}}\mathbb{E} \left[ 1 \land e^{\ell x^{1/2} Z_1 - \ell^2/2 } \land  e^{\ell y^{1/2} Z_2 - \ell^2/2 }\right] = \mathbb{E}_{Z \sim \mathcal{N}(0,1)} \left[ 1 \land e^{\ell x^{1/2} Z - \ell^2/2 } \land  e^{\ell y^{1/2} Z - \ell^2/2 }\right],
\end{equation*}
since $Z_1,Z_2 \sim \mathcal{N}(0,1)$ and since $f_a(z) = 1 \land \exp(az-\ell^2/2)$ is increasing for all $a\ge0$. 

Putting all bounds together, we have that
\begin{equation*}
	\adjustlimits\lim_{d\to\infty}\sup_{\bar{K} \in \mathcal{P}} \mathbb{E}\left[ h^2 Z_x^{\top} Z_y B_x B_y \right] \le \ell^2\mathbb{E}_{Z\sim \mathcal{N}(0,1)} \left[1\land e^{\ell x^{1/2}Z - \ell^2/2}\land e^{\ell y^{1/2}Z - \ell^2/2}\right].
\end{equation*}
This concludes the proof.

\subsection{Proof of Theorem~\ref{thm:gcrn-opt-elliptgauss}}

This proof is virtually identical to that of Theorem~\ref{thm:gcrn-opt-stdgauss}. Recall that the target is $\pi= \mathcal{N}(0, \Omega^{-1})$ and that we have defined the inner product $\langle x,y\rangle_{[k]} = x^{\top} \Omega^k y$ and the squared norm $\lVert x \rVert_{[k]}^2 = x^{\top} \Omega^k x$. Throughout the proof, we condition on
$$\left(\lVert X_t \rVert^2_{[2]}, \lVert Y_t \rVert^2_{[2]},\langle X_{t},Y_{t} \rangle_{[2]}\right)/(z_1 d) = (x_2,y_2,v_2).$$
We have that $B_x = \mathbbm{1}\{U \le \exp(- h\langle X_t, Z_x \rangle_{[1]} - h^2 \lVert Z_x \rVert_{[1]}^2/2)\}$ is the acceptance step, with the analogous expression for $B_y$ with the same $U\sim\unif(0,1)$.

\subsubsection{GCRN attains the claimed upper bound}

Fix the coupling to be GCRN. By Proposition~\ref{prop:corr-grad-proj}, we have that 
$$(- h\langle X_t, Z_x \rangle_{[1]}, - h\langle Y_t, Z_y \rangle_{[1]}) = (\ell^2 z_1 x_2 Z, \ell^2 z_1 y_2 Z) = (\lambda x_2 Z, \lambda y_2 Z)$$ 
where $Z \sim \mathcal{N}(0,1)$. The following limits hold as $d \to \infty$: $h^2 \lVert Z_x \rVert_{[1]}^2 \to \ell^2 z_1 = \lambda^2$ and $h^2 \lVert Z_y \rVert_{[1]}^2 \to \lambda^2$ in probability (by Assumption~\ref{assumption:ellipt-1} and the law of large numbers); $h^2 Z_x^\top Z_y \to \ell^2$ in $L_1$. By Lemma~\ref{lemma:conv-2}, it follows that
\begin{equation*}
\lim_{d\to\infty} \mathbb{E}\left[ h^2 Z_x^{\top} Z_y B_x B_y \right] = \ell^2\mathbb{E}_{Z\sim \mathcal{N}(0,1)} \left[1\land e^{\lambda x_2^{1/2}Z - \lambda^2/2}\land e^{\lambda y_2^{1/2}Z - \lambda^2/2}\right],
\end{equation*}
so the GCRN coupling attains the upper bound claimed in Theorem~\ref{thm:gcrn-opt-elliptgauss}.

To complete the proof, we show that the above quantity is indeed an upper upper bound on the limit supremum over $\mathcal{P}$ of the left-hand side.

\subsubsection{Upper bound on limit supremum}

As in the proof of Theorem~\ref{thm:gcrn-opt-stdgauss}, we have the coupling-independent bound
\begin{equation*}
	\begin{aligned}
	\sup_{\bar{K} \in \mathcal{P}}\mathbb{E} [h^2 Z_x^{\top}  Z_y B_x B_y] 
	&\le \ell^2\mathbb{E}_{Z \sim \mathcal{N}(0,1)} \left[ 1 \land e^{\lambda x_1^{1/2} Z - \lambda^2/2 } \land  e^{\lambda y^{1/2} Z - \lambda^2/2 }\right] \\
	&\quad + \ell^2 \mathbb{E}\left[\Big|\frac{1}{d} \|Z_x\|^2 -1 \Big| \right] + \ell^2 \mathbb{E}\left[1 \land \Big| h^2\|Z_x\|_{[1]}^2 - \lambda^2 \Big|\right].
	\end{aligned}
\end{equation*}
The second term tends to zero as $d \to \infty$, and by Assumption~\ref{assumption:ellipt-1} so does the third. It follows that
\begin{equation*}
\adjustlimits \lim_{d\to\infty} \sup_{\bar{K}\in \mathcal{P}}\mathbb{E}\left[ h^2 Z_x^{\top} Z_y B_x B_y \right] \le \ell^2\mathbb{E}_{Z\sim \mathcal{N}(0,1)} \left[1\land e^{\lambda x_2^{1/2}Z - \lambda^2/2}\land e^{\lambda y_2^{1/2}Z - \lambda^2/2}\right],
\end{equation*}
which concludes the proof.

\subsection{Proof of Proposition~\ref{prop:drift-elliptgauss}}

This proof almost is identical to that of Proposition~\ref{prop:drift-stdgauss}, though some additional bookkeeping is required. Recall that the target is $\pi= \mathcal{N}(0, \Omega^{-1})$ and that we have defined the inner product $\langle x,y\rangle_{[k]} = x^{\top} \Omega^k y$ and the squared norm $\lVert x \rVert_{[k]}^2 = x^{\top} \Omega^k x$. Recall that we condition on 
$$\left(\lVert X_t \rVert^2_{[j]}, \lVert Y_t \rVert^2_{[j]},\langle X_{t},Y_{t} \rangle_{[j]}\right)/\left(z_{j-1} d\right) = (x_j,y_j,v_j) \quad \text{for all } j \in \{-1,0,1,2\}$$
in all expectations below.

Unless specified otherwise, the claims that follow hold over all $k \in \{-1,0,1\}$. The relevant one-step differences are
\begin{equation}\label{eqn:pf-drift-elliptgauss-1}
\begin{aligned}
    \lVert X_{t+1} \rVert_{[k]}^2 - \lVert X_t \rVert_{[k]}^2 ={}& \left\{2h \langle X_t, Z_x \rangle_{[k]} + h^2 \lVert Z_x \rVert_{[k]}^2\right\} B_x, \\
   \langle X_{t+1}, Y_{t+1}\rangle_{[k]} - \langle X_{t}, Y_{t}\rangle_{[k]} ={}& h \langle Y_t, Z_x\rangle_{[k]} B_x + h \langle X_t, Z_y\rangle_{[k]} B_y + h^2 \langle Z_x, Z_y\rangle_{[k]} B_xB_y. 
\end{aligned}
\end{equation}
where $B_x = \mathbbm{1}\{U \le \exp(- h\langle X_t, Z_x \rangle_{[1]} - h^2 \lVert Z_x \rVert_{[1]}^2/2)\}$, with the analogous expression for $B_y$ with the same $U\sim\unif(0,1)$. All terms except for $h^2 \langle Z_x, Z_y\rangle_{[k]} B_xB_y$ are coupling-independent, so we deal with them first.

\subsubsection{Coupling-independent terms}

Integrating over $U\sim\unif(0,1)$, we have that 
\begin{equation*}
\begin{aligned}
	\mathbb{E}\left[\lVert X_{t+1} \rVert_{[k]}^2 - \lVert X_t \rVert_{[k]}^2\right] 
		&= \mathbb{E}\left [ (-2Z_2 +  h^2 \lVert Z_x \rVert_{[k]}) 1 \land \exp \left(Z_1 -  h^2 \lVert Z_x \rVert_{[1]}^2/2 \right) \right],\\
	\mathbb{E}\left[h \langle Y_t, Z_x\rangle_{[k]} B_x\right] 
		&= -\mathbb{E}\left [ Z_3 1 \land \exp \left( Z_1 -  h^2 \lVert Z_x \rVert_{[1]}^2/2 \right) \right],
\end{aligned}
\end{equation*}
where $(Z_1,Z_2,Z_3) = (- h\langle X_t, Z_x \rangle_{[1]}, - h \langle X_t, Z_x \rangle_{[k]}, - h \langle Y_t, Z_x \rangle_{[k]})$. By Lemma~\ref{lemma:gauss-projections}, we have that
\begin{equation*}
\begin{aligned}
	(Z_1,Z_2) &\sim \bvn(\ell^2 z_1 x_2, \ell^2 z_{2k-1} x_{2k};\ell^2 z_{k} x_{k+1}) \\
	(Z_1,Z_3) &\sim \bvn(\ell^2 z_1 x_2, \ell^2 z_{2k-1} y_{2k};\ell^2 z_{k} v_{k+1}).
\end{aligned}
\end{equation*}
By Assumption~\ref{assumption:ellipt-2} and the law of large numbers, we have the following limits in $L_1$ as $d \to \infty$: $h^2 \lVert Z_x \rVert_{[1]}^2 \to \ell^2 z_1$ and $h^2 \lVert Z_x \rVert_{[k]}^2 \to \ell^2 z_k$. By Lemmas~\ref{lemma:conv-1} and~\ref{lemma:conv-2}, and using the short-hand $\lambda^2 = \ell^2 z_1$, it follows that 
\begin{equation*}
\begin{aligned}
	\lim_{d \to \infty}\mathbb{E}\left[\lVert X_{t+1} \rVert_{[k]}^2 - \lVert X_t \rVert_{[k]}^2\right] 
		&= \mathbb{E}\left [ \left(-2Z_2 +  \ell^2 z_k\right) 1 \land \exp \left(Z_1 -  \lambda^2 /2 \right) \right],\\
	\lim_{d \to \infty}\mathbb{E}\left[h \langle Y_t, Z_x\rangle_{[k]} B_x\right] 
		&= -\mathbb{E}\left [ Z_3 1 \land \exp \left(Z_1 - \lambda^2/2 \right) \right].
\end{aligned}
\end{equation*}
The analytical formulae for these quantities follow from Lemma~\ref{lemma:gaussian-integrals}. The expressions for 
\begin{equation*}
\lim_{d \to \infty}\mathbb{E}\left[\lVert Y_{t+1}\rVert_{[k]}^2 - \lVert Y_{t}\rVert_{[k]}^2\right], \quad \lim_{d \to \infty}\mathbb{E}\left[ h \langle X_t, Z_y\rangle_{[k]} B_y\right]
\end{equation*}
follow by symmetry.

\subsubsection{Correlation of projections}

To evaluate the term $h^2 \langle Z_x, Z_y\rangle_{[k]} B_xB_y$, we must express $\rho = \cov(n_x^\top Z_x, n_y^\top Z_y)$ in terms of the quantities we have conditioned on. Recall that $e = (X_t - Y_t) / \lVert X_t - Y_t \rVert$; the normalized gradient at $X_t$ is $n_x = -\Omega X_t / \lVert \Omega X_t \rVert$. We turn to Proposition~\ref{prop:corr-grad-proj} and compute 
\begin{equation*}
	n_x^\top n_y = \frac{\langle X_t, Y_t \rangle_{[2]} }{\lVert X_t \rVert_{[2]} \lVert Y_t \rVert_{[2]}} = \frac{v_2}{(x_2y_2)^{1/2}}, \quad 
	n_x^\top e = \frac{\langle X_t, Y_t \rangle_{[1]} -  \lVert X_t\rVert_{[1]}^2} {\lVert X_t \rVert_{[2]} \lVert X_t - Y_t \rVert_{[0]}} = \frac{v_1 - x_1}{(z_1x_2)^{1/2}\{z_{-1}(x_0 + y_0 - 2v_0)\}^{1/2}},
\end{equation*}
and also $n_y^\top e = (v_1 - y_1) (z_1 y_2)^{-1/2}\{z_{-1}(x_0 + y_0 - 2v_0)\}^{-1/2}$ by symmetry. Plugging these into Proposition~\ref{prop:corr-grad-proj}, and using that $\ecc =z_1 z_{-1}$, we have that
\begin{equation*}
    \rho_\textup{crn} = \frac{v_2}{(x_2y_2)^{1/2}}, \quad \rho_\textup{refl} = \frac{v_2}{(x_2y_2)^{1/2}} + \frac{2(x_1 - v_1)(y_1 - v_1)}{\ecc(x_2y_2)^{1/2} (x_0 + y_0 - 2 v_0)}, \quad \rho_\textup{gcrn} = 1,
\end{equation*}
as claimed.

\subsubsection{Coupling-dependent term}

Integrating over $U\sim\unif(0,1)$, we have that 
\begin{equation*}
	\mathbb{E}\left[h^2 \langle Z_x, Z_y\rangle_{[k]} B_xB_y\right] = \mathbb{E}\left[h^2 \langle Z_x, Z_y\rangle_{[k]} 1 \land e^{Z_4 -  h^2 \lVert  Z_x \rVert_{[1]}^2 /2 } \land e^{Z_5-  h^2 \lVert  Z_y \rVert_{[1]}^2 /2} \right],
\end{equation*}
where $(Z_4,Z_5) = (- h \langle X_t, Z_x \rangle_{[1]}, - h\langle Y_t, Z_y \rangle_{[1]}) = (\lambda x_2^{1/2} n_x^\top Z_x, \lambda y_2^{1/2} n_y^\top Z_y)$. By Proposition~\ref{prop:corr-grad-proj},  we have that 
$$ (Z_4,Z_5) \sim \bvn(\lambda^2 x_2, \lambda^2 y_2; \lambda^2 (x_2y_2)^{1/2} \rho),$$
where $\rho = \cov(n_x^\top Z_x, n_y^\top Z_y)$ is coupling-specific and evaluated above.

By Assumption~\ref{assumption:ellipt-2}, the following limits hold in $L_1$ as $d\to\infty$: $h^2 \lVert  Z_x \rVert^2_{[1]} \to \lambda^2$, $h^2 \lVert  Z_y \rVert^2_{[1]} \to \lambda^2$ and $ h^2 \langle Z_x, Z_y\rangle_{[k]} \to \ell^2 z_{k}$. We prove the final limit at the end; it holds since all couplings are low-rank perturbations of the CRN coupling. By Lemma~\ref{lemma:conv-2} it follows that
\begin{equation*}
	\lim_{d \to \infty}\mathbb{E}\left[h^2 Z_x^{\top} Z_y B_x B_y\right] = \ell^2 z_{k} \mathbb{E}\left[ 1 \land e^{Z_4 -  \lambda^2 /2} \land e^{Z_5 -  \lambda^2 /2} \right].
\end{equation*}
Putting the limits together concludes the proof, up to showing that $ h^2 \langle Z_x, Z_y\rangle_{[k]} \to \ell^2 z_{k}$ in $L_1$.

\subsubsection{Convergence of limiting inner product}

We finally show that, for all $k \in\{-1,0,1\}$ and under all considered couplings, it holds that: $\lim_{d \to \infty} \langle Z_x, Z_y\rangle_{[k]}/d \to z_{k}$ in $L_1$.

For the CRN coupling, $\langle Z_x, Z_y\rangle_{[k]}/d = \lVert Z_x \rVert_{[k]}^2/d$. The claimed limit follows by Assumption~\ref{assumption:ellipt-2}.

For the reflection coupling, $\langle Z_x, Z_y\rangle_{[k]} = \lVert Z_x \rVert^2_{[k]} - 2 (e^\top Z_x) \langle e, Z_x\rangle_{[k]}$, so it suffices to show that $(e^\top Z_x) \langle e, Z_x\rangle_{[k]} /d \to 0$ in $L_1$. Now, using the representation~\eqref{eqn:elliptgauss-prod-assumption} for the precision matrix, $\langle e, Z_x\rangle_{[k]}$ is mean-zero Gaussian with standard deviation at most $\max\{\omega_1^{2k}, \dots \omega_d^{2k}\}$, so by Cauchy-Schwarz we have that
\begin{align*}
    \mathbb{E}\left[|(e^\top Z_x) \langle e, Z_x\rangle_{[k]}| \right]/d \le \mathbb{E}\left[ \max\{\omega_1^{2k}, \dots \omega_d^{2k}\} \right]/d.
\end{align*}
By Assumption~\ref{assumption:ellipt-2}, Lemma~\ref{lemma:sublinear-growth-max-iid} applies and the right-hand side tends to zero as $d \to \infty$. This leads us to the claimed convergence for the reflection coupling.

For the GCRN coupling, we have that 
\begin{align*}
	\langle Z_x, Z_y\rangle_{[k]}&= \langle Z  - (n_x^\top Z_x + Z_1) n_x, Z  - (n_y^\top Z_y + Z_1) n_y \rangle_{[k]}\\
	&= \lVert Z \rVert^2_{[k]}  - (n_x^\top Z_x + Z_1) \langle n_x , Z \rangle_{[k]} - (n_y^\top Z_y + Z_1) \langle n_y , Z \rangle_{[k]}\\
	&\quad + (n_x^\top Z_x + Z_1) (n_y^\top Z_y + Z_1) \langle n_x, n_y \rangle_{[k]}.
\end{align*}
As for the reflection coupling, when scaled by $d^{-1}$, each of the final three ``residual" terms tends to zero in $L_1$. This is since we can bound the expected absolute value of each residual by $4\mathbb{E}\left[ \max\{\omega_1^{2k}, \dots \omega_d^{2k}\} \right]$. The claimed convergence follows, which completes the proof.

\subsection{Proof of Proposition~\ref{prop:fixedpoint-elliptgauss}}

We have dealt with the CRN and GCRN couplings in the proof of Proposition~\ref{prop:fixedpoint-stdgauss}.

For the reflection coupling, we seek the roots $v^*$ of
\begin{equation*}
	h_\ell(v + \ecc^{-1}(1-v)) - v h_\ell(1) = 0, \quad v \in [0,1],
\end{equation*}
where $h_\ell(\rho)= \mathbb{E}_{(Z_1,Z_2)\sim \bvn(\rho)} \big[1\land e^{\ell Z_1 - \ell^2/2}\land e^{\ell Z_2 - \ell^2/2}\big]$. An equivalent formulation is obtained by reparametrizing with $v = (w - \ecc^{-1}) / (1 - \ecc^{-1})$:
\begin{equation*}
	g_\ell(w):= h_\ell(w) - (w - \ecc^{-1}) / (1 - \ecc^{-1}) h_\ell(1) = 0.
\end{equation*}
By Lemma~\ref{lemma:h-of-rho}, $g_\ell$ is convex and satisfies $g_\ell(0) > 0$, $g_\ell(w) = 0$ and $\lim_{w \to 1}\partial_w g_\ell(w) = \infty$. It follows that there are two fixed points: $w^*_\textup{u} = 1$, which is unstable and $w^*_\textup{refl} \in (0,1)$, which is stable as we must have $g_\ell(w^*_{\textup{refl}}) < 0$ due to the convexity of $g_\ell$.

Returning to the original parametrization, $v^*_\textup{u} = 1$ is unstable and $v^*_\textup{refl} =  (w^*_\textup{refl} - \ecc^{-1}) / (1 - \ecc^{-1}) \in (0,1)$ is stable. To show that $v^*_\textup{refl} \ge v^*_\textup{crn}$, recall that
\begin{equation*}
	0 = h_\ell(v^*_\textup{refl} + \ecc^{-1}(1-v^*_\textup{refl})) - v^*_\textup{refl} h_\ell(1) \ge  h_\ell(v^*_\textup{refl}) - v^*_\textup{refl} h_\ell(1).
\end{equation*}
Given that $v^*_\textup{crn}$ is the unique solution of $h_\ell(v) - v h_\ell(1) = 0$ over $v \in (0,1)$, Lemma~\ref{lemma:h-of-rho} and the above inequality imply that $v^*_\textup{refl} \ge v^*_\textup{crn}$.

\subsubsection[Behaviour with eccentricity epsilon]{Behaviour with eccentricity $\ecc$}

Let $f_\ell(\ecc,v) = h_\ell(v + \ecc^{-1}(1-v)) - v h_\ell(1)$, so that the fixed-point equation is $f_\ell(\ecc,v^*_\textup{refl}) = 0$.  By Lemma~\ref{lemma:h-of-rho}, $f_\ell(\ecc,v)$ is decreasing in $\ecc$, and furthermore (due to the convexity of $f_\ell$ in $v$) it holds that $f_\ell(\ecc,v)$ is decreasing in $v$ near $v^*_\textup{refl}$. Since, $f_\ell(\ecc,v^*_\textup{refl}) = 0$, it follows that $v^*_\textup{refl}$ must be decreasing in $\ecc$.

We conclude by showing the limits. Since $f_\ell(1,v) = h_\ell(1) - v h_\ell(1)$, we have that $\lim_{\ecc \to 1}v^*_\textup{refl}(\ecc) = 1$. Since $f_\ell(\infty,v) = h_\ell(v) - v h_\ell(1)$, we have that $\lim_{\ecc \to \infty}v^*_\textup{refl}(\ecc) = v^*_\textup{crn}$. This concludes the proof.

\subsection{Proof of Theorem~\ref{thm:gcrn-opt-product}}

We proceed as in the proof of Theorem~\ref{thm:gcrn-opt-stdgauss}, first proving the upper bound, then showing that the GCRN coupling attains it.

\subsubsection{Showing the upper bound}

We have that 
\begin{align}
    \mathbb{E} [h^2 Z_x^{\top}  Z_y B_x B_y] 
    &= \ell^2 \mathbb{E} [B_x B_y] + \ell^2 \mathbb{E} \left[\left(\frac{1}{d} Z_x^\top Z_y - 1\right) B_x B_y \right]\label{eqn:product-target-decomposition} \\ 
    & \le \ell^2 \mathbb{E} [B_x] + \frac{\ell^2}{2} \left\{  \mathbb{E}[(\frac{1}{d} \|Z_x\|^2 + \frac{1}{d}\|Z_y\|^2 -2) B_x B_y]\right\} \nonumber\\
    & \le \ell^2 \mathbb{E} [B_x] + \frac{\ell^2}{2} \left\{  \mathbb{E}[(\frac{1}{d} \|Z_x\|^2 -1)_+] + \mathbb{E}[(\frac{1}{d} \|Z_y\|^2 -1)_+]\right\}, \nonumber \\
    & = \ell^2 \mathbb{E} [B_x] + \ell^2 \mathbb{E}\left[\left(\frac{1}{d} \|Z_x\|^2 -1\right)_+\right], \nonumber
\end{align}
where $(x)_+ = 0 \lor x$ is the positive part of $x$; we have used that $B_{y} \in [0,1]$ and $Z_x^\top Z_y \le (\|Z_x\|^2 + \|Z_y\|^2)/2$ to obtain the second line, and that $B_{x,y} \in [0,1]$ to obtain the third line. 

The bound obtained above is invariant to the coupling used. The first term converges to $2 \Phi(-\ell (bI)^{1/2}/2)$ by \citet[Theorem~5]{roberts2001optimal} (this calculation is also performed below), while the second term converges to zero since $\|Z_x\|^2/d \to 1$ in $L_2$. It follows that
\begin{equation*}
    \lim_{d\to\infty} \sup_{\bar{K} \in \mathcal{M}}\mathbb{E} [h^2 Z_x^{\top}  Z_y B_x B_y] \le 2 \ell^2 \Phi(-\ell (bI)^{1/2}/2),
\end{equation*}
which is the claimed bound.

\subsubsection{Showing that GCRN attains the bound}

To show that the GCRN coupling attains the claimed bound, we hereafter restrict to this coupling and we return to the decomposition~\eqref{eqn:product-target-decomposition}. 

To get a handle on the first term of~\eqref{eqn:product-target-decomposition}, we Taylor-expand the logarithm of the acceptance ratio for the $X$-chain \citep[as in][proof of Theorem~5]{roberts2001optimal},
\begin{align*}
    &\log \pi (X+h Z_x) - \log \pi (X) =\\
    &\quad=h \nabla \log \pi(X)^\top Z_x + \frac{h^2}{2} Z_x^\top \nabla^2 \log \pi(X) Z_x + R_x\\ 
    &\quad= \Big\{ \frac{1}{d} \sum_{i=1}^d (\ell \omega_i)^2 \left[(\log f)' (\omega_i X_i)\right]^2 \Big\}^{1/2} Z_\nabla + \frac{1}{2d} \sum_{i=1}^d (\ell \omega_i Z_{x,i})^2 (\log f)'' (\omega_i X_i) + R_x\\
    &\quad=: G_x^{1/2} Z_\nabla + H_x,
\end{align*}
where $Z_\nabla \sim \mathcal{N}_1(0,1)$ corresponds to the gradient direction, and $R_x$ is the third-order remainder term.

By the law of large numbers, the following limit holds in probability:
\begin{align*}
    \lim_{d \to \infty}G_x 
    &= \ell^2  \mathbb{E}_{X \sim \pi^{(1)}} \left[ \omega_1^2 (\log f)' (\omega_i X) ^2\right] \\
    &= \ell^2  \mathbb{E}_{Y \sim f} \left[ \omega_1^2 (\log f)' (Y) ^2\right]  \\
    &=\ell^2 b I.
\end{align*}
Our assumptions ensure that the remainder term satisfies $\lim_{d \to \infty}R_{x} = 0$ in probability \citep[see e.g.][Lemma~6]{sherlock2013optimal}. By this and the law of large numbers, the following limit holds in probability:
\begin{align*}
    \lim_{d \to \infty}H_x 
    &= \frac{1}{2} \mathbb{E}_{X \sim \pi^{(1)}} \left[ (\ell \omega_1 Z_{x,1})^2 (\log f)'' (\omega_1 X)\right]\\
    &= \frac{\ell^2  b}{2} \mathbb{E}_{Y \sim f} \left[ (\log f)'' (Y)\right]\\
    &= -\frac{\ell^2  bI}{2},
\end{align*}
where in the last equality we have used the identity $\mathbb{E}_{Y\sim f}[(\log f)'' (Y)] = - \mathbb{E}_{Y\sim f}[(\log f)' (Y)^2] = -I$, which follows by integration by parts.

The analogous Taylor expansion for the $Y$-chain is: $\log \pi (Y+h Z_y) - \log \pi (Y) = G_y^{1/2} Z_\nabla + H_y,$ where, by the definition of the GCRN coupling, the random variable $Z_\nabla \sim \mathcal{N}_1(0,1)$ corresponding to the gradient direction is \underline{identical} to the analogous one for the $X$-chain. As before, the following limits hold in probability: $\lim_{d \to \infty}G_y = \ell^2 bI$, $\lim_{d \to \infty} H_y = (-\ell^2 bI)/2$.

With the above limits in probability in hand, the first term in~\eqref{eqn:product-target-decomposition} writes as
\begin{align*}
    \ell^2 \mathbb{E} [B_x B_y] &= \ell^2 \mathbb{E} \left[1 \land \exp(G_x^{1/2} Z_\nabla + H_x) \land \exp(G_y^{1/2} Z_\nabla + H_y) \right]\\
     &= \ell^2 \mathbb{E} \left[1 \land \left[ \exp(G_x^{1/2} Z_\nabla + H_x) \left\{ 1\land \exp( \partial G \cdot Z_\nabla + \partial H) \right\} \right] \right],
\end{align*}
where both $\partial G := G_y^{1/2} - G_x^{1/2}$ and $\partial H := H_y - H_x$ go to $0$ in probability as $d \to \infty$. Several applications of Slutsky's Theorem (ST) and the Continuous Mapping Theorem (CMT) lead us to 
\begin{equation*}
    \lim_{d \to \infty} \ell^2 \mathbb{E} [B_x B_y] = 2 \ell^2 \Phi(-\ell (bI)^{1/2}/2).
\end{equation*}
\begin{itemize}
    \item By ST and CMT, $\lim_{d \to \infty} 1\land \exp( \partial G \cdot Z_\nabla + \partial H) = 1$ in probability.
    \item Again, by ST and CMT, $\exp(G_x^{1/2} Z_\nabla + H_x)$ converges weakly to a log-normal $L \sim \log\mathcal{N}(-\ell^2  bI/2, \ell^2  bI)$ as $d \to \infty$. A further application of ST, shows that the sequence of non-negative random variables $A_d = \exp(G_x^{1/2} Z_\nabla + H_x) \left[ 1\land \exp( \partial G \cdot Z_\nabla + \partial H) \right]$ converges weakly to the same limit.
    \item The function $g(x) = 1 \land x$ is bounded for $x \in [0,\infty)$. It follows that $\lim_{d \to \infty}\mathbb{E} [B_x B_y] = \lim_{d \to \infty}\mathbb{E} [1 \land A_d] = \mathbb{E}[1 \land L]$, by the definition of weak convergence.
    \item Lemma~\ref{lemma:gaussian-integrals} evaluates $\mathbb{E}[1 \land L]$ and completes the calculation.
\end{itemize}

We therefore have a limit for the first term of~\eqref{eqn:product-target-decomposition}. The second term of~\eqref{eqn:product-target-decomposition} satisfies $\lim_{d \to \infty} \ell^2 \mathbb{E} [(Z_x^\top Z_y/d - 1) B_x B_y ] = 0,$ since $Z_x^\top Z_y/d \to 1$ in $L_1$ and since $B_x B_y \in [0,1]$. Putting it all together, under the GCRN coupling we have that
\begin{equation*}
    \lim_{d\to\infty} \mathbb{E} [h^2 Z_x^{\top} Z_y B_x B_y] = 2 \ell^2 \Phi(-\ell (bI)^{1/2}/2),
\end{equation*}
which coincides with the upper bound and therefore concludes the proof of Theorem~\ref{thm:gcrn-opt-product}.

\subsection{Postponed proofs} \label{app:proofs-postponed}

\subsubsection{Proof of Lemma~\ref{lemma:gaussian-integrals}}

Let $E_1 = \mathbb{E} \left[Z 1\land e^{-\ell\alpha Z - \ell^2/2}\right]$. Using that $\mathrm{d}\phi(z) = - z \phi(z) \mathrm{d}z$, it holds that $\int_a^b z \phi(z) \mathrm{d}z = -\int_a^b\mathrm{d}\phi(z) = \phi(a) - \phi(b).$ It thus follows that
\begin{align*}
    E_1 &= \int_{-\infty}^{-\ell/(2\alpha)} z \phi(z) \mathrm{d}z + \int_{-\ell/(2\alpha)}^{\infty} z e^{-\ell\alpha z - \ell^2/2} \phi(z) \mathrm{d}z\\
    &= -\phi\left(-\frac{\ell}{2\alpha}\right) + e^{\ell^2(\alpha^2 - 1)/2}\int_{-\ell/(2\alpha)}^{\infty} z\phi(z + \ell\alpha) \mathrm{d}z\\
    &=  -\phi\left(-\frac{\ell}{2\alpha}\right) + e^{\ell^2(\alpha^2 - 1)/2} \left(\int_{-\ell/(2\alpha)}^{\infty} (z + \ell\alpha) \phi(z + \ell\alpha) \mathrm{d}z - \ell\alpha \int_{-\ell/(2\alpha)}^{\infty} \phi(z+\ell\alpha) \mathrm{d}z \right)\\
    &=  -\phi\left(-\frac{\ell}{2\alpha}\right) + e^{\ell^2(\alpha^2 - 1)/2} \left(\int_{-\ell/(2\alpha) + \ell\alpha}^{\infty} z \phi(z) \mathrm{d}z - \ell\alpha \int_{-\ell/(2\alpha) + \ell\alpha}^{\infty} \phi(z) \mathrm{d}z \right)\\
    &=  -\phi\left(-\frac{\ell}{2\alpha}\right) + e^{\ell^2(\alpha^2 - 1)/2} \phi\left(-\frac{\ell}{2\alpha} + \ell \alpha\right) - \ell \alpha e^{\ell^2(\alpha^2 - 1)/2} \Phi\left(\frac{\ell}{2\alpha} - \ell \alpha\right),
\end{align*}
where we used the identity $\phi(z) = e^{\ell^2\alpha^2/2 + \ell\alpha z}\phi(z + \ell\alpha)$ in the second line. The desired formula follows by applying the same identity with $z = -\ell/(2\alpha)$; the first two terms in the final expression cancel.

The second expectation equivalently writes as $E_2 = \mathbb{E}\left[1 \land e^{-\ell m Z - \ell^2/2} \land e^{-\ell M Z - \ell^2/2}\right],$ where $m = \alpha \land \beta$ and $M = \alpha \lor \beta$. The form of the integrand depends on the sign of $Z$. Therefore,
\begin{align*}
   E_2 
	&= \int_{-\infty}^{-\ell/(2m)} \phi(z) \mathrm{d}z + \int_{-\ell/(2m)}^{0} e^{-\ell m z - \ell^2/2}\phi(z) \mathrm{d}z + \int_{0}^{\infty} e^{-\ell M z - \ell^2/2}\phi(z) \mathrm{d}z\\
    &= \Phi\left(-\frac{\ell}{2m}\right) + e^{\ell^2(m^2 - 1)/2} \int_{-\ell/(2m)}^{0} \phi(z + \ell m)\mathrm{d}z + e^{\ell^2(M^2 - 1)/2} \int_{0}^{\infty} \phi(z + \ell M)\mathrm{d}z\\
    &= \Phi\left(-\frac{\ell}{2m}\right) + e^{\ell^2(m^2 - 1)/2} \int_{-\ell/(2m) + \ell m}^{\ell m} \phi(z)\mathrm{d}z + e^{\ell^2(M^2 - 1)/2} \int_{\ell M}^{\infty} \phi(z)\mathrm{d}z\\
   &= \Phi\left(-\frac{\ell}{2m}\right) + e^{\ell^2(m^2 - 1)/2} \left\{ \Phi\left(\frac{\ell}{2m} - \ell m\right) - \Phi(-\ell m) \right\} + e^{\ell^2(M^2 - 1)/2} \Phi(-\ell M),
\end{align*}
where we used identity $e^{-\ell\alpha z}\phi(z) = e^{\ell^2\alpha^2/2}\phi(z + \ell\alpha)$ with $\alpha\in\{m,M\}$ in the second line. This completes the proof.

\subsubsection{Proof of Lemma~\ref{lemma:h-of-rho}}

\textbf{Proof of claim 1.} This is an immediate consequence of Lemma~\ref{lemma:gaussian-integrals}.

\textbf{Proof of claim 2.} Firstly, the integral re-writes as 
\begin{equation*}
    h(\rho; \ell) = \mathbb{E}_{(Z_1,Z_2) \sim \bvn(\rho)} \left[ \exp\left\{  0 \land (\ell Z_1 - \ell^2/2) \land (\ell Z_2 - \ell^2/2) \right\} \right].
\end{equation*}
We use the reparametrization trick $Z_2 = \rho Z_1 + \sqrt{1-\rho^2}Z_*$, where $Z_*\sim\mathcal{N}_1(0,1)$ is independent of $Z_1$. This expresses $h(\cdot)$ as an integral over randomness $(Z_1,Z_*)$ which does not depend on $\rho$; thereafter, only $Z_2$ in the integrand depends on $\rho$. Differentiating and bringing the derivative inside the integral, we obtain
\begin{equation*}
    \partial_\rho h(\rho; \ell) = \mathbb{E} \left[ \mathbbm{1}\{Z_2 \le \ell/2\} \mathbbm{1}\{Z_2 \le Z_1\} \partial_\rho(\ell Z_2 - \ell^2/2) e^{\ell Z_2 - \ell^2/2}\right].
\end{equation*}
We use a second reparametrization trick: $Z_1 = \rho Z_2 + \sqrt{1-\rho^2}Z_{**}$, where $Z_{**}\sim\mathcal{N}_1(0,1)$ is independent of $Z_2$. We can now evaluate:
\begin{gather*}
    \partial_\rho Z_2 = Z_1 - \frac{\rho}{\sqrt{1-\rho^2}} Z_* = Z_1 - \frac{\rho}{\sqrt{1-\rho^2}} \frac{Z_2 - \rho Z_1}{\sqrt{1-\rho^2}}= \frac{Z_1 - \rho Z_2}{1-\rho^2} = \frac{1}{\sqrt{1-\rho^2}}Z_{**},\\
\mathbbm{1}\{Z_2 \le Z_1\} = \mathbbm{1}\left\{Z_2 \le \rho Z_2 + \sqrt{1-\rho^2}Z_{**}\right\} = \mathbbm{1}\left\{\sqrt{\frac{1-\rho}{1+\rho}}Z_2 \le Z_{**}\right\}.
\end{gather*} 
Therefore, 
\begin{align*}
    \partial_\rho h(\rho; \ell)
    &= \frac{\ell}{\sqrt{1-\rho^2}}\mathbb{E} \left[ \mathbbm{1}\{Z_2 \le \ell/2\} \mathbbm{1}\left\{\sqrt{\frac{1-\rho}{1+\rho}}Z_2 \le Z_{**}\right\} Z_{**} e^{\ell Z_2 - \ell^2/2}\right]\\
    &= \frac{\ell}{\sqrt{1-\rho^2}}\mathbb{E}\left[\mathbbm{1}\{Z_2 \le \ell/2\}  \phi\left(\sqrt{\frac{1-\rho}{1+\rho}}Z_2\right) e^{\ell Z_2 - \ell^2/2}\right],
\end{align*}
where in the last line we have used that $\mathbb{E}[Z_{**}\mathbbm{1}\{x \le Z_{**}\}]= \phi(x)$ for all $x$. It follows that $\partial_\rho h(\rho) > 0$ for all $\rho \in (-1,1)$, and that $\lim_{\rho \nearrow 1}\partial_\rho h(\rho) = \infty$, as claimed.

\textbf{Proof of claim 3.} Firstly, repeated applications of the chain rule yield:
\begin{equation*}
    \partial_\rho \left\{ \phi\left(\sqrt{\frac{1-\rho}{1+\rho}}Z_2\right) \right\} = \frac{1}{(1+\rho)^2}z^2\phi\left(\sqrt{\frac{1-\rho}{1+\rho}}z\right).
\end{equation*}
(We have used that $\partial_\rho\phi(\sqrt{f(\rho)} z) = -\partial_\rho f(\rho)z^2\phi(\sqrt{f(\rho)}z)/2$ for any non-negative differentiable $f(\cdot)$, and then that $\partial_\rho f(\rho) =  -2/(1+\rho)^2$ when $f(\rho) = (1-\rho) / (1+\rho)$.)

Now, differentiating twice, we have that
\begin{align*}
\partial_\rho^2 h(\rho; \ell) 
	&= \partial_\rho\left\{\partial_\rho h(\rho; \ell)\right\} \\
	&= \partial_\rho\left\{ \frac{\ell}{\sqrt{1-\rho^2}}\mathbb{E}\left[\mathbbm{1}\{Z_2 \le \ell/2\}  \phi\left(\sqrt{\frac{1-\rho}{1+\rho}}Z_2\right) e^{\ell Z_2 - \ell^2/2}\right] \right\}\\
	&= \frac{\rho}{1-\rho^2}\partial_\rho h(\rho; \ell) + \frac{\ell}{\sqrt{1-\rho^2}}\mathbb{E}\left[\mathbbm{1}\{Z_2 \le \ell/2\} \frac{1}{(1+\rho)^2}Z_2^2 \phi\left(\sqrt{\frac{1-\rho}{1+\rho}}Z_2\right) e^{\ell Z_2 - \ell^2/2}\right],
\end{align*}
where we have used the product rule of differentiation for the final line. By Claim 2, the first term is non-negative for $\rho\in[0,1)$; the second term is strictly positive over the same range. It follows that $\partial_\rho^2 h(\rho; \ell) > 0$ for $\rho \in[0,1)$, as claimed. This concludes the proof of Lemma~\ref{lemma:h-of-rho}.

\putbib
\end{bibunit}

\end{document}